\setlist{nolistsep}
\newcommand{\ol}{\overline}
\newcommand{\eps}{\varepsilon}
\DeclareMathOperator*{\argmin}{arg\,min}
\DeclareMathOperator*{\I}{1}
\newtheorem{assumption}{Assumption}[section]
\newtheorem{lemma}{Lemma}[section]
\newtheorem{theorem}{Theorem}[section]
\newtheorem{corollary}{Corollary}[section]
\newtheorem{remark}{Remark}[section]
\newtheorem{example}{Example}[section]
\title{Deconvolution from two order statistics\footnotetext{We thank Gaurab Aryal, Christian Gouri\'{e}roux, Philip Haile, Lixiong Li, Jos\'{e} Luis Montiel Olea, Chen Qiu, Cailin Slattery, and conference participants for helpful comments. We thank Cristi{\'a}n Hern{\'a}ndez, Daniel Quint, and Christopher Turansick for sharing their code.}}
\date{\small\today}
\author[1]{JoonHwan Cho\thanks{Department of Economics, Binghamton University, 4400 Vestal Parkway E., PO Box 6000, Binghamton, NY 13902-6000. Email: jcho13@binghamton.edu.}}
\author[2]{Yao Luo\thanks{Department of Economics, University of Toronto, Max Gluskin House, 150 St.~George St., Toronto, ON M5S 3G7. Email: yao.luo@utoronto.ca.}}
\author[3]{Ruli Xiao\thanks{Department of Economics, Indiana University, 100 S.~Woodlawn Ave., Bloomington, IN 47405. Email: rulixiao@iu.edu.}}
\affil[1]{Binghamton University}
\affil[2]{University of Toronto}
\affil[3]{Indiana University}
\begin{document}
\maketitle

\begin{abstract}
  Economic data are often contaminated by measurement errors and truncated by ranking. This paper shows that the classical measurement error model with independent and additive measurement errors is identified nonparametrically using only two order statistics of repeated measurements. The identification result confirms a hypothesis by \cite{athey2002identification} for a symmetric ascending auction model with unobserved heterogeneity. Extensions allow for heterogeneous measurement errors, broadening the applicability to additional empirical settings, including asymmetric auctions and wage offer models. We adapt an existing simulated sieve estimator and illustrate its performance in finite samples.
\end{abstract}

\medskip
\noindent\emph{Keywords}: Measurement error, Order statistics, Nonparametric identification, Spacing, Cross-sum

\thispagestyle{empty}

\clearpage


\newpage

\section{Introduction} \label{section:introduction}

We consider the classical measurement error model with repeated measurements:
\begin{align*}
  X_j = \xi + \eps_j , \qquad j \in \{1,\ldots,n\} ,
\end{align*}
where the latent variable of interest $\xi$ is measured $n$ times with i.i.d.~measurement errors $(\eps_j)_{j=1,\ldots,n}$ that are independent of $\xi$. The identification result under such a model is well-established when at least \emph{two} repeated measurements are observed, known as Kotlarski's lemma. The result has been widely applied in econometrics since its introduction by \cite{li1998nonparametric}.\footnote{Examples include \cite{bonhomme2010generalized}, \cite{krasnokutskaya2011identification} and \cite{grundl2019identification}. An outline of uses of Kotlarski's lemma in econometrics is in \cite{schennach2016recent}.} In practice, the researcher may observe only a subset of order statistics of the measurements, i.e., $(X_{(j)})_{j \in J}$, where $X_{(j)}$ is the $j$\textsuperscript{th} smallest order statistic from a sample of size $n$ and $J \subset \{1,\ldots,n\}$. This paper shows the underlying probability distributions of the latent variable and the measurement errors are identified from the joint distribution of the ordered measurements $(X_{(j)})_{j \in J}$. \par

The problem took its shape in \cite{athey2002identification} as a nonparametric identification problem in ascending auctions, which has particular relevance in auctions with electronic bidding. They conjecture that the model consists of enough structure to attain point identification using \emph{two} order statistics. However, the question has been long-standing for two decades.\footnote{Some positive findings are made recently in the finite mixture context; see \cite{mbakop2017identification} using five order statistics, \cite{luo2023identification} using two and an instrument and \cite{luo2021order} using three. However, these papers do not tackle the original conjecture in the spirit of Kotlarski's lemma.} Importantly, the standard approach using Kotlarski's lemma fails because of dependence in the order statistics of measurement errors. Further, the attempt to difference out the latent variable and exploit the spacing distribution is shown to be insufficient for point identification, evidenced by \cite{rossberg1972characterization}'s counterexample. \par

While the literature has documented the increasing importance of unobserved heterogeneity in auctions,\footnote{For example, see \cite{aradillas2013identification}, \cite{krasnokutskaya2011identification}, and \cite{li2000conditionally}.} the lack of identification results in the existing literature has hindered allowance for such heterogeneity in the classical fashion, unless relying on additional external variations in the data.\footnote{For example, \cite{hernandez2020estimation} uses variation in the number of bidders across auctions. \cite{freyberger2022identification} circumvents the issue of dependence between order statistics using observed reserve prices, rendering the identification problem classical.} We fill this important gap by showing that the underlying distributions are identified with only data on \emph{two} order statistics, which need not be consecutive or extreme, without relying on extra variations. \par

We propose a new identification strategy that exploits both features in \cite{kotlarski1967characterizing} and \cite{rossberg1972characterization}. In particular, we make use of the \emph{within} independence of the latent variable and the additive-separable measurement errors as in \cite{kotlarski1967characterizing} to derive that the model imposes an additional restriction beyond the spacing of order statistics.\footnote{The term \emph{spacing} usually refers to the difference between two \emph{consecutive} order statistics. Here, we use it more broadly as the difference between any two order statistics.} More precisely, we find that two measurement error distributions both rationalize the observed distribution of the ordered measurements only if the joint distributions of \emph{spacing} and \emph{cross-sum} are identical.\footnote{The term \emph{cross-sum} refers to the sum of two order statistics of two different ranks, where the order statistics are of two independent random samples from distinct underlying parent distributions.} Exploiting the structure of order statistics and commonly seen conditions (a support or tail restriction), we show that the spacing in conjunction with this additional restriction is indeed sufficient to point identify the underlying distribution using any two order statistics. The latent variable distribution is subsequently identified by a standard deconvolution argument. \par

We extend our main result to the setting where measurement errors are independent but nonidentically distributed (i.n.i.d.). We show that the underlying distributions are identified when two order statistics are observed, provided \emph{some} measurement errors are known to be identically distributed and the data consist of group identities of high-order statistics. The result applies to asymmetric ascending auctions, where only high-order dropout bids are recorded, and asymmetric first-price auctions and wage offer settings, where consecutive and extreme order statistics are observed. \par

The outline of the paper is as follows. Section \ref{section:motivating_examples} introduces two motivating examples and Section \ref{section:model_and_identification} presents the identification results. To complement the identification result, in Section \ref{section:estimator_main} we adapt \cite{bierens2012semi}'s simulated sieve estimator to consistently estimate the unknown distributions. Section \ref{section:conclusion} concludes. \par


\section{Motivating Examples} \label{section:motivating_examples}

To motivate the identification problem, we introduce two examples: an ascending auction model with auction-specific unobserved heterogeneity and a model of wage offers where wage is determined by worker-specific labor productivity. Throughout the paper, we focus only on unobserved heterogeneity since adding exogenous covariates does not require novel considerations. \par


\begin{example}[Ascending auction]
\label{example:ascending_auction}

Consider an ascending auction with one indivisible good and $n$ bidders. Each bidder's valuation for the item is determined by a set of item features---summarized and denoted by $\xi \in \mathbb{R}$---and a private value component $\eps_j$ that is independent of $\xi$ and across bidders. The valuation of bidder $j$ is given by $X_j = \xi + \eps_j$. Note that the valuation of each bidder can be cast as a measurement of the value $\xi$ with measurement error $\eps_j$. See, e.g., \cite{athey2002identification}. \par

In an ascending auction, the price rises until only one bidder remains. Suppose the auctioneer records prices $P_1 \le P_2 \le \cdots$ at which bidders drop out. Assuming that the bidders play the dominant strategy of remaining in the auction until the price reaches their valuations, the dropout bids reveal the ordered valuations of the bidders, i.e., $P_j = X_{(j)}$. Importantly, the highest valuation $X_{(n)}$ is never observed since the auction ends when the price reaches $P_{n-1} = X_{(n-1)}$, and the bidder with the highest valuation wins. Therefore, in an ascending auction, we observe only an incomplete set of order statistics on bidders' valuations. For example, \cite{kim2014nonparametric} observe at least the three highest dropout prices in ascending used-car auctions; see also \cite{larsen2021efficiency} for used-car auctions in the U.S. Data on timber auctions by the U.S.~Forest Service---analyzed in a number of empirical papers, e.g., \cite{haile2003inference}, \cite{athey2011comparing}, and \cite{aradillas2013identification} to name a few---records at most top twelve bids regardless of the number of bidders. \par

Both the distributions of $\xi$ and $\eps_j$ are of interest in an empirical analysis of auctions. For example, the counterfactual expected revenue to the auctioneer under a hypothetical auction rule requires the knowledge of both distributions. \par

\end{example}


\begin{example}[Wage offer determination]
\label{example:wage_offer}

Consider a simple wage offer model $X_j = \xi + w_j + \eps_j$ where $X_j$ is the $j$\textsuperscript{th} (log-)wage offer an individual receives. $X_j$ is assumed to be composed of the worker's productivity $\xi$, the wage rate $w_j$, and measurement error $\eps_j$ (on, e.g., labor productivity) associated with the offer. It is assumed that the period in consideration is relatively short such that the worker's productivity remains unchanged, but the worker may receive multiple offers; see \cite{guo2021identification}. \par

Data from the Survey of Consumer Expectations (SCE) Labor Market Survey records salary-related responses of all job offers for those who received at most three offers and responses to three best offers for those who received more than three offers within the last four months. In this setting, the researcher observes $X_{(n)},\ldots,X_{(n-k+1)}$ where $k = \min\{3,n\}$ for each respondent. The identification results in the current paper show that the distributions $F_\xi$ and $F_{w_j+\eps_j}$ are nonparametrically identified. \par

\end{example}


\section{The Model and Main Results} \label{section:model_and_identification}

In this section, we first formalize the i.i.d.~framework considered throughout the paper and provide sufficient conditions to identify the latent variable and measurement error distributions. In Section~\ref{subsection:extensions}, we consider i.n.i.d.~measurement errors. \par


\subsection{The Setup} \label{subsection:model_setup}

We begin by stating the sampling process. \par

\begin{assumption}[Sampling process]
\label{assumption:sampling_process}
  For each $1 \le j \le n$, $X_{j} := \xi + \eps_{j}$, where the random variable $\xi$ is independent of the random vector $(\eps_{1},\ldots,\eps_{n})$.\footnote{The identification results herein also applies to a setting with multiplicatively separable measurement error by taking logs when $\xi$ and $\eps_j$'s are positive, as in Example~\ref{example:wage_offer}.}
\end{assumption}
The researcher observes $(X_{(r)},X_{(s)})$, the $r$\textsuperscript{th} and $s$\textsuperscript{th} order statistics from $n$ observations, where $r$, $s$, and $n$ are known and $1 \le r < s \le n$.  That is, while the total number of measurements is known, one only observes two measurements of known ranks. In this section, we identify the unknown distributions of $\xi$ and $(\eps_{1},\ldots,\eps_{n})$ using $F_{X_{(r,s)}}$, the joint distribution of $(X_{(r)},X_{(s)})$, which is estimable from a random sample of the two order statistics. \par

For the benchmark case, we make the following distributional assumptions.
\begin{assumption}[i.i.d.~errors] 
\label{assumption:iid_case}
  \leavevmode
  \begin{enumerate}[label=(\alph*)]
    \item $\eps_{1},\ldots,\eps_{n}$ are i.i.d.~with a common distribution $F_\eps$ on $\mathbb{R}$.
    \item $F_\eps$ is absolutely continuous with a probability density function $f_\eps$ that is light-tailed, i.e., for some $C > 0$, $f_\eps(\epsilon) = O(e^{-C\lvert\epsilon\rvert})$ as $\lvert\epsilon\rvert \rightarrow \infty$.
  \end{enumerate}
\end{assumption}
The tail condition in Assumption~\ref{assumption:iid_case}(b) is trivially satisfied when the support is bounded. If the support is unbounded on either side, the assumption restricts the tail of the density function to decay at an exponential rate.\footnote{Throughout the paper, we use the term ``light-tailed'' to refer to densities with exponentially decaying tails as stated in the assumption. \label{fn:light_tailed}} The same assumption is found in \cite{evdokimov2012some}. Alternatively, we may follow \cite{kotlarski1967characterizing} and \cite{miller1970characterizing} to assume (a.e.-)nonvanishing or analytic characteristic functions (ch.f.) of both $\xi$ and $\eps$. Assumption~\ref{assumption:iid_case}(b) is a sufficient condition for the ch.f.~of $F_\eps$ to be analytic. However, to our knowledge, there exists no result regarding its implication on the joint ch.f.~of order statistics, the property we need for our identification results. In Lemma~\ref{lemma:appendix_joint_analyticity}, we show that the joint ch.f.~of the order statistics $(\eps_{(r)},\eps_{(s)})$ (and thus its marginals) is also analytic under this assumption. \par

In addition to Assumption~\ref{assumption:iid_case}, we further restrict the support of the measurement errors as follows. For any distribution $F$ on $\mathbb{R}$, let $S(F) \subseteq \mathbb{R}$ denote its support.\footnote{The support of $F$ is defined as the smallest closed set $K \subseteq \mathbb{R}$ such that $P_F(K) = 1$, where $P_F$ is the Borel probability measure induced by $F$. Thus, the density $f(x)$ may be zero for some values $x \in K$, including points on the boundary of $K$.}
\begin{assumption}[Support condition] 
\label{assumption:bound}
  The measurement errors are bounded from below, which is normalized to zero, i.e., $\inf S(F_\eps) = 0$.
\end{assumption}
Two aspects are worth mentioning regarding the above assumption. First, we require the measurement error to be bounded at one end; nevertheless, we allow the support to be possibly unbounded from above.\footnote{The results herein apply to the case when only the upper bound is finite, e.g., $S(F_\eps) = (-\infty,0]$, by considering $(X_{(r')}',X_{(s')}^\prime) = (-X_{(s)},-X_{(r)})$ where $r' = n-s+1$ and $s'=n-r+1$. \label{footnote:support_upper_bound}} We do not assume the upper bound of the support is known nor assume any additional structure on the support, e.g., connectedness. Second, as is typical with measurement error models, the underlying distributions are identified only up to location. Although one may alternatively normalize the mean and have an unknown but finite lower bound, normalizing the lower bound turns out to be more convenient for our identification argument. Our identification strategy heavily utilizes the condition that one boundary of the support of $F_\eps$ is finite. On the other hand, the distribution of $\xi$ is left completely unspecified. \par

The support restriction is nonrestrictive in many applications. The literature on games with incomplete information typically assumes bounded support of agent types (\cite{athey2001single}), such as bidder valuation in empirical auctions (\cite{guerre2000optimal}, \cite{athey2007nonparametric}), private costs of exerting efforts in contest games (\cite{huang2021structural}), and private information about variable costs in Cournot games (\cite{aryal2021empirical}). Wage offers are also bounded below in job search models (\cite{burdett1998wage}, \cite{guo2021identification}). \par


\subsection{Nonparametric Identification} \label{subsection:identification}

Throughout the section, we treat the observable joint distribution $F_{X_{(r,s)}}$ as known (i.e., as a datum), and show that the distribution $F_\xi$ of the latent variable $\xi$ and $F_\eps$ are identified under the aforementioned assumptions. The bulk of our identification result is concerned with showing that the measurement error distribution that can rationalize the observed distribution $F_{X_{(r,s)}}$ must be unique. Then the latent variable distribution is identified by a standard deconvolution argument. To achieve such a goal, we begin by isolating the empirical content on measurement errors from the datum. In Lemma~\ref{lemma:rationalizability} below, we do so by exploiting the multiplicative structure of the ch.f.~of a sum of independent random variables.\footnote{Similar ideas are also exploited in the classical repeated measurement error setting. See \cite{hall2003inference} and \cite{evdokimov2012some}.} \par

In the following lemmas, let $\eta_1,\ldots,\eta_n$ be $n$ i.i.d.~copies of $\eta \in \mathbb{R}$ and let  $(\eta_{(r)},\eta_{(s)})$ be order statistics. $\psi_{\eta_{(r,s)}}$ and $\psi_{\eta_{(r)}}$ denote the joint and marginal ch.f., respectively. \par

\begin{lemma}
\label{lemma:rationalizability}
If the sampling process is described as in Assumption~\ref{assumption:sampling_process} and the measurement errors satisfy Assumption~\ref{assumption:iid_case}, a measurement error distribution $F$ on $\mathbb{R}$ is data-consistent\footnote{We say a pair of distribution functions $(G_\xi,G_\eps)$ rationalizes the data, or is data-consistent, if $\xi' \sim G_\xi$ and $(\eps_j')_{j=1,\ldots,n} \sim \times_{j=1}^n G_\eps$, independent of $\xi'$, implies $(\xi'+\eps_{(r)}', \xi'+\eps_{(s)}') =_d (X_{(r)},X_{(s)})$. We say $G_\xi$ (resp., $G_\eps$) is data-consistent if $(G_\xi,G_\eps)$ is data-consistent for some $G_\eps$ (resp., $G_\xi$).} only if 
  \begin{align}
    \frac{\psi_{X_{(r,s)}}(t_r,t_s)}{\psi_{X_{(j)}}(t_r+t_s)} = \frac{\psi_{\eta_{(r,s)}}(t_r,t_s)}{\psi_{\eta_{(j)}}(t_r+t_s)} , \quad \text{for all $(t_r,t_s) \in B_0$, } j \in \{r,s\} , \label{eq:identify_eps}
  \end{align}
  where $\eta \sim F$ and $B_0$ is an open ball in $\mathbb{R}^2$ centered at zero. Further, $F$ induces a unique data-consistent latent variable distribution.
\end{lemma}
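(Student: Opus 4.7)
The plan is to exploit the multiplicative structure of the characteristic function under independence together with the additivity $X_j = \xi + \eps_j$. Because $\xi$ is independent of the entire error vector, it is independent of any order statistic of the errors, so the joint ch.f.\ of $(X_{(r)},X_{(s)})$ factors into a ``latent'' piece depending only on $t_r+t_s$ and an ``error'' piece depending on $(t_r,t_s)$; forming the prescribed ratio then cancels the latent factor.

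First I would suppose $F$ is data-consistent and let $(G_\xi,F)$ rationalize the data with $\xi' \sim G_\xi$ independent of i.i.d.\ copies $\eta_1,\ldots,\eta_n \sim F$. Independence of $\xi'$ and the order statistics $(\eta_{(r)},\eta_{(s)})$ gives
\begin{align*}
  \psi_{X_{(r,s)}}(t_r,t_s) &= \psi_{\xi'}(t_r+t_s)\,\psi_{\eta_{(r,s)}}(t_r,t_s), \\
  \psi_{X_{(j)}}(t_r+t_s) &= \psi_{\xi'}(t_r+t_s)\,\psi_{\eta_{(j)}}(t_r+t_s), \qquad j\in\{r,s\}.
\end{align*}
Since $\psi_{\xi'}$ and $\psi_{\eta_{(j)}}$ are continuous and equal $1$ at the origin, there is an open ball $B_0 \subset \mathbb{R}^2$ around $0$ on which $\psi_{\xi'}(t_r+t_s)$ and $\psi_{\eta_{(j)}}(t_r+t_s)$ are both nonzero. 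Dividing on $B_0$ and cancelling $\psi_{\xi'}$ yields \eqref{eq:identify_eps}.

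For the uniqueness of the induced latent distribution, I would fix $F$ and suppose $(G_\xi,F)$ and $(\tilde G_\xi,F)$ both rationalize the data, with associated ch.f.'s $\psi_{\xi'}$ and $\psi_{\tilde\xi'}$. The marginal factorization gives $\psi_{\xi'}(t)\,\psi_{\eta_{(j)}}(t) = \psi_{X_{(j)}}(t) = \psi_{\tilde\xi'}(t)\,\psi_{\eta_{(j)}}(t)$ for all $t \in \mathbb{R}$, so $\psi_{\xi'}$ and $\psi_{\tilde\xi'}$ coincide on $\{t: \psi_{\eta_{(j)}}(t)\neq 0\}$. By Assumption~\ref{assumption:iid_case}(b) and Lemma~\ref{lemma:appendix_joint_analyticity}, $\psi_{\eta_{(j)}}$ is real-analytic on $\mathbb{R}$ and is not identically zero (it equals $1$ at $0$), so its zero set is discrete. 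Continuity of $\psi_{\xi'}$ and $\psi_{\tilde\xi'}$ then extends the equality to all of $\mathbb{R}$, and uniqueness of ch.f.'s forces $G_\xi = \tilde G_\xi$.

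The main obstacle is precisely this last step. Because the paper places no tail restriction on $F_\xi$, $\psi_{\xi'}$ itself need not be analytic, and $\psi_{\eta_{(j)}}$ may have zeros on $\mathbb{R}$, so determining $\psi_{\xi'}$ as $\psi_{X_{(j)}}/\psi_{\eta_{(j)}}$ is immediate only off the zero set of the denominator. What rescues the argument is that Assumption~\ref{assumption:iid_case}(b) transmits to analyticity of $\psi_{\eta_{(j)}}$ via Lemma~\ref{lemma:appendix_joint_analyticity}, forcing its zero set to be discrete; mere continuity of $\psi_{\xi'}$ then suffices to glue the quotient across those isolated zeros and pin the latent distribution down globally.
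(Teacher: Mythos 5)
Your proposal is correct and follows essentially the same route as the paper: factor the joint and marginal ch.f.s using independence of $\xi$ from the error order statistics, cancel the latent factor on a neighborhood of the origin where both ch.f.s are nonvanishing, and invoke the analyticity of $\psi_{\eta_{(j)}}$ (Lemma~\ref{lemma:appendix_joint_analyticity}) so its isolated zeros can be bridged by continuity to pin down the latent distribution. The only cosmetic difference is that the paper phrases the uniqueness step as defining $\psi_\xi(\cdot;F)=\psi_{X_{(j)}}/\psi_{\eta_{(j)}}$ by continuous extension at the zeros, whereas you compare two candidate latent ch.f.s directly; the substance is identical.
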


Suppose $F$ and $G$ are two measurement error distributions that are data-consistent. Lemma~\ref{lemma:rationalizability} states that the order statistics $(\eta_{(r)},\eta_{(s)})$ from the parent distribution $F$ and $(\eta'_{(r)},\eta'_{(s)})$ from the parent distribution $G$ must have the same ratio of joint and marginal ch.f.s. To obtain identification, it remains to show that such a measurement error distribution is unique, i.e., $F=G$. In contrast to the prevalent identification strategy in the existing literature that expands on \cite{kotlarski1967characterizing}, which begins with identifying the distribution of the latent variable $\xi$, Lemma~\ref{lemma:rationalizability} suggests an identification strategy that first focuses on the measurement error distribution. \par

Note that without any assumption on the latent variable $\xi$, the ch.f.~of $\xi$ may vanish on a nontrivial interval bounded away from the origin.\footnote{Recall that a ch.f.~takes value one at the origin and is uniformly continuous, implying that it does not vanish on some neighborhood about the origin.} Hence, in general, the ratio is only guaranteed to be well-defined on a neighborhood about the origin. Analyticity of the ch.f.~of measurement error order statistics (Lemma \ref{lemma:appendix_joint_analyticity}) thus plays a crucial role in pinning down the entire distribution based only on the information of the ch.f.~about the origin. In particular, Lemma~\ref{lemma:rationalizability} implies that any two data-consistent measurement error distributions must satisfy
\begin{align*}
  \psi_{\eta_{(r,s)}}(t_r,t_s) \psi_{\eta'_{(j)}}(t_r+t_s) = \psi_{\eta'_{(r,s)}}(t_r,t_s) \psi_{\eta_{(j)}}(t_r+t_s) , \quad j \in \{r,s\} ,
\end{align*}
for all $(t_r,t_s) \in B_0$. The equality extends to all of $\mathbb{R}^2$ because a product of two analytic functions is analytic. Finally, noticing that the ch.f.~of the sum of two independent random vectors is multiplicatively separable, the following lemma establishes necessary conditions for any two data-consistent measurement error distributions. \par

\begin{lemma}
\label{lemma:observational_equivalence_eps}
Let $\eta_{(r)}$ and $\eta_{(s)}$ be two order statistics of a random sample with size $n$ from $F$, and $\eta_{(r)}^\prime$ and $\eta_{(s)}^\prime$ be two order statistics of a random sample with size $n$ from $G$, where $(\eta_{(r)},\eta_{(s)})$ and $(\eta'_{(r)},\eta'_{(s)})$ are independent. Under Assumptions~\ref{assumption:sampling_process}, if measurement error distributions $F$ and $G$ admit light-tailed densities,\footnote{Cf.~Assumption~\ref{assumption:iid_case}(b) and footnote~\ref{fn:light_tailed}.} they both rationalize the data only if
  \begin{align}
    Z_{1j}
    := \begin{pmatrix} \eta_{(j)}^\prime + \eta_{(r)} \\ \eta_{(j)}^\prime + \eta_{(s)} \end{pmatrix}
    \stackrel{d}{=}
    \begin{pmatrix} \eta_{(j)} + \eta_{(r)}^\prime \\ \eta_{(j)} + \eta_{(s)}^\prime \end{pmatrix}
    =: Z_{2j} , \quad j \in \{r,s\} . \label{eq:observational_equivalence}
  \end{align}
\end{lemma}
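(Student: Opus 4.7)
The plan is to read (\ref{eq:observational_equivalence}) as an equality of two joint characteristic functions and to derive it by combining Lemma~\ref{lemma:rationalizability} applied to both of $F$ and $G$ with the joint analyticity furnished by Assumption~\ref{assumption:iid_case}(b) (via Lemma~\ref{lemma:appendix_joint_analyticity}). First, I would apply Lemma~\ref{lemma:rationalizability} to each of the two data-consistent distributions, equate the two expressions it gives for the ratio $\psi_{X_{(r,s)}}(t_r,t_s)/\psi_{X_{(j)}}(t_r+t_s)$ on the open ball $B_0$, and cross-multiply. The cross-multiplication is legitimate after shrinking $B_0$ so that all four ch.f.s stay close to $1$ (hence nonzero) on $B_0$, which is possible because each ch.f.~equals $1$ at the origin and is continuous. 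This yields
\begin{align*}
\psi_{\eta_{(r,s)}}(t_r,t_s)\,\psi_{\eta'_{(j)}}(t_r+t_s) = \psi_{\eta'_{(r,s)}}(t_r,t_s)\,\psi_{\eta_{(j)}}(t_r+t_s), \quad j \in \{r,s\},
\end{align*}
for $(t_r,t_s) \in B_0$.

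Next, I would extend this local identity to all of $\mathbb{R}^2$. By Lemma~\ref{lemma:appendix_joint_analyticity}, the joint ch.f.~of $(\eta_{(r)},\eta_{(s)})$ and the marginal $\psi_{\eta_{(j)}}$ are real-analytic, and likewise for the primed versions. Since sums and products of real-analytic functions remain real-analytic, both sides of the display are real-analytic functions of $(t_r,t_s) \in \mathbb{R}^2$. Two real-analytic functions on the connected domain $\mathbb{R}^2$ that agree on the nonempty open set $B_0$ coincide everywhere by the identity theorem, so the equality holds on all of $\mathbb{R}^2$.

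Finally, I would reinterpret the globally-valid identity as an equality of joint characteristic functions. Since $(\eta_{(r)},\eta_{(s)})$ is independent of $\eta'_{(j)}$ and $(\eta'_{(r)},\eta'_{(s)})$ is independent of $\eta_{(j)}$, multiplicative separability of ch.f.s under independence gives, for any $(t_r,t_s) \in \mathbb{R}^2$,
\begin{align*}
\psi_{\eta_{(r,s)}}(t_r,t_s)\,\psi_{\eta'_{(j)}}(t_r+t_s) = E\bigl[\exp\bigl(i t_r(\eta'_{(j)}+\eta_{(r)}) + i t_s(\eta'_{(j)}+\eta_{(s)})\bigr)\bigr],
\end{align*}
which is the joint ch.f.~of $Z_{1j}$; an analogous identity holds for $Z_{2j}$. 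Global equality of the ch.f.s of $Z_{1j}$ and $Z_{2j}$ then yields $Z_{1j} \stackrel{d}{=} Z_{2j}$ by the uniqueness theorem for characteristic functions, which is exactly (\ref{eq:observational_equivalence}).

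The heavy lifting for this lemma is already carried out by Lemma~\ref{lemma:rationalizability} (which isolates the empirical content on measurement errors through the ratio) and Lemma~\ref{lemma:appendix_joint_analyticity} (which provides real-analyticity under light tails), so the argument above is essentially a packaging exercise. The only delicate step is the analytic-continuation: one must first secure the cross-multiplied equation on a nonempty open neighborhood of the origin on which none of the four ch.f.s vanish, and only then appeal to the multivariate identity theorem to globalize the identity before reinterpreting the two sides via independence.
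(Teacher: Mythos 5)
Your argument is correct and mirrors the paper's proof of this lemma: invoke Lemma~\ref{lemma:rationalizability} for both $F$ and $G$, cross-multiply the ratio identity on $B_0$, recognize each product as the joint ch.f.~of $Z_{1j}$ and $Z_{2j}$ via independence, and use the analyticity from Lemma~\ref{lemma:appendix_joint_analyticity} together with the identity theorem to extend the equality from $B_0$ to all of $\mathbb{R}^2$, then conclude by uniqueness of ch.f.s. Your extra care in shrinking $B_0$ so that all denominators are nonzero before cross-multiplying is a fine (and slightly more explicit) rendering of what the paper leaves implicit.
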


The equivalence of ratios of ch.f.s~(Lemma~\ref{lemma:rationalizability}) implies an equivalence of two joint distributions of particular linear combinations of order statistics that arise from the two measurement error distributions. The linear combinations not only expose the empirical content contained in the spacing of errors (see Corollary~\ref{corollary:observational_equivalence_eps}) as investigated by \cite{hall2003inference} among many others in the classical repeated measurements setting and by \cite{rossberg1972characterization} in the context of order statistics, but they also imposes restrictions on the sum of order statistics from potentially distinct parent distributions, which we call a cross-sum. \par

Because both $Z_{1r}$ and $Z_{2r}$ involve three order statistics that are associated with potentially different parent distributions $F$ and $G$, it appears to be difficult to establish the equivalence of $F$ and $G$ directly from the distributional equivalence of $Z_{1r}$ and $Z_{2r}$. A reasonable attempt to tackle the problem would be to explore features of the distributions of $Z_{1r}$ and $Z_{2r}$ that depend on the parent distributions in a simple manner. Under the support condition in Assumption~\ref{assumption:bound}, we show that the distributions of $Z_{1r}$ and $Z_{2r}$ depend on only one of the two parent distributions upon conditioning on an extreme event.\footnote{The identification argument here and below investigates the implications of the condition $Z_{1r} =_d Z_{2r}$ only. $Z_{1s} =_d Z_{2s}$ in \eqref{eq:observational_equivalence} is the relevant condition to exploit under the assumption, in place of Assumption~\ref{assumption:bound}, that the measurement error is bounded from above (cf.~footnote~\ref{footnote:support_upper_bound}).} Thus the joint distribution of the cross-sum and spacing together with Assumption~\ref{assumption:bound} provide information sufficient to claim equivalence of two data-consistent measurement error distributions and point identify $F_\eps$. \par

\begin{lemma}\label{lemma:limit_of_conditional_distribution}
Let $\eta_{(r)}$, $\eta_{(s)}$, $\eta_{(r)}^\prime$ and $\eta_{(s)}^\prime$ be as specified in Lemma~\ref{lemma:observational_equivalence_eps}. If measurement error distributions $F$ and $G$ admit light-tailed densities and $\inf S(F) = \inf S(G) = 0$, we have, for all $c \in \mathbb{R}$,
\begin{align}
    \lim_{\delta \downarrow 0}  \mathbb{P}(\eta_{(r)}^\prime + \eta_{(s)} \le c \ | \ \eta_{(r)}^\prime + \eta_{(r)}  \le \delta) &= F_{s-r:n-r}(c) , \label{eqn:limitF} \\
    \lim_{\delta \downarrow 0}  \mathbb{P}(\eta_{(r)} + \eta_{(s)}^\prime \le c \ | \ \eta_{(r)} + \eta_{(r)}^\prime \le \delta) &= G_{s-r:n-r}(c) , \label{eqn:limitG}
\end{align}
where $F_{s-r:n-r}$ (resp., $G_{s-r:n-r}$) denotes the distribution of the $(s-r)$\textsuperscript{th} order statistic of a random sample with size $n-r$ from $F$ (resp., $G$).
\end{lemma}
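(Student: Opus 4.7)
The plan is to express each conditional probability as a ratio of double integrals via the independence of the two samples, and then pass the limit through the ratio by exploiting the continuity, at the lower boundary $x = 0$, of the conditional distribution of $\eta_{(s)}$ given $\eta_{(r)} = x$. By symmetry, it suffices to establish \eqref{eqn:limitF}; \eqref{eqn:limitG} follows by swapping the roles of $F$ and $G$. Since $\eta_{(r)}', \eta_{(s)} \ge 0$ a.s.\ under the support condition, the case $c \le 0$ is immediate (both sides equal $0$, using also that $F_{s-r:n-r}(c) = 0$ for $c \le 0$ by absolute continuity of $F$), so I assume $c > 0$ throughout.

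First, using independence of $(\eta_{(r)}', \eta_{(s)}')$ from $(\eta_{(r)}, \eta_{(s)})$ together with $\eta_{(r)}, \eta_{(r)}' \ge 0$,
\begin{align*}
  \mathbb{P}(\eta_{(r)}^\prime + \eta_{(s)} \le c, \eta_{(r)}^\prime + \eta_{(r)} \le \delta) &= \int_0^\delta \mathbb{P}(\eta_{(s)} \le c - y,\, \eta_{(r)} \le \delta - y)\, f_{\eta_{(r)}'}(y)\, dy , \\
  \mathbb{P}(\eta_{(r)}^\prime + \eta_{(r)} \le \delta) &= \int_0^\delta F_{\eta_{(r)}}(\delta - y)\, f_{\eta_{(r)}'}(y)\, dy .
\end{align*}
The denominator is strictly positive for every $\delta > 0$ because $0$ belongs to both $S(F)$ and $S(G)$. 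Further conditioning on $\eta_{(r)} = x$ and using the classical fact that, given $\eta_{(r)} = x$, the remaining $n-r$ observations are i.i.d.\ from the truncated CDF $F_x(z) := (F(z) - F(x))/(1 - F(x))$ on $[x, \infty)$, I express
\begin{align*}
  \mathbb{P}(\eta_{(s)} \le c - y,\, \eta_{(r)} \le \delta - y) = \int_0^{\delta - y} \Phi_x(c - y)\, f_{\eta_{(r)}}(x)\, dx ,
\end{align*}
where $\Phi_x(z) := \sum_{k=s-r}^{n-r} \binom{n-r}{k} F_x(z)^k [1 - F_x(z)]^{n-r-k}$ for $z \ge x$ is the CDF of the $(s-r)$-th order statistic of $n-r$ i.i.d.\ draws from $F_x$.

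The main step, which I view as the primary technical obstacle, will be to show that $(x, z) \mapsto \Phi_x(z)$ is jointly continuous at $(0, c)$ with $\Phi_0(c) = F_{s-r:n-r}(c)$. Since $F$ is absolutely continuous with $F(0) = 0$, the map $(x, z) \mapsto F_x(z)$ is jointly continuous in a neighborhood of $(0, c)$, and the polynomial form of $\Phi_x$ then yields the claim. Consequently, for each $\varepsilon > 0$ there exists $\delta_0 \in (0, c)$ such that $|\Phi_x(c - y) - F_{s-r:n-r}(c)| < \varepsilon$ for all $x, y \in [0, \delta_0]$.

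Finally, for $\delta < \delta_0$ and $y \in [0, \delta]$, the uniform bound yields
\begin{align*}
  \mathbb{P}(\eta_{(s)} \le c - y,\, \eta_{(r)} \le \delta - y) = \bigl[F_{s-r:n-r}(c) + R(\delta, y)\bigr] F_{\eta_{(r)}}(\delta - y) ,
\end{align*}
with $\sup_{y \in [0, \delta]} |R(\delta, y)| < \varepsilon$. Substituting into the numerator and factoring, the common weight $F_{\eta_{(r)}}(\delta - y) f_{\eta_{(r)}'}(y)$ cancels between numerator and denominator, leaving a ratio equal to $F_{s-r:n-r}(c) + O(\varepsilon)$. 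Letting $\delta \downarrow 0$ and then $\varepsilon \downarrow 0$ establishes \eqref{eqn:limitF}. The entire argument rests on the normalization $\inf S(F) = \inf S(G) = 0$ (which localizes the integrations to $[0, \delta]$) and on the absolute continuity of $F$ (which delivers the continuity of $\Phi_x$ at the boundary $x = 0$).
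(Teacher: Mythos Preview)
Your proof is correct and follows the same overall skeleton as the paper---localize both integrations to $[0,\delta]$, rewrite the conditional probability as a weighted average of a conditional CDF, show this conditional CDF is jointly continuous at the lower boundary with the desired limit, and finish with a squeeze argument. The meaningful difference lies in the choice of conditional object. The paper conditions on the \emph{cumulative} event $\{\eta_{(r)} \le \epsilon_r\}$, working with $F_{(s\mid r)}(\epsilon_s;\epsilon_r) = F_{(r,s)}(\epsilon_r,\epsilon_s)/F_{(r)}(\epsilon_r)$; identifying the limit $F_{(s\mid r)}(\cdot;0) = F_{s-r:n-r}(\cdot)$ then requires a separate combinatorial lemma (the paper's Lemma~A.2), and joint continuity is argued via separate continuity plus monotonicity (invoking \cite{kruse1969joint}). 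You instead condition on the \emph{point} event $\{\eta_{(r)} = x\}$ and appeal directly to the Markov property of order statistics, which hands you the explicit polynomial $\Phi_x(z)$ in $F_x(z) = (F(z)-F(x))/(1-F(x))$. This buys you both the limit identification $\Phi_0 = F_{s-r:n-r}$ and the joint continuity at $(0,c)$ essentially for free, sidestepping Lemma~A.2 and the Kruse citation. Your route is therefore somewhat more elementary, at the cost of carrying a double integral rather than a single one; the paper's route is more self-contained in that it never invokes the pointwise Markov property (which, as the paper itself notes, is delicate at $x=0$ since $f_{\eta_{(r)}}(0)=0$ for $r>1$---though your integral representation avoids that issue cleanly).
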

For the sake of intuition, suppose that conditioning on the event $\{\eta_{(r)}^\prime + \eta_{(r)} = 0\}$ is well-defined. The condition implies that both $\eta_{(r)}^\prime = 0$ and $\eta_{(r)} = 0$. Thus, the independence between the pairs of order statistics simplifies the conditional distribution to $\mathbb{P}( \eta_{(s)} \le c \ | \ \eta_{(r)} = 0)$. Standard arguments in order statistics (cf.~Theorem~2.4.2 in \cite{arnold2008first}) suggest that this conditional distribution has a simple form:
\[
  \mathbb{P}( \eta_{(s)} \le c \ | \ \eta_{(r)} = 0) = F_{s-r:n-r}(c) .
\]
In other words, the equality above says that if the lowest $r$ draws all take the minimal possible value, then the distribution of any higher order statistic must equal the distribution of order statistic when the $r$ draws at the minimal value are discarded. \par

However, the event $\{\eta_{(r)}^\prime + \eta_{(r)} = 0\}$ may not be well-defined as it occurs with zero probability. In particular, the event $\{\eta_{(r)} = 0\}$ always has zero density (regardless of the density of $\eta$) unless $r=1$, i.e., the minimum order statistic. Further, because the true density $f_\varepsilon(0)$ may be zero, even the minimum order statistic may have zero density. In Lemma~\ref{lemma:limit_of_conditional_distribution}, we make the intuitive claim made above more rigorous by considering the limiting argument as in \eqref{eqn:limitF} and \eqref{eqn:limitG}, which is well-defined. \par

Lemma~\ref{lemma:observational_equivalence_eps} implies that if $F$ and $G$ are both data-consistent, then the conditional distributions \eqref{eqn:limitF} and \eqref{eqn:limitG} in Lemma~\ref{lemma:limit_of_conditional_distribution} must be equal, i.e., for all $c \in \mathbb{R}$,
\[
  F_{s-r:n-r}(c) = G_{s-r:n-r}(c) .
\]
As the distribution of an order statistic uniquely identifies the parent distribution, we can conclude that $F = G$.\footnote{The one-to-one mapping between the distribution of an order statistic and the parent distribution is a standard result, e.g., see \cite{david2003order}, p.10, (2.1.5). The mapping in (2.1.5) is an invertible map of the c.d.f.~$F$ because $p \mapsto I_p(a, b)$ is strictly monotone.} Combining Lemmas~\ref{lemma:rationalizability}--\ref{lemma:limit_of_conditional_distribution} leads to the following main identification result for the i.i.d.~case. \par

\begin{theorem}
\label{theorem:identification_iid_case}
Under Assumption \ref{assumption:sampling_process}, if the measurement errors satisfy Assumptions~\ref{assumption:iid_case} and \ref{assumption:bound}, both $F_\xi$ and $F_\eps$ are identified.
\end{theorem}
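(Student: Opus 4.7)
The plan is to chain the three preceding lemmas to show that any two data-consistent measurement error distributions must coincide, and then to invoke the uniqueness clause of Lemma~\ref{lemma:rationalizability} to pin down the latent variable distribution.

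I would start by fixing two candidate measurement error distributions $F$ and $G$ that both rationalize $F_{X_{(r,s)}}$, with $\eta\sim F$ and $\eta'\sim G$. Applying Lemma~\ref{lemma:rationalizability} once to $F$ and once to $G$, both ratios $\psi_{\eta_{(r,s)}}/\psi_{\eta_{(j)}}$ and $\psi_{\eta'_{(r,s)}}/\psi_{\eta'_{(j)}}$ coincide with $\psi_{X_{(r,s)}}/\psi_{X_{(j)}}$ on the common neighborhood $B_0$, for $j\in\{r,s\}$. Next I would feed this identity into Lemma~\ref{lemma:observational_equivalence_eps}, whose proof relies on the joint analyticity of the order-statistic ch.f.s (guaranteed by Assumption~\ref{assumption:iid_case}(b)) to extend the equality from $B_0$ to all of $\mathbb{R}^2$, producing the distributional equivalence $Z_{1j}\stackrel{d}{=}Z_{2j}$ for $j\in\{r,s\}$. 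Under Assumption~\ref{assumption:bound} only the $j=r$ case is needed.

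From $Z_{1r}\stackrel{d}{=}Z_{2r}$, the two conditional probabilities
\[
  \mathbb{P}(\eta'_{(r)}+\eta_{(s)}\le c \mid \eta'_{(r)}+\eta_{(r)}\le\delta)
  \quad\text{and}\quad
  \mathbb{P}(\eta_{(r)}+\eta'_{(s)}\le c \mid \eta_{(r)}+\eta'_{(r)}\le\delta)
\]
agree for every $\delta>0$ and every $c\in\mathbb{R}$. Passing to the limit $\delta\downarrow 0$ and applying Lemma~\ref{lemma:limit_of_conditional_distribution} to each side, I obtain $F_{s-r:n-r}(c)=G_{s-r:n-r}(c)$ for all $c$. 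Since the distribution of a single order statistic is a strictly monotone (hence invertible) transformation of the parent c.d.f., this forces $F=G$, so $F_\eps$ is identified. The uniqueness clause in Lemma~\ref{lemma:rationalizability} then identifies $F_\xi$ as the unique distribution that, together with $F_\eps$, reproduces $F_{X_{(r,s)}}$.

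At this stage the proof is essentially a bookkeeping composition of the three lemmas: the substantive difficulty---handling a probability-zero conditioning event through a limit, and controlling ch.f.s that are only guaranteed to be well-behaved near the origin---has already been absorbed by the supporting lemmas. The only subtle checkpoint I would flag is that Lemma~\ref{lemma:rationalizability} delivers equality of ch.f.~ratios only on $B_0$, so the global extension required to apply Lemma~\ref{lemma:observational_equivalence_eps} must explicitly rely on joint analyticity; beyond that, no new argument is needed.
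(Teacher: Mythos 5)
Your proposal is correct and follows essentially the same route as the paper: chain Lemma~\ref{lemma:rationalizability} and Lemma~\ref{lemma:observational_equivalence_eps} (with the analyticity-based extension beyond $B_0$), pass to the $\delta\downarrow 0$ limit via Lemma~\ref{lemma:limit_of_conditional_distribution} to get $F_{s-r:n-r}=G_{s-r:n-r}$, invoke the strict monotonicity of the order-statistic map to conclude $F=G$, and then use the uniqueness clause of Lemma~\ref{lemma:rationalizability} to recover $F_\xi$. No gaps; your flagged checkpoint about the analytic continuation is exactly the role Lemma~\ref{lemma:appendix_joint_analyticity} plays in the paper's argument.
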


\subsubsection*{Discussion on \cite{rossberg1972characterization}'s Counterexample}

\cite{athey2002identification} pointed out that an identification strategy based on the spacing between order statistics does not yield a positive result. The discussion therein relied on a counterexample by \cite{rossberg1972characterization}. A straightforward corollary to Lemma~\ref{lemma:observational_equivalence_eps} highlights the model restrictions we exploit in addition to the spacing. \par

\begin{corollary}
\label{corollary:observational_equivalence_eps}
Under the same assumptions as in Lemma~\ref{lemma:observational_equivalence_eps}, measurement error distributions $F$ and $G$ both rationalize the data only if
  \begin{align*}
    \begin{pmatrix} \eta_{(s)} - \eta_{(r)} \\ \eta_{(r)}^\prime + \eta_{(s)} \end{pmatrix}
    \stackrel{d}{=} \begin{pmatrix} \eta_{(s)}^\prime - \eta_{(r)}^\prime \\ \eta_{(r)} + \eta_{(s)}^\prime \end{pmatrix}
    \quad \text{and} \quad
    \begin{pmatrix} \eta_{(s)} - \eta_{(r)} \\ \eta_{(r)} + \eta_{(s)}^\prime \end{pmatrix}
    \stackrel{d}{=} \begin{pmatrix} \eta_{(s)}^\prime - \eta_{(r)}^\prime \\ \eta_{(r)}^\prime + \eta_{(s)} \end{pmatrix} .
  \end{align*}
\end{corollary}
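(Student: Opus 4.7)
The plan is to recognize that the corollary follows immediately from Lemma~\ref{lemma:observational_equivalence_eps} by applying an invertible linear transformation to each of the two distributional equalities in \eqref{eq:observational_equivalence} (the one with $j=r$ and the one with $j=s$). Since distributional equality is preserved under any measurable pushforward, the only content to verify is that the chosen transformations are invertible and that they produce exactly the spacings and cross-sums appearing in the corollary.

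The structural observation that makes this work is that, on each line of \eqref{eq:observational_equivalence}, one coordinate already agrees as a random variable between the two sides by commutativity of addition. For $j=r$, the first coordinates of $Z_{1r}$ and $Z_{2r}$ coincide as $\eta_{(r)} + \eta_{(r)}^\prime$; for $j=s$, the second coordinates of $Z_{1s}$ and $Z_{2s}$ coincide as $\eta_{(s)} + \eta_{(s)}^\prime$. Subtracting this shared cross-sum from the remaining coordinate converts the cross-sum into the spacing $\eta_{(s)} - \eta_{(r)}$ on one side and $\eta_{(s)}^\prime - \eta_{(r)}^\prime$ on the other, which is exactly the structure the corollary isolates.

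Concretely, for the first identity I would apply the invertible map $T_1(a,b) = (b-a,\, b)$ to the $j=r$ equation in \eqref{eq:observational_equivalence}, which sends $Z_{1r}$ to $(\eta_{(s)} - \eta_{(r)},\, \eta_{(r)}^\prime + \eta_{(s)})$ and $Z_{2r}$ to $(\eta_{(s)}^\prime - \eta_{(r)}^\prime,\, \eta_{(r)} + \eta_{(s)}^\prime)$. For the second identity I would apply $T_2(a,b) = (b-a,\, a)$ to the $j=s$ equation, which after reordering summands by commutativity yields $(\eta_{(s)} - \eta_{(r)},\, \eta_{(r)} + \eta_{(s)}^\prime) \stackrel{d}{=} (\eta_{(s)}^\prime - \eta_{(r)}^\prime,\, \eta_{(r)}^\prime + \eta_{(s)})$.

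There is essentially no obstacle here; the corollary is a formal rearrangement of Lemma~\ref{lemma:observational_equivalence_eps} whose role is expository, making it transparent that the empirical content in \eqref{eq:observational_equivalence} contains strictly more than the spacing distribution studied by \cite{rossberg1972characterization}---it also pins down the joint law of the spacing with a cross-sum of order statistics from the two candidate parent distributions, which is precisely the additional restriction the main identification argument goes on to exploit.
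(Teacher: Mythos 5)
Your proposal is correct and takes essentially the same route as the paper: the paper's proof likewise applies the invertible linear maps $(z_1,z_2)\mapsto(z_2-z_1,z_2)$ and $(z_1,z_2)\mapsto(z_2-z_1,z_1)$ to the $j=r$ and $j=s$ equalities in \eqref{eq:observational_equivalence}, which is exactly your $T_1$ and $T_2$. Nothing further is needed.
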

In words, the observational equivalence between any two measurement error distributions in our setting implies the equivalence of the joint distributions of the spacing and cross-sum of order statistics. This finding complements the nonidentification result in \cite{rossberg1972characterization} that the distribution of spacing between two order statistics alone cannot identify their parent distribution. The additional constraint on the distribution of cross-sum absent in \cite{rossberg1972characterization} emerges from the within independence assumption. Figure~\ref{fig:df_fig} in Appendix~\ref{appendix:rossberg} shows that, while the spacing distributions for the exponential and Rossberg's counterexample overlap, the cross-sum distributions differ over nontrivial regions. See Appendix~\ref{appendix:rossberg} for more discussion. \par


\subsection{Extensions of the Identification Result} \label{subsection:extensions}

We extend our identification result to the case of independent but nonidentically distributed (i.n.i.d.) measurement errors. To motivate the extension to the problem, we expand on Example~\ref{example:ascending_auction} by introducing bidder asymmetry and discuss additional features available in some auction data.

\begin{example}[Asymmetric auction]
In empirical auctions, bidders are said to be asymmetric if the bidders' private values $\eps_1,\ldots,\eps_n$ have different marginal (or parent, in the case of order statistics) distributions. Such asymmetry arises naturally in procurement auctions, where contractors differ in cost efficiency and productivity (see, e.g., \cite{flambard2006asymmetry}), timber auctions, where mills have the manufacturing capacity and loggers do not (see, e.g., \cite{athey2011comparing})\footnote{In a wage offer setting as in Example~\ref{example:wage_offer} employers may value the same productivity differently, which may result in heterogeneous measurement errors.}. \par

Due to asymmetry, identifying bidder-specific valuation distributions requires knowing their identities, i.e., knowing who participated in each auction. Fortunately, the auctioneer often publishes such identities despite the highest valuation being missing; see, e.g., \cite{athey2011comparing}. Another common practice in the asymmetric auction literature is grouping bidders by commonly known bidder types, such as mills and loggers in \cite{athey2011comparing} and strong and weak bidders in \cite{luo2018auctions}, which leads to more tractable theory and empirics. Such grouping uses additional information about the bidders, typically available as a public record, such as bidders' manufacturing capacity in a timber auction and pre-qualified contractors' experience in Department of Transportation procurement auctions.
\end{example}

In this section, we first show that any two order statistics suffice to point identify the underlying distributions when there is a relatively small number of groups and the group identities of high-order statistics are observable. We then discuss several relevant applications in empirical auctions. \par

\begin{assumption}[i.n.i.d.~errors] \label{assumption:noniid_case}
  \leavevmode
  \begin{enumerate}[label=(\alph*)]
    \item $\eps_{1},\ldots,\eps_{n}$ are independent with distributions $F_{\eps_1},\ldots,F_{\eps_n}$ on $\mathbb{R}$, respectively.
    \item For each $j \in \{1,\ldots,n\}$, $F_{\eps_j}$ admits a probability density function $f_{\eps_j}$ that is light-tailed, i.e., $f_{\eps_j}(\epsilon) = O(e^{-C_j\lvert\epsilon\rvert})$ as $\lvert\epsilon\rvert \rightarrow \infty$ for some $C_j > 0$.
    \item For each $j \in \{1,\ldots,n\}$, $\inf S(F_{\eps_j}) = 0$.
  \end{enumerate}
\end{assumption}

Assumption~\ref{assumption:noniid_case}(c) assumes both finite and common support lower bound across distinct measurement error distributions, the latter of which holds trivially in the i.i.d.~case. Note that apart from the lower boundary, the supports may differ (cf.~Corollary~\ref{corollary:identification_noniid}). 
The cost of relaxing the homogeneity assumption translates to more stringent data requirements. We require observing indices or types associated with all high-order statistics (for ranks above $r$). The following assumption records any prior information on the different types of measurement errors.
\begin{assumption}[Group structure] \label{assumption:group_structure}
  There exists a partition $g_1,\ldots,g_p$ of $\{1,\ldots,n\}$ such that $\eps_j =_d \eps_k$ if $j,k \in g_q$ for some $q \in \{1,\ldots,p\}$. In addition, the measurement errors $\eps_1,\ldots,\eps_n$ have common support.
\end{assumption}

Assumption~\ref{assumption:group_structure} posits there are at most $p$ distinct measurement error distributions. The case $p=1$ corresponds to the i.i.d.~case in Section~\ref{subsection:identification}; and $p=n$ corresponds to the case with no prior information on homogeneity. We do not preclude the possibility that two groups $g_q$ and $g_{q'}$ have the same distribution beyond the researcher's knowledge. E.g., the extreme case $p=n$ subsumes the i.i.d.~setting as a special case. Assumption~\ref{assumption:group_structure} implicitly assumes that the group structure is held constant across observations. In an application to auctions, the assumption may be imposed by restricting to auctions with the same composition of bidder types when all bidder types of participants are available in the dataset. In such a case, the analysis should be interpreted conditionally on the bidder composition. \par

The common support assumption is a sufficient condition to guarantee that all measurement error distributions are identified on the entirety of their support. Otherwise, some distributions may be identified only on a strict subset of their support, as we highlight in a discussion below. Let $R_{(j)} = \{q : X_{(j)} = X_k \text{ for some } k \in g_q\}$ be the group identity of the $j$\textsuperscript{th} order statistic.\footnote{Under Assumption~\ref{assumption:noniid_case}(b), $R_{(j)}$ is singleton with probability $1$. Thus, we abuse notation and use $R_{(j)}$ to denote both the set and the a.s.~unique element in $\{1,\ldots,p\}$.} \par

\begin{theorem} \label{theorem:identification_noniid_group}
  Suppose Assumptions~\ref{assumption:sampling_process}, \ref{assumption:noniid_case}, and \ref{assumption:group_structure} hold. Further, suppose two order statistics and the group identities of high-order statistics $(X_{(r)},X_{(s)},R_{(r+1)},\ldots,R_{(n)})$ are observed. If there exists a group $g_q$ with at least $n-r$ members, both $F_\xi$ and $(F_{\eps_j} : j=1,\ldots,n)$ are identified.
\end{theorem}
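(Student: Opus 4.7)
The plan is to extend the three-lemma strategy of Theorem~\ref{theorem:identification_iid_case} by conditioning on the observable group identities of the high-order statistics. The pivotal event is
\[
  E = \{R_{(r+1)} = \cdots = R_{(n)} = q\},
\]
which has strictly positive probability under Assumptions~\ref{assumption:sampling_process} and~\ref{assumption:noniid_case} precisely because $|g_q| \ge n-r$, and which is a function only of the ranks of the $\eps_j$'s (hence independent of $\xi$). On $E$, the top $n-r$ full-sample order statistics coincide with the top $n-r$ order statistics of the $g_q$-subsample, which is i.i.d.\ of size $m = |g_q|$ from $F_\xi \ast F_{\eps_q}$; in particular, $X_{(s)} = X^{g_q}_{(s+m-n)}$ on $E$, while $X_{(r)}$ equals the maximum of the bottom-$r$ positions (all non-$g_q$ members together with the $m-(n-r)$ lowest $g_q$ draws).

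My first step would be to identify $F_\xi$ and $F_{\eps_q}$ by applying the three-lemma argument to the conditional distribution of $(X_{(r)}, X_{(s)}) \mid E$. An analog of Lemma~\ref{lemma:rationalizability} uses the $\xi$-independence of $E$ to factor out $\psi_\xi(t_r + t_s)$ via a ch.f.\ ratio, isolating the ch.f.\ content of the $g_q$-subsample order-statistic pair. An analog of Lemma~\ref{lemma:observational_equivalence_eps} then yields a cross-sum and spacing equivalence for any two data-consistent candidates for $F_{\eps_q}$. Finally, an analog of Lemma~\ref{lemma:limit_of_conditional_distribution} drives $X_{(r)}$ toward the common lower support boundary (Assumption~\ref{assumption:noniid_case}(c)) to extract, in the limit, the distribution of a specific $g_q$-subsample order statistic; the standard bijection between a parent distribution and any of its order statistics then pins down $F_{\eps_q}$, and deconvolution identifies $F_\xi$.

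The second step would identify the remaining $F_{\eps_{q'}}$, $q' \neq q$, inductively. For each such $q'$, I would condition on events that place a $g_{q'}$-member at a particular high rank while keeping all other high ranks in $g_q$, such as $\{R_{(s)} = q'\} \cap \{R_{(j)} = q : j \in \{r+1,\ldots,n\}\setminus\{s\}\}$. On such an event the conditional joint distribution of $(X_{(r)}, X_{(s)})$ involves only the already-identified $F_\xi$ and $F_{\eps_q}$ together with the single unknown $F_{\eps_{q'}}$, so the same cross-sum/spacing and conditional-limit tools identify $F_{\eps_{q'}}$. Assumption~\ref{assumption:group_structure}'s common-support condition ensures identification on the entirety of the support rather than on a strict subset. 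Sweeping over $q'$ completes the identification of $(F_{\eps_j})_{j=1,\ldots,n}$.

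The main obstacle lies in the analog of Lemma~\ref{lemma:limit_of_conditional_distribution} in the first step: on $E$, $X_{(r)}$ is not itself a $g_q$-subsample order statistic but a maximum over a mixture of non-$g_q$ and low-$g_q$ values. Controlling this mixture in the conditional limit is delicate; the common lower-bound assumption $\inf S(F_{\eps_j}) = 0$ for every $j$ is exactly what makes the argument go through, because as the conditioning drives $X_{(r)}$ toward the boundary the mixture concentrates at the boundary regardless of which group contributes, leaving only the $g_q$-subsample information in the limit.
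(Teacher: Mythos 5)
Your proposal follows essentially the same route as the paper's proof: Step 1 conditions on the event that all top $n-r$ order statistics come from the large group (the paper's $E_1$) and runs the conditional analogs of Lemmas~\ref{lemma:rationalizability}--\ref{lemma:limit_of_conditional_distribution} to identify that group's distribution, Step 2 conditions on exactly the paper's events $E_{q'}$ (rank $s$ from group $q'$, all other high ranks from the large group) to identify the remaining groups using the common-support condition, and deconvolution then gives $F_\xi$. You also correctly flag the key technical point—that $X_{(r)}$ is a maximum over a mixture of groups and that the common lower bound makes the conditional-limit argument collapse this mixture—which is precisely what the paper's Lemmas~\ref{lemma:appendix_conditional_cdf_1} and \ref{lemma:appendix_conditional_cdf_q} handle formally.
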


Note that the identification result does not require the researcher to observe the group identity $R_{(r)}$ of the $r$\textsuperscript{th} order statistic or those of lower order statistics. An heuristic explanation is provided in a discussion below. Theorem~\ref{theorem:identification_noniid_group} has an important implication when the measurement errors are left completely heterogeneous, i.e., $p=n$. In particular, the theorem implies that the observed order statistics should be consecutive and extreme because there is no group of a larger size, i.e., $n-r=1$.\footnote{The support condition in Assumption~\ref{assumption:group_structure} plays no role in this corollary because the observed order statistics are the largest among all measurement errors.}
\begin{corollary}
\label{corollary:identification_noniid}
  Suppose Assumptions~\ref{assumption:sampling_process} and \ref{assumption:noniid_case} hold. Further, suppose $(X_{(n-1)},X_{(n)},R_{(n)})$ is observed. Both $F_\xi$ and $(F_{\eps_j} : j=1,\ldots,n)$ are identified.    
\end{corollary}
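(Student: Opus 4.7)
The plan is to derive the corollary as a specialization of Theorem~\ref{theorem:identification_noniid_group} with the finest group structure ($p=n$, $g_j=\{j\}$ for each $j$) and the extreme ranks $(r,s)=(n-1,n)$. Under these choices every hypothesis of the theorem either collapses to a hypothesis of the corollary or becomes trivial: Assumptions~\ref{assumption:sampling_process} and~\ref{assumption:noniid_case} are given; the partition part of Assumption~\ref{assumption:group_structure} is vacuous for a singleton partition; the group-size condition ``$g_q$ with at least $n-r$ members'' reduces to $n-r=1$, which every singleton supplies; and the required data $(X_{(r)},X_{(s)},R_{(r+1)},\ldots,R_{(n)})$ collapses exactly to $(X_{(n-1)},X_{(n)},R_{(n)})$.

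The only piece of Assumption~\ref{assumption:group_structure} not available in the corollary's hypothesis is the common-support clause, so the substantive step of the plan is to justify the footnote's claim that this clause is not needed at the extreme ranks. Common support is invoked in the proof of Theorem~\ref{theorem:identification_noniid_group} solely to guarantee that each $F_{\eps_j}$ is identified over its \emph{full} support---a concern that arises when low-rank observations fail to probe parts of $S(F_{\eps_j})$ that are masked by larger draws from other distributions. When $s=n$, however, conditioning on $R_{(n)}=j$ gives direct access to $X_{(n)}=\xi+\eps_j$ over the entire upper support of $\eps_j$, and the common lower boundary $\inf S(F_{\eps_j})=0$ from Assumption~\ref{assumption:noniid_case}(c), combined with the limiting-conditioning device of Lemma~\ref{lemma:limit_of_conditional_distribution} adapted to the i.n.i.d.\ setting inside the proof of the theorem, pins down the lower end. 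Once each $F_{\eps_j}$ is identified, $F_\xi$ follows by standard deconvolution as in the proof of Theorem~\ref{theorem:identification_iid_case}.

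I expect the main obstacle to be the second step: retracing the proof of Theorem~\ref{theorem:identification_noniid_group} and verifying that every appeal to common support corresponds to a low-rank contingency that does not arise when $(r,s)=(n-1,n)$. This is bookkeeping rather than a new conceptual ingredient, but it is where the real work lies, since quoting the theorem in its stated form would leave the common-support clause to discharge. After that verification, the three points listed in the first paragraph deliver the corollary immediately.
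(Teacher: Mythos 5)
Your proposal is correct and follows essentially the same route as the paper: specialize the proof of Theorem~\ref{theorem:identification_noniid_group} to $(r,s)=(n-1,n)$ with singleton groups and check that the common-support clause of Assumption~\ref{assumption:group_structure} is never actually needed there. The ``bookkeeping'' you defer is exactly what the paper's one-line proof records: the limiting conditional density in Step~2 is proportional to $F_1(c)^{s-r-1}(1-F_1(c))^{n-s} f_q(c)$, and with $s-r-1=0$ and $n-s=0$ the weight factor is identically one, so $f_q=g_q$ a.e.\ without any common-support requirement, after which $F_\xi$ follows by deconvolution as you state.
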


So as to appreciate the additional conditions assumed in Theorem~\ref{theorem:identification_noniid_group}, we first illustrate the identification strategy in the setting of Corollary~\ref{corollary:identification_noniid}, which delivers a simpler argument. Suppose results similar to Lemmas~\ref{lemma:rationalizability}--\ref{lemma:limit_of_conditional_distribution} hold in the i.n.i.d.~case, in the sense that two sets of data-consistent measurement error distributions $(F_j : j=1,\ldots,n)$ and $(G_j : j=1,\ldots,n)$ must satisfy
\begin{align}
  \mathbb{P}(\eta_{(n)} \le c \,\vert\, \eta_{(n-1)} = 0) = \mathbb{P}(\eta'_{(n)} \le c \,\vert\, \eta'_{(n-1)} = 0) , \label{eq:conditional_implication_noniid}
\end{align}
for every constant $c$. Had the measurement errors been i.i.d., \eqref{eq:conditional_implication_noniid} corresponds the equivalence of two parent distributions. In the i.n.i.d.~case, the top-order statistic may arise from any of the parent distributions. Intuition suggests that this should be a mixture of measurement errors from different parent distributions. Since the mixing weights depend on $(F_j : j=1,\ldots,n)$, it appears formidable to show its equivalence with $(G_j : j=1,\ldots,n)$ from \eqref{eq:conditional_implication_noniid} without additional assumptions. Alternatively, suppose---as we formally show in the proof of Theorem~\ref{theorem:identification_noniid_group}---data-consistency implies a condition similar to \eqref{eq:conditional_implication_noniid} conditional on the member index of the highest order statistic. Heuristically speaking, because the index is known, say $j$, the conditional distribution simplifies to the $j$\textsuperscript{th} marginal distribution (i.e., a mixture with trivial weights). Thus $F_j = G_j$ and the $j$\textsuperscript{th} measurement error distribution is identified. Under the assumption of a common support lower bound, each measurement error has a nonzero chance of being the largest, which implies all $n$ measurement error distributions are identified. \par

Corollary~\ref{corollary:identification_noniid} does not grant identification of an ascending auction model when the data on dropout bids is incomplete and only a few high-order dropout bids as in, e.g., \cite{freyberger2022identification}. Nonetheless, group structures as in Theorem~\ref{theorem:identification_noniid_group} are commonly present in the empirical auction literature. \par

To illustrate the identification strategy in the setting of Theorem~\ref{theorem:identification_noniid_group}, suppose out of $n=4$ measurements, one observes $(X_{(2)},X_{(3)},R_{(3)},R_{(4)})$ and it is known \emph{a priori} that two of the measurement errors have a common distribution (call it group 1 and the other groups 2 and 3). By conditioning on the event that $\{R_{(3)} = R_{(4)} = 1\}$, we can homogenize the conditional distribution of the 3\textsuperscript{rd}-order statistic of measurement errors conditional on the 2\textsuperscript{nd}-order statistic taking the smallest possible value, which allows us to identify the measurement error distribution for group 1. This rests on (i) being able to observe the group identities and (ii) group 1 being large enough to condition on such an event. Provided the measurement errors have a common support, the remaining group distributions can be identified sequentially. For example, the group-2 distribution can be identified by conditioning on $\{R_{(3)} = 2, R_{(4)} = 1\}$.\footnote{If one measurement error has larger support than another, the distribution may not be fully identified. If group-1 measurement errors have support $[0,1]$ but group 2 has support $[0,2]$, regardless of whether one conditions on $\{R_{(3)} = 1, R_{(4)} = 2\}$ or $\{R_{(3)} = 2, R_{(4)} = 1\}$, the 3\textsuperscript{rd}-order statistic only has support $[0,1]$. Thus group-2 measurement error distribution is not identified on $(1,2]$.} \par

\begin{remark}
Corollary~\ref{corollary:identification_noniid} has natural applications in first-price auctions and wage offer settings, where consecutive and extreme order statistics are common and identities are observed. For instance, the Washington State Department of Transportation archives three apparent low bids and bidder identities of six months or older online. FDIC auction data contain the winning and second-highest bids and the associated identities (\cite{allen2023resolving}). U.S.~Forest Service timber auctions only record the top twelve bids and bidder identities regardless of the number of bidders. Lastly, data from the Survey of Consumer Expectations (SCE) Labor Market Survey records salary-related responses on the three best offers for those who received more than three offers within the last four months.
\end{remark}

\begin{remark} \label{remark:identification_noniid}
If \emph{all} dropout bids are observed, Corollary~\ref{corollary:identification_noniid} is applicable to ascending auctions. Specifically, if private values have a common upper bound, i.e., $\sup S(F_{\eps_j}) = \ol{\eps} < +\infty$, the lowest order statistics $(X_{(1)},X_{(2)})$ identify the underlying distributions.
\end{remark}


\section{Nonparametric Estimation} \label{section:estimator_main}

We propose a simulated sieve estimator for the i.i.d.~measurement error framework, which can be easily modified for the i.n.i.d.~case, albeit with the curse of dimensionality. The procedure, motivated by the sieve estimator proposed in \cite{bierens2012semi}, estimates the distribution functions of the latent variable and of the measurement error simultaneously. \par

\subsection{A Simulated Sieve Estimator} \label{subsection:estimator_sub}

We follow closely the development in \cite{bierens2008semi} to specify the parameter space and its sieve space. It has the advantage that we can incorporate prior information on the support without having to choose different orthogonal bases. This is an attractive feature for our purpose because, unlike the latent variable $\xi$, the measurement errors are restricted to be nonnegative-valued. The sieve space is constructed using only Legendre polynomials instead of using, e.g., Hermite polynomials for the latent variable and Laguerre polynomials for the measurement errors. \par

The construction of the sieve space begins with the observation that any absolutely continuous c.d.f.~$F$ on $\mathbb{R}$ can be expressed as $F(\cdot) = (H \circ G)(\cdot)$ where $H$ is an absolutely continuous c.d.f.~on $[0,1]$ and $G$ is an absolutely continuous c.d.f.~that is strictly increasing on $S(F)$. Equivalently,
\begin{align} \label{eq:cdf_representation_G}
  F(\cdot) = \int_0^{G(\cdot)} h(u) \, du = \int_0^{G(\cdot)} \pi^2(u) \, du ,
\end{align}
where $h = \pi^2$ is the density of $H$ and $\pi$ is a Borel-measurable square-integrable function on $[0,1]$.\footnote{The density of $F$ may then also expressed as $f(\cdot) = (h \circ G)(\cdot) g(\cdot)$ where $g$ is the density of $G$.} Thus, e.g., with a fixed $G$ with support $S(G) = [0,\infty)$, a large enough set $\mathcal{P}$ of Borel-measurable square-integrable functions maps via \eqref{eq:cdf_representation_G} to a large enough set of distribution functions with support contained in $[0,\infty)$. \par

The sieve space is constructed by using orthonormal polynomials to approximate $\pi$. \cite{bierens2008semi} considers the following compact set of square-integrable functions
\begin{align} \label{eq:legendre_series_pi}
  \mathcal{P} \coloneqq \left\{ \pi(\cdot) = \frac{1 + \sum_{\ell=1}^\infty \delta_\ell \rho_\ell(\cdot)}{\sqrt{1 + \sum_{\ell=1}^\infty \delta_\ell^2}} : \lvert\delta_\ell\rvert \le \frac{c}{1+\sqrt{\ell}\ln \ell} , \, \ell=1,2,\ldots \right\} ,
\end{align}
for some large constant $c > 0$, where $\rho_k$, defined on $[0,1]$, is a recentered and rescaled Legendre polynomial of order $k$: if $\ell_k$ is a Legendre polynomial of order $k$ on $[-1,1]$, then $\rho_k(u) := \sqrt{2k+1}\ell_k(2u-1)$ for $u \in [0,1]$ (see Section~2.2 in \cite{bierens2008semi}). This implicitly defines a compact parameter space $\mathcal{F}$ for the unknown c.d.f.~$F$ by \eqref{eq:cdf_representation_G} for some fixed $G$ and $\pi \in \mathcal{P}$. The sieve $\{\mathcal{F}_k\}_k$ is constructed by truncating the series in \eqref{eq:legendre_series_pi} at order $k$. Since we estimate two distribution functions $F_\xi$ and $F_\eps$, we consider a product sieve space $\{\mathcal{F}^\xi_k \times \mathcal{F}^\eps_k\}_k$ for two choices of $G$, denoted $G_\xi$ and $G_\eps$. \par

As \cite{bierens2008semi} notes, the c.d.f.~$G$ not only restricts the support but also acts as an initial guess of the unknown c.d.f.\footnote{The flexibility of choosing a base distribution $G$ in \cite{bierens2008semi} is more a norm than an exception in nonparametric methods using orthogonal bases. The standard polynomial sieve on the unit interval may be seen to have the uniform distribution as an initial guess. The Hermite and Laguerre polynomial sieves take normal and exponential distributions as initial guesses, respectively. While these ``initial guesses'' are determined naturally by the weight function associated with the orthogonal bases, the construction in \cite{bierens2008semi} allows for an explicit choice by the researcher.} With prior information on the shape of the distribution functions, an educated initial guess helps reduce the approximation error resulting from low-order sieve spaces. Without any prior information, one clearly cannot expect any \emph{a priori} advantage to choosing a particular distribution. In light of the often-used standard, Hermite, and Laguerre polynomial sieves, it may be reasonable to choose a uniform distribution if the support is known to be contained in a bounded interval, a normal distribution if one is agnostic about the support, and an exponential distribution if the support is known to be contained in $[0,\infty)$. \par

We consider a sieve extremum estimator that minimizes the average (squared) contrast between two empirical ch.f.s: one based on the factual sample and the other based on simulated draws. The population criterion function is devised as follows. For any candidate pair of distribution functions $F = (F_\xi,F_\eps)$, let
\[
  X_{(r,s)}(F)
    = (X_{(r)}(F), X_{(s)}(F))
    := (\xi(F_\xi) + \eps_{(r)}(F_\eps), \xi(F_\xi) + \eps_{(s)}(F_\eps)) ,
\]
where $\xi(F_\xi)$ is a random draw from $F_\xi$ and $\eps_{(j)}(F_\eps)$ is the $j$\textsuperscript{th} order statistic of $n$ i.i.d.~draws from $F_\eps$. For any $t=(t_r,t_s) \in \mathbb{R}^2$, let $\varphi(t;F) \coloneqq \mathbb{E} e^{it^\top X_{(r,s)}(F)}$ denote its ch.f.~where the expectation is induced by the $n+1$ uniform draws in the simulation process described below. We consider the following population criterion function:
\begin{align}
  Q(F)
  \coloneqq \frac{1}{4\kappa^2} \int \I\left\{t \in (-\kappa,\kappa)^2\right\} \left\lvert \psi_{X_{(r,s)}}(t) - \varphi(t;F) \right\rvert^2 \, dt , \label{eq:population_criterion_function}
\end{align}
where $\kappa > 0$ is a tuning parameter that determines the integration region. In place of the box weight $\I\{(t_r,t_s) \in (-\kappa,\kappa)^2\}$, one may also consider a smooth weight. We choose the latter because it has a closed-form expression for the empirical criterion function (see Appendix~\ref{appendix:estimation}).\footnote{While a data-driven choice of $\kappa$ (as well as the polynomial order $k_N$ in Theorem~\ref{theorem:estimator_consistency} below) may be of interest, we do not explore its possibility here.} We define a simulated sieve extremum estimator:
\begin{align}
  \widehat F_N \coloneqq (\widehat F_{\xi,N}, \widehat F_{\eps,N}) \in \argmin_{F \in \mathcal{F}^\xi_{k_N} \times \mathcal{F}^\eps_{k_N}} \widehat{Q}_N(F) , \label{eq:sieve_estimator_def}
\end{align}
where $\{k_N\}$ is an arbitrary sequence of positive integers such that $k_N \rightarrow \infty$ and $\widehat{Q}_N(F)$ is the empirical criterion function with the empirical ch.f.~$\widehat\psi_N$ and the simulated ch.f.~$\widehat\varphi_N(\cdot;F)$ in place of $\psi_{X_{(r,s:n)}}$ and $\varphi(\cdot;F)$, respectively. $\widehat\varphi_N(\cdot;F)$ is constructed from $N$ repeated draws of $(n+1)$ uniform random variables $(V,U_1,\ldots,U_n)$ and computing the inverse transform $X_{(j)}(F_k) = F_{\xi,k}^{-1}(V) + F_{\eps,k}^{-1}(U_{(j)})$ for $j \in \{r,s\}$. \par

We show that the estimator is consistent under the following set of assumptions.
\begin{assumption}[Consistency]
\label{assumption:consistency}
  \leavevmode
  \begin{enumerate}[label=(\alph*)]
    \item $\{(X_{(r),i},X_{(s),i})\}_{i=1,\ldots,N}$ are $N$ i.i.d.~realizations of $(X_{(r)},X_{(s)})$, where $(X_{(r)},X_{(s)})$ has a bounded support.
    \item $G_\xi$ and $G_\eps$ are known absolutely continuous c.d.f.s with support on $\mathbb{R}$ and $[0,\infty)$, respectively.
    \item Given $G_\xi$ and $G_\eps$ in part (b), the pair of true underlying distributions $(F_\xi,F_\eps)$ is in the closure of $\bigcup_k \mathcal{F}^\xi_k \times \mathcal{F}^\eps_k$.
    \item $\{(V_i,U_{1,i},\ldots,U_{n,i})\}_{i=1,\ldots,N}$ are $N$ i.i.d.~draws from $\mathcal{U}(0,1)^{n+1}$, independent of the sampling process.
  \end{enumerate}
\end{assumption}

The support restriction in Assumption~\ref{assumption:consistency}(a) is a sufficient condition to ensure that the measurement errors have light tails. Despite being restrictive, this appears to be the most straightforward assumption to impose on the \emph{observables} to guarantee identification.\footnote{Note that as remarked in Section~\ref{section:model_and_identification}, one may consider alternatives to Assumption~\ref{assumption:iid_case}(b), e.g., (a.e.-)nonvanishing or analytic ch.f.s, and still achieve the same identification result. Thus, one may also consider estimating the ch.f.s over a space of analytic functions and then estimate the densities via inverse Fourier transform. Clearly, various other estimation methods can be considered. For example, one may consider a sequential approach where one estimates the measurement error distribution via \eqref{eq:identify_eps} in the first stage and then estimate the latent variable distribution using nonparametric deconvolution in the second stage. Alternatively, one may consider estimating the densities via a sieve maximum likelihood. We consider the proposed simulated sieve extremum estimator as it both allows to estimate the underlying distributions simultaneously and admits a closed-form objective function.} Note that the researcher does not have to know \emph{a priori} the true support of the observables, nor the implied bounded supports for $F_\xi$ and $F_\eps$. \par

Assumption~\ref{assumption:consistency}(b) restricts the support of the base distribution for $F_\eps$ to be in line with the normalization in Assumption~\ref{assumption:iid_case}(b). Assumption~\ref{assumption:consistency}(c) is a standard assumption that the model is correctly specified. Assumption~\ref{assumption:consistency}(d) specifies the simulation process. Note that the random draws $(V_i,U_{j,i})_{j,i}$ are obtained once and not repeatedly drawn across different candidate parameter values $F_{\xi,k}$ and $F_{\eps,k}$ in order to ensure that the criterion function is continuous with respect to the parameter. \par

\begin{theorem}
\label{theorem:estimator_consistency}
  Let $\kappa > 0$ and let $\{k_N\}_N$ be any sequence of positive integers such that $k_N \rightarrow \infty$. Under Assumptions \ref{assumption:sampling_process}--\ref{assumption:bound} and \ref{assumption:consistency}, the estimator in \eqref{eq:sieve_estimator_def} is strongly uniformly consistent, i.e.,
  \[
  \max\left\{ \lVert \widehat{F}_{\xi,N} - F_\xi \rVert_\infty, \lVert \widehat{F}_{\eps,N} - F_\eps \rVert_\infty \right\} \stackrel{\text{a.s.}}{\longrightarrow} 0 .
  \]
\end{theorem}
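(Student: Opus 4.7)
The plan is to apply a standard sieve extremum consistency argument (as developed in, e.g., \cite{bierens2012semi}), whose hypotheses split into three ingredients: (i) identification---the true $(F_\xi,F_\eps)$ is the unique minimizer of $Q$ over the closure of the sieve; (ii) uniform convergence of $\widehat Q_N$ to $Q$ over each sieve level $\mathcal{F}^\xi_k\times\mathcal{F}^\eps_k$, a.s.; and (iii) compactness of each sieve level, nestedness, and denseness of their union in the parameter space.

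For (i), $Q((F_\xi,F_\eps))=0$ by construction since $\psi_{X_{(r,s)}}=\varphi(\cdot;(F_\xi,F_\eps))$. If $Q(F)=0$ for some other candidate $F$ in the closure of the sieve, then $\varphi(t;F)=\psi_{X_{(r,s)}}(t)$ for all $t\in(-\kappa,\kappa)^2$. Assumption~\ref{assumption:consistency}(a) implies bounded supports, hence light tails, so Lemma~\ref{lemma:appendix_joint_analyticity} makes both ch.f.s jointly analytic on $\mathbb{R}^2$; analytic continuation extends the equality to all of $\mathbb{R}^2$. Thus $(X_{(r)}(F),X_{(s)}(F))\stackrel{d}{=}(X_{(r)},X_{(s)})$, and Theorem~\ref{theorem:identification_iid_case} forces $F=(F_\xi,F_\eps)$.

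For (ii), decompose $\widehat Q_N(F)-Q(F)$ into a bilinear expression in $\widehat\psi_N-\psi_{X_{(r,s)}}$ and $\widehat\varphi_N(\cdot;F)-\varphi(\cdot;F)$ (plus a quadratic contribution from each). The first factor converges uniformly on $(-\kappa,\kappa)^2$ to zero a.s.~by the empirical ch.f.~Glivenko--Cantelli theorem applied to the i.i.d.~sample in Assumption~\ref{assumption:consistency}(a). For the second, $\widehat\varphi_N(t;F)=N^{-1}\sum_i e^{it^\top X_{(r,s),i}(F)}$ is constructed from the fixed simulation draws in Assumption~\ref{assumption:consistency}(d); the SLLN gives pointwise convergence, and uniformity over $t\in(-\kappa,\kappa)^2$ and $F\in\mathcal{F}^\xi_k\times\mathcal{F}^\eps_k$ follows from standard uniform-LLN arguments applied to a class of integrands that is bounded by one and equicontinuous in $(t,F)$---exploiting the compactness of the Legendre coefficient box in \eqref{eq:legendre_series_pi} and the continuity of $F\mapsto F^{-1}$ in this parameterization, which together make $F\mapsto X_{(r,s),i}(F)$ continuous for each simulation draw.

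For (iii), each $\mathcal{F}^\xi_k\times\mathcal{F}^\eps_k$ is compact (the Legendre coefficient box is compact in $\ell^2$), the sieves are nested by truncation, and Assumption~\ref{assumption:consistency}(c) places $(F_\xi,F_\eps)$ in the closure of $\bigcup_k\mathcal{F}^\xi_k\times\mathcal{F}^\eps_k$ (denseness is the Legendre-series approximation result invoked in \cite{bierens2008semi}). Assembling (i)--(iii) via a sieve consistency theorem delivers $\widehat F_N\to(F_\xi,F_\eps)$ in the sieve metric; since coefficient convergence in $\ell^2$ yields $L^2$-convergence of the associated densities $\pi^2$, integrating with respect to the fixed base $G_\xi, G_\eps$ delivers c.d.f.~convergence in sup-norm. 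The main obstacle will be (ii): upgrading the pointwise SLLN for $\widehat\varphi_N$ to a \emph{joint} uniform limit over $t$ and $F$ while properly tracking the two independent sources of randomness (sampling and simulation), and verifying that the quantile map $F\mapsto F^{-1}$ produces the equicontinuity needed in the Legendre parameterization when $G_\xi$ has unbounded support.
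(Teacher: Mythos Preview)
Your proposal is correct and follows essentially the same three-ingredient sieve consistency argument as the paper (identification via Theorem~\ref{theorem:identification_iid_case}, compactness and denseness from the Bierens Legendre construction, and uniform a.s.\ convergence via stochastic equicontinuity of the simulated ch.f.). Two minor clarifications relative to the paper's execution: first, because $k_N\to\infty$ at an arbitrary rate, uniform convergence of $\widehat Q_N$ over each \emph{fixed} sieve level $\mathcal{F}^\xi_k\times\mathcal{F}^\eps_k$ is not quite the right hypothesis---the paper instead proves $\sup_{F\in\mathcal{F}}|\widehat Q_N(F)-Q(F)|\to 0$ a.s.\ over the full compact $\mathcal{F}$ (your equicontinuity argument via the compact $\ell^2$ coefficient box in \eqref{eq:legendre_series_pi} in fact delivers this stronger statement, so the fix is only in the framing); second, the paper isolates continuity of $Q$ on $(\mathcal{F},\|\cdot\|_\infty)$ as a separate step and works directly in the sup-norm throughout, which makes your final metric-upgrading step unnecessary.
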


Steps for implementing the estimator is described in Appendix~\ref{appendix:estimation} and its finite sample performance is illustrated below. As is standard in nonparametric estimation, the choice of the sieve order $k_N$ affects the approximation bias and variance of the estimator. An information-criterion-based approach analogous to that found in \cite{bierens2012semi} may be used to choose the sieve order $k_N$, although the procedure may be computationally intensive. With larger $\kappa$, the estimator is expected to perform better as it accounts for more discrepancy between the two ch.f.s. There appears to be no theoretical reason to restrict $\kappa$ except to ensure that the criterion function is well-defined, but limited simulation suggests the aggregation may come with larger variance. We do not have a useful criterion, but in light of the discussion, one may consider minimizing $\widehat{Q}_N$ over $\kappa$ as well as $F$ with a large upper bound on the parameter space for $\kappa$. From simulation studies, we find that the estimator is less sensitive to the choice of $\kappa$ as long as $\kappa$ is not too small. So we set $\kappa=1$ as its baseline value in this paper, as is done in \cite{bierens2012semi}, and explore its properties in a separate paper. We also leave inference procedures for future research.\footnote{Valid inference procedures exist in similar settings, for example, with independent measurement errors \citep{kato2021robust} or order statistics without unobserved heterogeneity \citep{menzel2013large}. A common feature they tackle is a certain lack of continuity in the inverse problems. Similar irregularity concerns may have to be addressed here.} \par


\subsection{Monte Carlo Evidence} \label{subsection:montecarlo}

To illustrate the performance of the proposed estimator, we conduct a simple Monte Carlo experiment. We begin by describing the data generating process (DGP). For observation $i$, the variable of interest $\xi_i$ is measured $n=3$ times with i.i.d.~measurement errors $\eps_{1,i},\ldots,\eps_{3,i}$. The measurement is constructed as $X_{j,i} = \xi_i + \eps_{j,i}$. We assume that only two order statistics of ranks $r=1$ and $s=2$ remain. That is, only $X_{(1),i}$ and $X_{(2),i}$ are recorded. We repeat the process to obtain $N$ pairs of observations. The experiment is replicated $R = 500$ times to obtain $500$ random samples of size $N$ of the form $\{x_{(1),i}^{(r)},x_{(2),i}^{(r)}\}_{i=1,\ldots,N}$. \par

For the distributions of $\xi_i$ and $\eps_{j,i}$, we set up a design that resembles \cite{hernandez2020estimation}'s application on eBay Motors auctions. Specifically, we use their estimated distributions of unobserved heterogeneity ($F_\xi$ in our set-up) and private values ($F_\varepsilon$ in our set-up) to calibrate the DGP for our simulation exercise. To construct these two distributions, we approximate the estimates in Figure~4 of \cite{hernandez2020estimation} with a sieve of order $6$, which resulted in almost identical distributions to those in the original article. \par

We then simulate data from the DGP and investigate finite-sample performance of our estimator. We consider sample sizes $N = 1000$, $2000$, and $4000$ with $k = 4$, $5$, and $6$, respectively. Note that by construction, there is no sieve approximation error when $N=4000$, i.e., any estimation error is associated only with sampling error. Furthermore, we set the bandwidth $\kappa =1$, $3.14$, and $5$, and the base distribution $G_\xi = \mathcal{N}(0,1/4)$ for $\xi$ and $G_\eps = \mathcal{N}(2,1)_+$ for $\varepsilon$, where $\mathcal{N}(2,1)_+$ denotes the truncated normal between 0 and $\infty$. \par

We present the estimation results for $F_\xi$ and $F_\eps$ with $\kappa=1$ in Figure~\ref{fig:montecarlo_maintext}. Simulation results with $\kappa=3.14$ and $5$ are similar and are presented in Figures~\ref{fig:montecarlo1} and \ref{fig:montecarlo2} in Appendix~\ref{appendix:estimation}. The true distribution functions are shown in black. We also plot some randomly selected estimates along with some box plots that illustrate the pointwise sampling error of $\widehat{F}_\xi$ and $\widehat{F}_\eps$ at various evaluation points. The figure suggest that the estimator performs reasonably well under all three sample sizes, and the performance improves with larger sample size. Although the choice of $\kappa$ does not seem to affect the behavior of the estimator in any ill-behaved manner, there appear to be larger pointwise variance when $\kappa = 3.14$ and $5$ relative to the case when $\kappa = 1$. \par

\begin{figure}[h!]
  \centering
  \begin{subfigure}[!t]{0.33\textwidth}
    \centering
    \includegraphics[width=\textwidth]{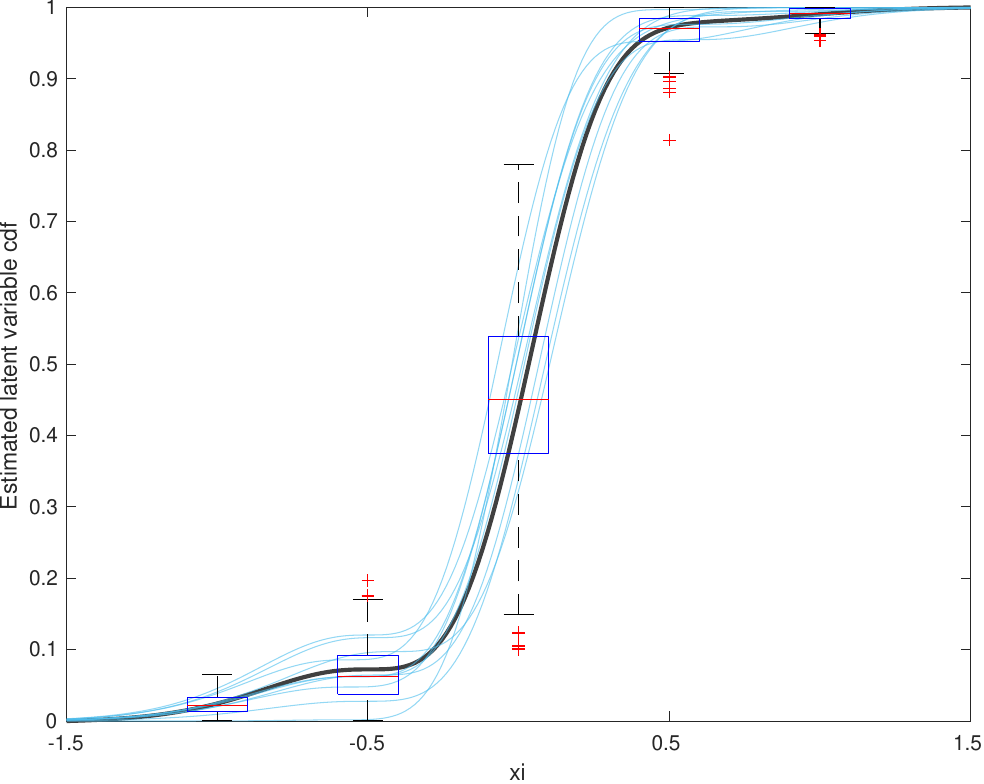}
    \caption*{\emph{Panel 1(a)}}
  \end{subfigure}%
  \hfill%
  \begin{subfigure}[!t]{0.33\textwidth}
    \centering
    \includegraphics[width=\textwidth]{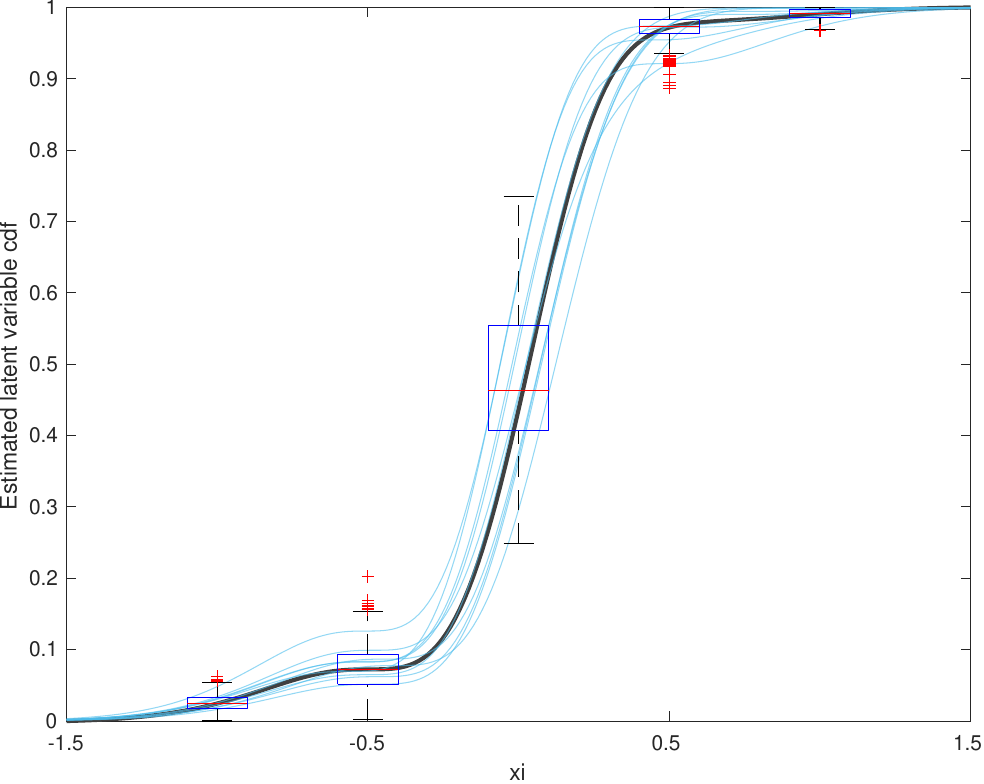}
    \caption*{\emph{Panel 1(b)}}
  \end{subfigure}%
  \hfill%
  \begin{subfigure}[!t]{0.33\textwidth}
    \centering
    \includegraphics[width=\textwidth]{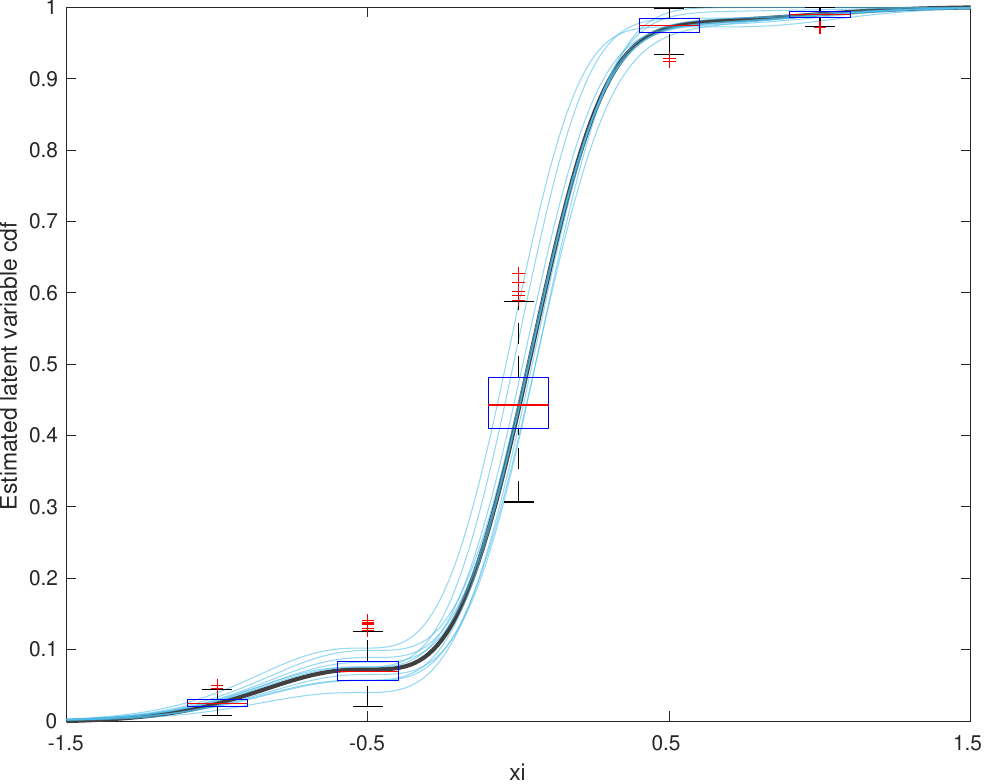}
    \caption*{\emph{Panel 1(c)}}
  \end{subfigure}
  \begin{subfigure}[!t]{0.33\textwidth}
    \centering
    \includegraphics[width=\textwidth]{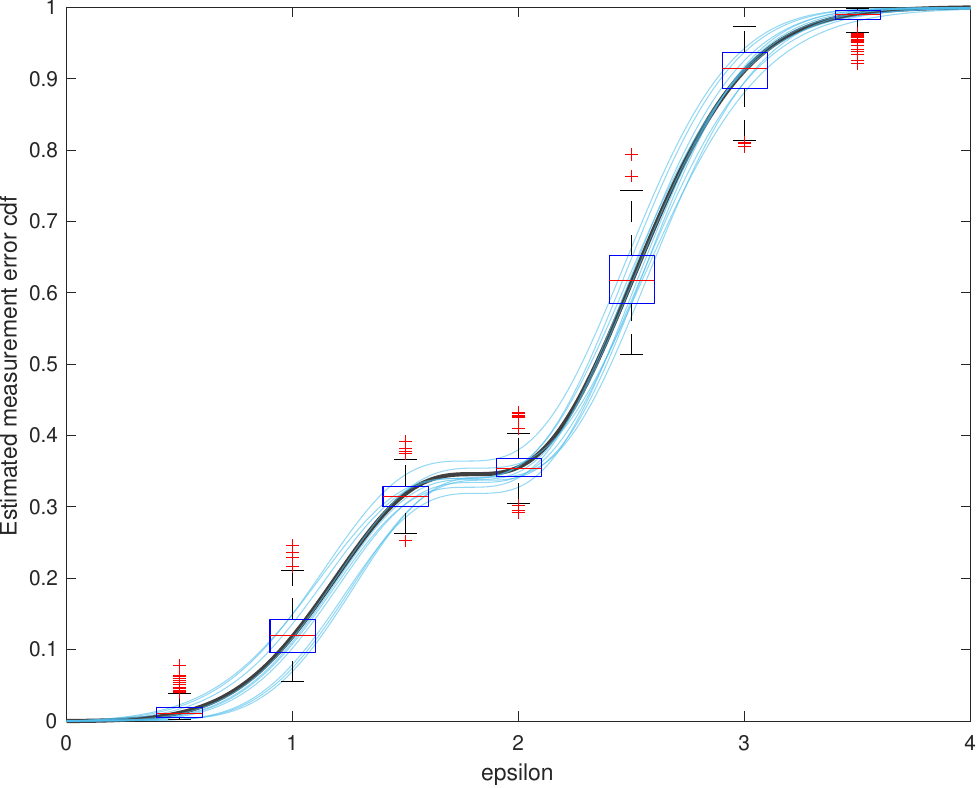}
    \caption*{\emph{Panel 2(a)}}
  \end{subfigure}%
  \hfill%
  \begin{subfigure}[!t]{0.33\textwidth}
    \centering
    \includegraphics[width=\textwidth]{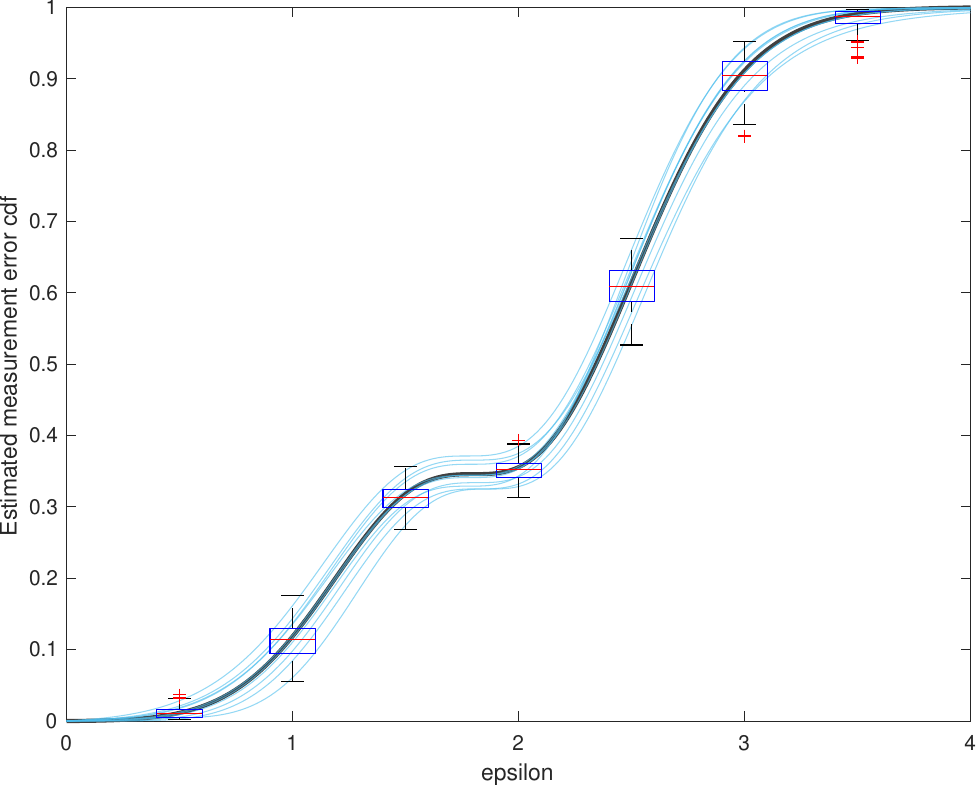}
    \caption*{\emph{Panel 2(b)}}
  \end{subfigure}%
  \hfill%
  \begin{subfigure}[!t]{0.33\textwidth}
    \centering
    \includegraphics[width=\textwidth]{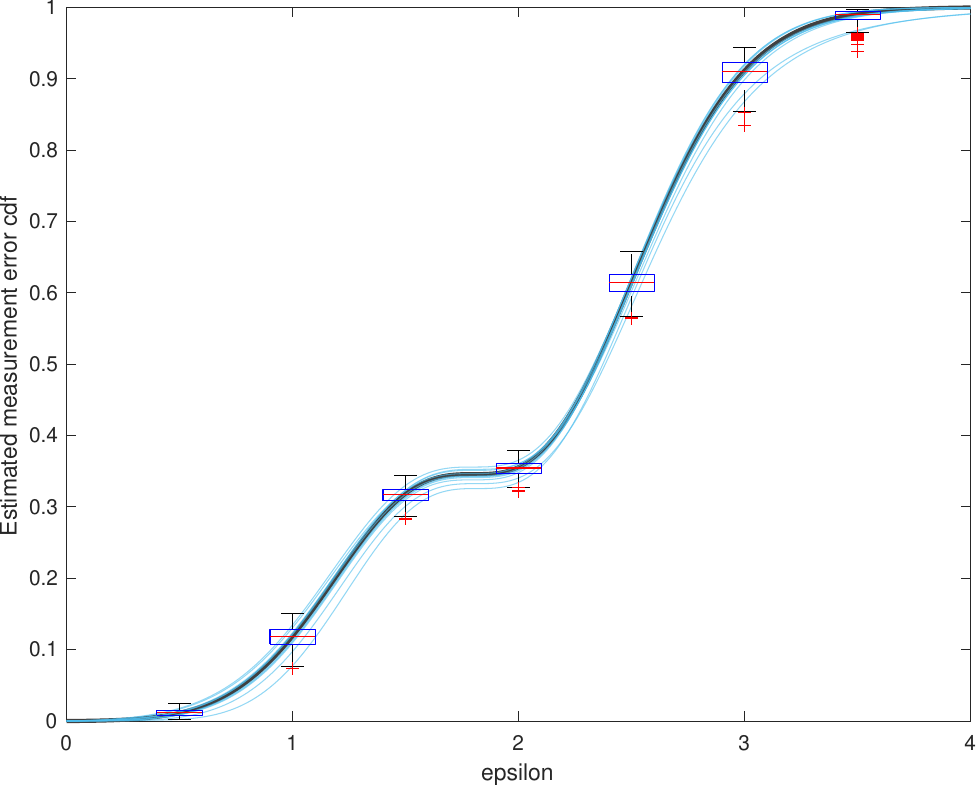}
    \caption*{\emph{Panel 2(c)}}
  \end{subfigure}
  \caption{Monte Carlo simulation results ($\kappa=1$). Panel 1 displays results for $F_\eps$ and Panel 2 for $F_\xi$. Subpanels \emph{(a)}, \emph{(b)}, and \emph{(c)} correspond to sample sizes $N=1000$, $2000$, and $4000$, respectively.} \label{fig:montecarlo_maintext}
\end{figure}


\section{Conclusion} \label{section:conclusion}

This paper shows that distributions of the latent variable and measurement errors are identified nonparametrically under mild assumptions when two or more order statistics are recorded from repeated measurements with independent errors, providing a positive answer to the hypothesis in \cite{athey2002identification} for an ascending auction with unobserved heterogeneity. Our results are also applicable to other applications with unobserved heterogeneity when order statistics are observed, survey data on wage offers being a notable example. More examples include repeated experiments with type II censoring, such as in reliability testing, where consecutive low-order failure times are recorded, and estimating the effects and damages of collusion in auctions, a setting in which \cite{asker2010study} emphasizes the importance of accounting for unobserved heterogeneity. Relatedly, the identification result may be applied to extend the framework in \cite{marmer2017identifying} for testing collusion in ascending auctions. \par


\newpage

\bibliographystyle{apalike}
\bibliography{bibfile}


\newpage

\begin{appendices}

\counterwithin*{equation}{section}
\renewcommand\theequation{\thesection\arabic{equation}}


\section{Proofs} \label{appendix:proofs}


\subsection{Supplementary Results}

\begin{lemma}
\label{lemma:appendix_joint_analyticity}
Let $(\eps_{(r)},\eps_{(s)})$ be $r$\textsuperscript{th} and $s$\textsuperscript{th} order statistics from $\eps_1,\ldots,\eps_n$ which are independent with parent distributions $F_{\eps_1},\ldots,F_{\eps_n}$. If every parent distribution $F_{\eps_j}$ has a density function $f_{\eps_j}$ that is light-tailed (i.e., for some $C_j > 0$, $f_{\eps_j}(\epsilon) = O(e^{-C_j\lvert\epsilon\rvert})$ as $\lvert\epsilon\rvert \to \infty$), then the ch.f.~of order statistics $\psi_{\eps_{(r,s)}} : \mathbb{R}^2 \rightarrow \mathbb{C}$ is (jointly) analytic.
\end{lemma}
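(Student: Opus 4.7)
The plan is to show that $\psi_{\eps_{(r,s)}}$ extends (uniquely) to a holomorphic function on an open tube neighborhood of $\mathbb{R}^2$ inside $\mathbb{C}^2$, from which joint real-analyticity on $\mathbb{R}^2$ follows immediately. The standard template is: if a random vector $(Y_1,Y_2)$ satisfies $\mathbb{E}[e^{a\lvert Y_1\rvert+b\lvert Y_2\rvert}]<\infty$ for some $a,b>0$, then $(z_1,z_2)\mapsto \mathbb{E}[e^{i(z_1 Y_1+z_2 Y_2)}]$ is well-defined and jointly holomorphic on the tube $\{\lvert\operatorname{Im} z_1\rvert<a,\,\lvert\operatorname{Im} z_2\rvert<b\}$; well-definedness uses $\lvert e^{i(z_1 Y_1+z_2 Y_2)}\rvert \le e^{\lvert\operatorname{Im} z_1\rvert\lvert Y_1\rvert+\lvert\operatorname{Im} z_2\rvert\lvert Y_2\rvert}$, and joint holomorphy follows by differentiating under the integral in each variable separately (the derivative is dominated by the same bound with a marginally smaller $a,b$) together with Hartogs' theorem. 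So the task reduces to verifying a bivariate exponential moment for $(\eps_{(r)},\eps_{(s)})$.

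The central step is to avoid the permanent-style formula for the joint density of two order statistics and instead use the elementary envelope $\lvert\eps_{(r)}\rvert\vee\lvert\eps_{(s)}\rvert \le M := \max_{1\le j\le n}\lvert\eps_j\rvert$, which holds because every order statistic coincides with one of the $\eps_j$'s. Hence, for any $a,b\ge 0$,
\[
\mathbb{E}\bigl[e^{a\lvert\eps_{(r)}\rvert+b\lvert\eps_{(s)}\rvert}\bigr] \;\le\; \mathbb{E}\bigl[e^{(a+b)M}\bigr] \;=\; \mathbb{E}\Bigl[\max_{1\le j\le n} e^{(a+b)\lvert\eps_j\rvert}\Bigr] \;\le\; \sum_{j=1}^{n}\mathbb{E}\bigl[e^{(a+b)\lvert\eps_j\rvert}\bigr].
\]
The one-dimensional moments $\mathbb{E}[e^{c\lvert\eps_j\rvert}]$ are finite for every $c<C_j$: split the integral at some $\epsilon_0$ where the tail bound $f_{\eps_j}(\epsilon)\le K_j e^{-C_j\lvert\epsilon\rvert}$ kicks in; the outer piece is dominated by the integrable $K_j e^{-(C_j-c)\lvert\epsilon\rvert}$, and the inner piece is bounded by $e^{c\epsilon_0}$ times a probability. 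Setting $C := \min_j C_j > 0$ and choosing any $a,b>0$ with $a+b<C$, the displayed bound is finite, so $\psi_{\eps_{(r,s)}}$ extends holomorphically to the (nonempty) tube $\{(z_r,z_s)\in\mathbb{C}^2:\lvert\operatorname{Im} z_r\rvert+\lvert\operatorname{Im} z_s\rvert<C\}$, delivering joint real-analyticity on $\mathbb{R}^2$.

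I do not anticipate a substantive obstacle. The only slightly bookkeeping-heavy part is the translation from a finite mixed exponential moment to a holomorphic extension on a tube, which is handled in any complex-analysis text via Morera plus Fubini or by direct differentiation under the integral. Observe that the argument never invokes identical distribution of the $\eps_j$'s, so it covers both the i.i.d.\ setting (where it feeds into Lemma~\ref{lemma:rationalizability}) and the i.n.i.d.\ setting (where it feeds into the proof of Theorem~\ref{theorem:identification_noniid_group}) with a single stroke.
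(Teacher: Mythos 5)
Your proof is correct, and it reaches the conclusion by a genuinely more elementary route than the paper's. Where the paper bounds the joint density of $(\eps_{(r)},\eps_{(s)})$ by a constant times $\sum_{k,\ell} f_{\eps_k}(\epsilon_r) f_{\eps_\ell}(\epsilon_s)$ (invoking the permanent-type formula for the joint density of two order statistics) and then establishes separate analyticity by exhibiting an explicit sequence of entire truncated integrals $\psi^*_m$ converging uniformly on compacta, you sidestep the density of the order statistics entirely via the envelope $\lvert\eps_{(r)}\rvert\vee\lvert\eps_{(s)}\rvert \le \max_j\lvert\eps_j\rvert$, reduce everything to the one-dimensional exponential moments $\mathbb{E}[e^{c\lvert\eps_j\rvert}]<\infty$ for $c<C_j$, and then apply the standard ``finite exponential moment in a strip $\Rightarrow$ holomorphic extension to a tube'' template. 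Both arguments pass through Hartogs' theorem for the separate-to-joint step, and both yield analyticity on a tube of essentially the same width (your constraint $\lvert\Im z_r\rvert+\lvert\Im z_s\rvert<C$ versus the paper's $\max_j\lvert\Im z_j\rvert<C_0$; either tube contains $\mathbb{R}^2$, which is all that is needed). What your approach buys is brevity and robustness: it needs no closed form for the joint law of the order statistics, so it transparently covers the i.n.i.d.\ case and would extend verbatim to any measurable functions of $(\eps_1,\ldots,\eps_n)$ dominated by $\max_j\lvert\eps_j\rvert$. What the paper's more hands-on construction buys is that the same density decomposition is reused elsewhere (e.g.\ in Lemmas~\ref{lemma:appendix_conditional_cdf_1} and \ref{lemma:appendix_conditional_cdf_q}), so the bookkeeping is not wasted. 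The one step you compress --- differentiation under the integral with the derivative dominated on a marginally smaller tube --- is standard and unproblematic given the uniform integrable majorant $M e^{(a+b)M}$.
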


\begin{proof}[Proof of Lemma~\ref{lemma:appendix_joint_analyticity}]
Let $C_0 < \min_{1 \le j \le n} C_j$ be a positive constant and define $\Omega = \{z=(z_r,z_s) \in \mathbb{C}^2 : \lvert \Im z_j \rvert < C_0 , \, j \in \{r,s\}\}$ be an open set in $\mathbb{C}^2$, where $\Im z_j$ denotes the imaginary part of $z_j$. Consider
\[ \psi^*_{\eps_{(r,s)}} : z \in \Omega \mapsto \int_{\mathbb{R}^2} e^{iz^\top\epsilon} \, dF_{\eps_{(r,s)}}(\epsilon) . \]
To show that $\psi_{\eps_{(r,s)}}$ is analytic on $\mathbb{R}^2$, it suffices to show that $\psi^*_{\eps_{(r,s)}}$ is analytic on the open set $\Omega$ (since $\mathbb{R}^2 \subset \Omega$). By Hartogs' theorem on separate analyticity (cf.~\cite{hormander1973introduction}, Theorem~2.2.8), it suffices to show that $\psi^*_{\eps_{(r,s)}}$ is separately analytic on a strip $\{z_r \in \mathbb{C} : \lvert\Im z_r\rvert < C_0\}$ for any fixed value of $z_s$, and vice versa on a strip for $z_s$ for any fixed value of $z_r$. The remainder of the proof shows that $\psi^*_{\eps_{(r,s)}}$ is (1) indeed well-defined on $\Omega$ and (2) separately analytic with respect to each variable. \par

For any complex vector $z \in \mathbb{C}^d$, let $\Re z \in \mathbb{R}^d$ denote the real part of the vector and $\Im z \in \mathbb{R}^d$ the imaginary part. To show that $\psi^*_{\eps_{(r,s)}}$ is well-defined on $\Omega$, it suffices to show that $\int_{\mathbb{R}^2} e^{-\Im z^\top\epsilon} f_{\eps_{(r,s)}}(\epsilon) \, d\epsilon < \infty$ for any $z \in \Omega$ since
\begin{align*}
  \psi^*_{\eps_{(r,s)}}(z) = \int_{\mathbb{R}^2} e^{-\Im z^\top\epsilon}e^{i\Re z^\top\epsilon} f_{\eps_{(r,s)}}(\epsilon) \, d\epsilon .
\end{align*}
From the definition of the joint density $f_{\eps_{(r,s)}}$ (see, e.g., (5.2.8) in \cite{david2003order}), there exists a positive constant $K_{n,r,s}$ such that for every $\epsilon = (\epsilon_r,\epsilon_s) \in \mathbb{R}^2$,
\begin{align*}
  f_{\eps_{(r,s)}}(\epsilon) \le K_{n,r,s} \sum_{1 \le k,\ell \le n} f_{\eps_k}(\epsilon_r) f_{\eps_\ell}(\epsilon_s) .
\end{align*}
where by assumption, $f_{\eps_k}(\epsilon) = O(e^{-C_k\lvert\epsilon\rvert})$ as $\lvert\epsilon\rvert \rightarrow \infty$ for every $k$. Hence, it follows from the fact $\lvert \Im z_r \rvert < C_0$ and $\lvert \Im z_s \rvert < C_0$ in $\Omega$ that
\begin{align*}
  \int_{\mathbb{R}^2} e^{-\Im z^\top\epsilon} f_{\eps_{(r,s)}}(\epsilon) \, d\epsilon \le K_{n,r,s} \sum_{1 \le k,\ell \le n} \int_{\mathbb{R}^2} e^{-\Im z_r\epsilon_r} f_{\eps_k}(\epsilon_r) e^{-\Im z_s\epsilon_s} f_{\eps_\ell}(\epsilon_s) \, d\epsilon < \infty,
\end{align*}
which concludes that $\psi^*_{\eps_{(r,s)}}$ is well-defined on $\Omega$. \par

Now we show that $\psi^*_{\eps_{(r,s)}}(\cdot,z_s)$ is analytic on $\Omega_r = \{z_r \in \mathbb{C} : \lvert\Im z_r\rvert < C_0\}$ for every $z_s$ in the domain. By Theorem~5.2 in \cite{stein2010complex}, it suffices to construct a sequence of analytic functions $\{\psi^*_m\}_m$ that converges uniformly to $\psi^*_{\eps_{(r,s)}}(\cdot,z_s)$ in every compact subset of $\Omega_r$. We show that the sequence
\begin{align*}
  \psi^*_m(z_r) = \int_{[-m,m]^2} e^{i(z_r\epsilon_r+z_s\epsilon_s)} \, dF_{\eps_{(r,s)}}(\epsilon)
\end{align*}
satisfies the criteria above. Since the region of integration is bounded, it follows from the dominated convergence theorem that for every $m$,
\begin{align*}
  \psi^*_m(z_r) = \int_{[-m,m]^2} \sum_{\ell=0}^\infty \frac{(iz^\top\epsilon)^\ell}{\ell!} \, dF_{\eps_{(r,s)}}(\epsilon)
  = \sum_{\ell=0}^\infty \int_{[-m,m]^2} \frac{(iz^\top\epsilon)^\ell}{\ell!} \, dF_{\eps_{(r,s)}}(\epsilon)
\end{align*}
is entire on the complex plane and thus analytic on $\Omega_r \subset \mathbb{C}$. Further,
\begin{align*}
  \left\lvert \psi^*_{\eps_{(r,s)}}(z_r,z_s)-\psi^*_m(z_r) \right\rvert
    &\le \int_{\mathbb{R}^2\backslash[-m,m]^2} \left\lvert e^{iz^\top\epsilon} \right\rvert \, dF_{\eps_{(r,s)}}(\epsilon) \\
    &= \int_{\mathbb{R}^2\backslash[-m,m]^2} e^{-\Im z^\top\epsilon} \, dF_{\eps_{(r,s)}}(\epsilon) \\
    &\le K_{n,r,s} \sum_{1 \le k,\ell \le n} \int_{\mathbb{R}^2\backslash[-m,m]^2} e^{-\Im z^\top\epsilon} f_{\eps_k}(\epsilon_r) f_{\eps_\ell}(\epsilon_s) \, d\epsilon \\
    &\le K_{n,r,s} \sum_{1 \le k,\ell \le n} \int_{\mathbb{R} \times (\mathbb{R}\backslash[-m,m])} e^{-\Im z^\top\epsilon} f_{\eps_k}(\epsilon_r) f_{\eps_\ell}(\epsilon_s) \, d\epsilon \\
    &\qquad + K_{n,r,s} \sum_{1 \le k,\ell \le n} \int_{(\mathbb{R}\backslash[-m,m]) \times \mathbb{R}} e^{-\Im z^\top\epsilon} f_{\eps_k}(\epsilon_r) f_{\eps_\ell}(\epsilon_s) \, d\epsilon .
\end{align*}
Since $f_{\eps_k}(\epsilon_r) = O(e^{-C_k\lvert\epsilon_r\rvert})$ for every $k$ by assumption and $\lvert \Im z_r \rvert < C_0$, for any compact subset $D \subset \Omega_r$, we have
\begin{align*}
  \sup_{z_r \in D} K_k(z_r)
    \coloneqq \sup_{z_r \in D} \int_{\mathbb{R}} e^{-\Im z_r\epsilon_r} f_{\eps_k}(\epsilon_r) \, d\epsilon_r
    \le \int_{\mathbb{R}} e^{\sup_{z_r \in D} \lvert\Im z_r\rvert\epsilon_r} f_{\eps_k}(\epsilon_r) \, d\epsilon_r
    < \infty .
\end{align*}
In addition, for $m$ sufficiently large (independent of $z_r$), there exists a constant $K_{\ell,0}$ such that
\begin{align*}
  &\int_{\mathbb{R} \times (\mathbb{R}\backslash[-m,m])} e^{-\Im z^\top\epsilon} f_{\eps_k}(\epsilon_r) f_{\eps_\ell}(\epsilon_s) \, d\epsilon \\
    &\quad \le K_k(z_r) K_0 \int_{\mathbb{R}\backslash[-m,m]} e^{-C_0\lvert\epsilon_s\rvert-\Im z_s\epsilon_s} \, d\epsilon_s . \\
    &\quad = K_k(z_r) K_0 \left( \int_{-\infty}^{-k} e^{-(C_0-\Im z_s)\lvert\epsilon_s\rvert} \, d\epsilon_s + \int_{k}^{\infty} e^{-(C_0+\Im z_s)\lvert\epsilon_s\rvert} \, d\epsilon_s \right)
    \longrightarrow 0 ,
\end{align*}
as $m \rightarrow \infty$. Similarly, we have
\begin{align*}
  \int_{(\mathbb{R}\backslash[-m,m]) \times \mathbb{R}} e^{-\Im z^\top\epsilon} f_{\eps_k}(\epsilon_r) f_{\eps_\ell}(\epsilon_s) \, d\epsilon \le K_\ell(z_r) K_{k,0} \int_{\mathbb{R}\backslash[-m,m]} e^{-C_0\lvert\epsilon_r\rvert-\Im z_r\epsilon_r} \, d\epsilon_r ,
\end{align*}
where, as $m \rightarrow \infty$,
\begin{align*}
  &\sup_{z_r \in D} \int_{\mathbb{R}\backslash[-m,m]} e^{-C_0\lvert\epsilon_r\rvert-\Im z_r\epsilon_r} \, d\epsilon_r \longrightarrow 0 .
\end{align*}
Therefore, we conclude that $\{\psi^*_m\}_m$ converges uniformly to $\psi^*_{\eps_{(r,s)}}(\cdot,z_s)$ on any compact subset of $\Omega_r$ for any fixed $z_s$ in the domain. Therefore, $\psi^*_{\eps_{(r,s)}}(\cdot,z_s)$ is analytic on $\Omega_r$. An analogous proof shows that $\psi^*_{\eps_{(r,s)}}(z_r,\cdot)$ is analytic on $\Omega_s = \{z_s \in \mathbb{C} : \lvert\Im z_s\rvert < C_0\}$ for any fixed $z_r$ in the domain. \par

This concludes the proof that $\psi_{\eps_{(r,s)}}$ is (jointly) analytic on $\Omega = \{ z = (z_r,z_s) \in \mathbb{C}^2 : z_r \in \Omega_r, z_s \in \Omega_s \}$.
\end{proof}


\begin{lemma}
\label{lemma:appendix_conditional_order_statistic}
Suppose the measurement errors satisfy Assumption~\ref{assumption:iid_case}(a) and \ref{assumption:bound}. Define, for every $\epsilon_s \in \mathbb{R}$ and $\epsilon_r > 0$,
\[
  F_{\eps_{(s\vert r)}}(\epsilon_s;\epsilon_r) = \frac{F_{\eps_{(r,s)}}(\epsilon_r,\epsilon_s)}{F_{\eps_{(r)}}(\epsilon_r)} .
\]
Then $\lim_{\epsilon_r \downarrow 0} F_{\eps_{(s\vert r)}}(\cdot;\epsilon_r) = F_{\eps_{s-r:n-r}}(\cdot)$, where the latter denotes the distribution of the $(s-r)$\textsuperscript{th} order statistic of a random sample with size $n-r$ from $F_\eps$.
\end{lemma}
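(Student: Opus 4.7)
The plan is to combine the classical representation of the conditional distribution of a higher-order statistic given a lower-order statistic with a weighted-average-plus-continuity argument, so that the ratio in question becomes a weighted average that concentrates on the target CDF as $\epsilon_r \downarrow 0$.

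First, I invoke the standard result (e.g., Theorem~2.4.2 in \cite{arnold2008first}) that, conditional on $\eps_{(r)} = u$, the variable $\eps_{(s)}$ is distributed as the $(s-r)$\textsuperscript{th} order statistic of $n-r$ i.i.d.~draws from $F_\eps$ left-truncated at $u$. Denote this conditional CDF by $F_{s-r:n-r}^{(u)}(\cdot)$; explicitly, it is the usual binomial-sum CDF of an order statistic evaluated at $F_\eps^{(u)}(v) = [F_\eps(v)-F_\eps(u)]/[1-F_\eps(u)]$. Conditioning on $\eps_{(r)}$ and integrating then give, for any $\epsilon_s > \epsilon_r > 0$,
\[
F_{\eps_{(r,s)}}(\epsilon_r,\epsilon_s) = \int_0^{\epsilon_r} f_{\eps_{(r)}}(u)\, F_{s-r:n-r}^{(u)}(\epsilon_s)\,du, \qquad F_{\eps_{(r)}}(\epsilon_r) = \int_0^{\epsilon_r} f_{\eps_{(r)}}(u)\,du.
\]
The support assumption $0 \in S(F_\eps)$ forces $F_\eps(\epsilon_r) > 0$ for every $\epsilon_r > 0$, so $F_{\eps_{(r)}}(\epsilon_r) > 0$ and the ratio $F_{\eps_{(s\vert r)}}(\epsilon_s;\epsilon_r)$ is well-defined. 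Dividing, this ratio becomes a probability-weighted average of the family $\{F_{s-r:n-r}^{(u)}(\epsilon_s) : u \in (0,\epsilon_r]\}$ with weights $f_{\eps_{(r)}}(u)/F_{\eps_{(r)}}(\epsilon_r)$.

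Next I pass to the limit via continuity. The map $u \mapsto F_{s-r:n-r}^{(u)}(\epsilon_s)$ is continuous in $u$ at $u=0$ because $F_\eps$ is continuous there with $F_\eps(0)=0$ (absolute continuity from Assumption~\ref{assumption:iid_case}(a) together with Assumption~\ref{assumption:bound}), and in particular $F_{s-r:n-r}^{(0)}(\epsilon_s) = F_{\eps_{s-r:n-r}}(\epsilon_s)$. Sandwiching the weighted average between $\inf_{u \in [0,\epsilon_r]} F_{s-r:n-r}^{(u)}(\epsilon_s)$ and $\sup_{u \in [0,\epsilon_r]} F_{s-r:n-r}^{(u)}(\epsilon_s)$ and letting $\epsilon_r \downarrow 0$ delivers the claimed pointwise limit for every $\epsilon_s > 0$. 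The case $\epsilon_s \le 0$ is trivial since both sides vanish by absolute continuity of $F_\eps$ and nonnegativity of $\eps_j$.

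I expect no real obstacle; the argument is almost entirely bookkeeping. The only subtle point is to avoid requiring $f_\eps(0) > 0$, which the assumptions do not guarantee. The proof above never divides by $f_\eps(0)$ or $f_{\eps_{(r)}}(0)$; it only divides by $F_{\eps_{(r)}}(\epsilon_r)$, which the support condition forces to be strictly positive. Thus possible vanishing of the density at the lower support boundary causes no difficulty.
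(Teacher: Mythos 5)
Your proof is correct, but it takes a genuinely different route from the paper's. The paper works directly with the closed-form multinomial/binomial expressions for $F_{\eps_{(r,s)}}(\epsilon_r,\epsilon_s)$ and $F_{\eps_{(r)}}(\epsilon_r)$, rewrites the ratio as a mixture indexed by the number $j$ of errors below $\epsilon_r$, and shows by an explicit computation that the mixing weights $\binom{n}{j}F_\eps(\epsilon_r)^j(1-F_\eps(\epsilon_r))^{n-j}\big/\sum_{\ell\ge r}\binom{n}{\ell}F_\eps(\epsilon_r)^\ell(1-F_\eps(\epsilon_r))^{n-\ell}$ concentrate on $j=r$ as $\epsilon_r\downarrow 0$, leaving exactly the binomial sum defining $F_{\eps_{s-r:n-r}}$. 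You instead invoke the Markov property of order statistics (conditional on $\eps_{(r)}=u$, the higher order statistics are those of $n-r$ draws from $F_\eps$ left-truncated at $u$), disintegrate the joint CDF over the density of $\eps_{(r)}$, and then run a sandwich argument using continuity of $u\mapsto F^{(u)}_{s-r:n-r}(\epsilon_s)$ at $u=0$, where $F_\eps(0)=0$ ensures the truncated law collapses back to $F_\eps$. All the ingredients you use are available under Assumptions~\ref{assumption:iid_case}(a) and \ref{assumption:bound} (absolute continuity gives the Markov property and a density for $\eps_{(r)}$; the support condition gives $F_{\eps_{(r)}}(\epsilon_r)>0$ and the boundary cases), and you correctly note that no positivity of $f_\eps(0)$ is needed. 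What each approach buys: yours is shorter and more conceptual, essentially formalizing the heuristic the paper gives after Lemma~\ref{lemma:limit_of_conditional_distribution}, and it isolates continuity at the boundary as the only analytic fact needed; the paper's bare-hands computation avoids conditioning on a zero-probability event altogether and, more importantly, sets up a template that is reused almost verbatim in the i.n.i.d.\ extensions (Lemmas~\ref{lemma:appendix_conditional_cdf_1} and \ref{lemma:appendix_conditional_cdf_q} and Step~1--2 of Theorem~\ref{theorem:identification_noniid_group}), where the clean truncation form of the Markov property is no longer directly available.
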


\begin{proof}
When $\epsilon_s \le 0$, clearly $\lim_{\epsilon_r \downarrow 0} F_{\eps_{(s\vert r)}}(\epsilon_s;\epsilon_r) = F_{\eps_{s-r:n-r}}(\epsilon_s) = 0$ by Assumption~\ref{assumption:bound}. Thus, fix any $\epsilon_s > 0$ and consider small enough $\epsilon_r$ such that $\epsilon_s > \epsilon_r \downarrow 0$. \par

It follows from standard results (e.g., see (2.1.3) and (2.2.4) in \cite{david2003order}) that the joint and marginal distribution functions may be expressed as
\begin{align*}
  F_{\eps_{(s\vert r)}}(\epsilon_s;\epsilon_r)
    = \frac{\sum_{k=s}^n \sum_{j=r}^k C^n_{j,k-j,n-k} F_\eps(\epsilon_r)^j(F_\eps(\epsilon_s)-F_\eps(\epsilon_r))^{k-j}(1-F_\eps(\epsilon_s))^{n-k}}{\sum_{\ell=r}^n C^n_\ell F_\eps(\epsilon_r)^\ell(1-F_\eps(\epsilon_r))^{n-\ell}} ,
\end{align*}
where $C^n_{j,k-j,n-k}$ and $C^n_\ell$ are multinomial and binomial coefficients, respectively. Observe that
\begin{align*}
  F_{\eps_{(s\vert r)}}(\epsilon_s;\epsilon_r)
    &= \frac{\sum_{j=r}^n \sum_{k=\max(s,j)}^n C^n_{j,k-j,n-k} F_\eps(\epsilon_r)^j(F_\eps(\epsilon_s)-F_\eps(\epsilon_r))^{k-j}(1-F_\eps(\epsilon_s))^{n-k}}{\sum_{\ell=r}^n C^n_\ell F_\eps(\epsilon_r)^\ell(1-F_\eps(\epsilon_r))^{n-\ell}} \\
    &= \sum_{j=r}^n \frac{C^n_j F_\eps(\epsilon_r)^j(1-F_\eps(\epsilon_r))^{n-j}}{\sum_{\ell=r}^n C^n_\ell F_\eps(\epsilon_r)^\ell(1-F_\eps(\epsilon_r))^{n-\ell}} \\
    &\hspace{0.2\textwidth} \sum_{k=\max(s,j)}^n C^{n-j}_{k-j} \left(\frac{F_\eps(\epsilon_s)-F_\eps(\epsilon_r)}{1-F_\eps(\epsilon_r)}\right)^{k-j}\left(\frac{1-F_\eps(\epsilon_s)}{1-F_\eps(\epsilon_r)}\right)^{n-k} \\
    &= \sum_{j=r}^n \frac{C^n_j F_\eps(\epsilon_r)^j(1-F_\eps(\epsilon_r))^{n-j}}{\sum_{\ell=r}^n C^n_\ell F_\eps(\epsilon_r)^\ell(1-F_\eps(\epsilon_r))^{n-\ell}} \\
    &\hspace{0.2\textwidth} \sum_{k=\max(s-j,0)}^{n-j} C^{n-j}_{k} \left(\frac{F_\eps(\epsilon_s)-F_\eps(\epsilon_r)}{1-F_\eps(\epsilon_r)}\right)^k\left(\frac{1-F_\eps(\epsilon_s)}{1-F_\eps(\epsilon_r)}\right)^{n-j-k} .
\end{align*}
As $\epsilon_r \downarrow 0$, the weight component vanishes for all but $j=r$. In particular,
\begin{align*}
  \frac{C^n_j F_\eps(\epsilon_r)^j(1-F_\eps(\epsilon_r))^{n-j}}{\sum_{\ell=r}^n C^n_\ell F_\eps(\epsilon_r)^\ell(1-F_\eps(\epsilon_r))^{n-\ell}}
    & = \left(\sum_{\ell=r}^n \frac{C^n_\ell}{C^n_j} \left(\frac{F_\eps(\epsilon_r)}{1-F_\eps(\epsilon_r)}\right)^{\ell-j}\right)^{-1} \\
    & = \left(\sum_{\ell=r-j}^{n-j} \frac{C^n_{\ell+j}}{C^n_j} \left(\frac{F_\eps(\epsilon_r)}{1-F_\eps(\epsilon_r)}\right)^{\ell}\right)^{-1}
    \longrightarrow \begin{cases} 1 &\text{if } j = r , \\ 0 &\text{if } j \neq r . \end{cases}
\end{align*}
This implies that for any $\epsilon_s > 0$,
\begin{align*}
  \lim_{\epsilon_r \downarrow 0} F_{\eps_{(s\vert r)}}(\epsilon_s;\epsilon_r)
    = \sum_{k=s-r}^{n-r} C^{n-r}_{k} F_\eps(\epsilon_s)^k(1-F_\eps(\epsilon_s))^{n-r-k}
    = F_{\eps_{s-r:n-r}}(\epsilon_s) .
\end{align*}
Therefore, we conclude that $\lim_{\epsilon_r \downarrow 0} F_{\eps_{(s\vert r)}}(\cdot;\epsilon_r) = F_{\eps_{s-r:n-r}}(\cdot)$ on $\mathbb{R}$.
\end{proof}

\begin{remark}
The conditional distribution $F_{\eps_{(s\vert r)}}(\epsilon_s;\epsilon_r)$ is well-defined for any $\epsilon_r > 0$ since $F_{\eps_{(r)}}(\epsilon_r) > 0$ under the support normalization in Assumption~\ref{assumption:bound}, but $F_{\eps_{(r)}}(0)=0$. Lemma \ref{lemma:appendix_conditional_order_statistic} allows one to define $F_{\eps_{(s\vert r)}}(\cdot;0)$ by a continuous extension from above.
\end{remark}


Suppose there exists a known group structure for the measurement errors as in Assumption~\ref{assumption:group_structure}, where $n_q$ denotes the size of group $q$. Without loss of generality, let $(\eps_1,\ldots,\eps_n)$ be ordered such that the first $n_1$ random variables are from group $1$, next $n_2$ random variables from group $2$, etc. Further without loss of generality, let $g_1$ be a group with at least $n-r$ members, i.e., $n_1 \ge n-r$. Let $R_{(j)} = \{q : X_{(j)} = X_k \text{ for some } k \in g_q\}$ be the group identity of the $j$\textsuperscript{th} order statistic. We abuse notation and use $R_{(j)}$ to denote both the set and the a.s.~unique element in $\{1,\ldots,p\}$. Further, let $E_{q}$ denote the event where the top $n-r$ order statistics are all from group $1$ except for the $s$\textsuperscript{th} order statistic, which belongs to group $q$ (where it may be that $q=1$), i.e.,
\begin{align*}
  E_{q} = \{ R_{(r+1)}=1, \ldots, R_{(s-1)}=1, R_{(s)}=q, R_{(s+1)}=1, \ldots, R_{(n)}=1 \} .
\end{align*}
In the following two lemmas, we derive the distribution of order statistics conditional on the event $E_1$ and $E_q$ ($q \neq 1$), respectively.

\begin{lemma}
\label{lemma:appendix_conditional_cdf_1}
Let $(\eps_{(r)},\eps_{(s)})$ be order statistics from an independent but nonidentically distributed sample $(\eps_1,\ldots,\eps_n) \sim \times_{j=1}^n F_{\eps_j}$. Let $\zeta$ be the maximum of $\{\eps_{n-r+1},\ldots,\eps_n\}$. Then
\begin{align*}
  &\mathbb{P}(\eps_{(r)} \le \epsilon_r, \eps_{(s)} \le \epsilon_s \,\vert\, E_1) \\
  &\qquad\propto \sum_{k=s}^n \sum_{j=r}^k C^{n_1}_{j-r,k-j,n-k} \mathbb{E}_{\zeta} \Bigl(\I\{\zeta \le \epsilon_r\} (F_{\eps_1}(\epsilon_r)-F_{\eps_1}(\zeta))^{j-r} \Bigr) (F_{\eps_1}(\epsilon_s)-F_{\eps_1}(\epsilon_r))^{k-j} (1-F_{\eps_1}(\epsilon_s))^{n-k} ,
\end{align*}
and
\begin{align*}
  \mathbb{P}(\eps_{(r)} \le \epsilon_r \,\vert\, E_1) \propto \sum_{j=r}^n C^{n_1}_{j-r,n-j} \mathbb{E}_{\zeta} \Bigl( \I\{\zeta \le \epsilon_r\} (F_{\eps_1}(\epsilon_r)-F_{\eps_1}(\zeta))^{j-r} \Bigr) (1-F_{\eps_1}(\epsilon_r))^{n-j} .
\end{align*}
\end{lemma}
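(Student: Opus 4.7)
The plan is to compute both probabilities by conditioning on $\zeta$ and then performing a multinomial expansion over the group-1 observations at indices $1, \ldots, n-r$. The key structural fact is that, by Assumption~\ref{assumption:sampling_process} and the labeling convention (the first $n_1 \ge n-r$ indices are group-1), the observations $\eps_1, \ldots, \eps_{n-r}$ are all group-1, hence i.i.d.\ from $F_{\eps_1}$, and are independent of $(\eps_{n-r+1}, \ldots, \eps_n)$ and therefore of $\zeta$. This independence is what makes the inner probability tractable via the tower property.

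Operationally, I would decompose the event $\{\eps_{(r)} \le \epsilon_r,\, \eps_{(s)} \le \epsilon_s\} \cap E_1$ by counting how many group-1 observations fall into each of the intervals $(\zeta, \epsilon_r]$, $(\epsilon_r, \epsilon_s]$, and $(\epsilon_s, \infty)$. Under $E_1$ the top $n-r$ order statistics are precisely the group-1 observations exceeding $\zeta$, so writing $j-r$ for the count in $(\zeta, \epsilon_r]$, $k-j$ for the count in $(\epsilon_r, \epsilon_s]$, and $n-k$ for the count in $(\epsilon_s, \infty)$, the requirement $\eps_{(s)} \le \epsilon_s$ becomes $k \ge s$ and the requirement $\eps_{(r)} \le \epsilon_r$ becomes $\zeta \le \epsilon_r$ (at least when $n_1 = n-r$, so that $\eps_{(r)} = \zeta$ under $E_1$). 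Conditional on $\zeta = z$, the probability of the decomposed event is a multinomial probability with cell probabilities $F_{\eps_1}(\epsilon_r) - F_{\eps_1}(z)$, $F_{\eps_1}(\epsilon_s) - F_{\eps_1}(\epsilon_r)$, and $1 - F_{\eps_1}(\epsilon_s)$. Summing over feasible $(j,k)$ with coefficient $C^{n_1}_{j-r,\,k-j,\,n-k}$ and taking the outer expectation over $\zeta$ yields the stated joint formula. The marginal formula for $\mathbb{P}(\eps_{(r)} \le \epsilon_r \mid E_1)$ then follows by sending $\epsilon_s \to +\infty$: the factor $(1-F_{\eps_1}(\epsilon_s))^{n-k}$ annihilates every term with $n-k>0$, collapsing the double sum to the single sum in $j$ with binomial coefficient $C^{n_1}_{j-r,\,n-j}$.

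The principal obstacle is the case $n_1 > n-r$: then some group-1 observations appear at indices $n-r+1, \ldots, n_1$ and feed into $\zeta$ together with the non-group-1 observations, so the bottom $r$ order statistics can include group-1 values. In that regime the clean identification $\{\eps_{(r)} \le \epsilon_r\} = \{\zeta \le \epsilon_r\}$ under $E_1$ must be re-examined, and the coefficient $C^{n_1}_{j-r,\,k-j,\,n-k}$ in the stated formula leaves an implicit residual class of size $n_1 - (n-r)$ to accommodate group-1 observations that fall below $\zeta$. Verifying that these residual contributions are absorbed correctly into the expectation over $\zeta$---so that no extra factor of $F_{\eps_1}(\zeta)^{n_1 - n + r}$ survives---requires a careful joint accounting of $\zeta$ and the identities of the bottom-$r$ order statistics. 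This bookkeeping is the only non-mechanical step in what is otherwise a direct multinomial expansion.
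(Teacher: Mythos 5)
Your strategy is essentially the paper's: decompose the event by the cell counts $(j-r,\,k-j,\,n-k)$ of upper-block group-1 observations falling in $(\zeta,\epsilon_r]$, $(\epsilon_r,\epsilon_s]$, and $(\epsilon_s,\infty)$, condition on the maximum of the bottom block, and exploit independence via a multinomial expansion; your derivation of the marginal by sending $\epsilon_s\to+\infty$ is a legitimate shortcut for the paper's ``analogous derivation.'' However, the step you defer as ``the only non-mechanical step''---the bookkeeping when $n_1>n-r$---is precisely the content of the lemma, and you leave it unexecuted, so as written there is a genuine gap. There is also a conflation in your description: ``the top $n-r$ order statistics are the group-1 observations exceeding $\zeta$'' treats the fixed index set $\{n-r+1,\ldots,n\}$ defining $\zeta$ as if it were the (random) identity of the bottom-$r$ block, which it is not; the event $E_1$ does not pin down \emph{which} observations occupy the bottom ranks.

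The gap closes as follows, and no stray factor $F_{\eps_1}(\zeta)^{\,n_1-(n-r)}$ survives. Partition $\{\eps_{(r)}\le\epsilon_r,\,\eps_{(s)}\le\epsilon_s\}\cap E_1$ according to which labelled observations occupy the four blocks (bottom $r$ ranks and the three upper cells). Under $E_1$ every upper-block member is group-1, so an admissible assignment places all non-group-1 observations together with some $n_1-(n-r)$ residual group-1 observations in the bottom block and distributes the remaining $n-r$ group-1 observations over the three cells; there are exactly $C^{n_1}_{j-r,k-j,n-k}$ such assignments. For a fixed assignment, the only constraint coupling the two blocks is that every upper-block member exceed the \emph{maximum} of the bottom block---the residual group-1 members are never separately required to lie below anything; they merely contribute to that maximum. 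Conditioning on that maximum, which is independent of the upper-block members, yields the integrand $\I\{\cdot\le\epsilon_r\}(F_{\eps_1}(\epsilon_r)-F_{\eps_1}(\cdot))^{j-r}(F_{\eps_1}(\epsilon_s)-F_{\eps_1}(\epsilon_r))^{k-j}(1-F_{\eps_1}(\epsilon_s))^{n-k}$. Finally, by exchangeability within group 1 the law of the bottom-block maximum is the same for every admissible assignment and coincides with the law of $\zeta=\max\{\eps_{n-r+1},\ldots,\eps_n\}$, since that index set contains all non-group-1 members and exactly $n_1-(n-r)$ group-1 members under the labelling convention. Summing over assignments gives the stated joint formula (your alternative of fixing one representative assignment and multiplying by the count is equivalent, since the count factors as $\binom{n_1}{n-r}\binom{n-r}{j-r,k-j,n-k}$ and the first factor is absorbed into the proportionality constant). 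With this supplied, your argument is complete and agrees with the paper's.
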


\begin{proof}
Note that
\begin{align*}
  \{ \eps_{(r)} \le \epsilon_r, \eps_{(s)} \le \epsilon_s , E_1 \}
    = \bigcup_{k=s}^n \bigcup_{j=r}^k \{ \eps_{(j)} \le \epsilon_r, \eps_{(j+1)} > \epsilon_r, \eps_{(k)} \le \epsilon_s, \eps_{(k+1)} > \epsilon_s , E_1 \} .
\end{align*}
Let $\mathbb{S}_{j,k}$ be a collection of reordered vectors of $(1,\ldots,n)$ where each vector $\sigma \in \mathbb{S}_{j,k}$ uniquely partition $\{1,\ldots,n\}$ in the sense that
\[
  \{1,\ldots,n\} = \{\sigma_1,\ldots,\sigma_r\} \cup \{\sigma_{r+1},\ldots,\sigma_j\} \cup \{\sigma_{j+1},\ldots,\sigma_k\} \cup \{\sigma_{k+1},\ldots,\sigma_n\} ,
\]
and $\sigma_{\ell} \in g_1$ for all $\ell \ge r+1$. In words, $\sigma$ divides group $1$ so that all top $n-r$ order statistics belong to group $1$ and are in the ranges $(-\infty,\epsilon_r]$, $(\epsilon_r,\epsilon_s]$, and $(\epsilon_s,\infty)$, respectively. The remaining members of group $1$ and all other groups are associated with the lowest $r$ order statistics. Then, we have
\begin{align*}
  &\{ \eps_{(r)} \le \epsilon_r, \eps_{(s)} \le \epsilon_s , E_1 \} \\
  &\qquad= \bigcup_{k=s}^n \bigcup_{j=r}^k \bigcup_{\sigma \in \mathbb{S}_{j,k}} \{\eps_{\sigma_{(r)}} \le \epsilon_r, \eps_{\sigma_{(r)}} < \eps_{r+1}^j \le \epsilon_r, \epsilon_r < \eps_{j+1}^k \le \epsilon_s, \epsilon_s < \eps_{k+1}^n\} ,
\end{align*}
where $\eps_{\sigma_{(r)}} = \max_{1 \le j \le r} \eps_{\sigma_r}$ and $\eps_{\ell}^{m}$ is a shorthand for the $\eps_{\sigma_\ell},\ldots,\eps_{\sigma_m}$. Denote by $\mathbb{E}_{\sigma_{(r)}}$ the expectation with respect to $\eps_{\sigma_{(r)}}$, which combines information of the distributions of all other groups. It follows that
\begin{align*}
  &\mathbb{P}(\eps_{(r)} \le \epsilon_r, \eps_{(s)} \le \epsilon_s , E_1) \\
  &\qquad= \sum_{k=s}^n \sum_{j=r}^k \sum_{\sigma \in \mathbb{S}_{j,k}} \mathbb{E}_{\sigma_{(r)}}\mathbb{P}(\eps_{\sigma_{(r)}} \le \epsilon_r, \eps_{\sigma_{(r)}} < \eps_{r+1}^j \le \epsilon_r, \epsilon_r < \eps_{j+1}^k \le \epsilon_s, \epsilon_s < \eps_{k+1}^n \,\vert\, \eps_{\sigma_{(r)}}) \\
  &\qquad= \sum_{k=s}^n \sum_{j=r}^k \sum_{\sigma \in \mathbb{S}_{j,k}} \mathbb{E}_{\sigma_{(r)}} \Bigl( \I\{\eps_{\sigma_{(r)}} \le \epsilon_r\} (F_{\eps_1}(\epsilon_r)-F_{\eps_1}(\eps_{\sigma_{(r)}}))^{j-r} (F_{\eps_1}(\epsilon_s)-F_{\eps_1}(\epsilon_r))^{k-j} (1-F_{\eps_1}(\epsilon_s))^{n-k} \Bigr) \\
  &\qquad= \sum_{k=s}^n \sum_{j=r}^k C^{n_1}_{j-r,k-j,n-k} \mathbb{E}_{\sigma_{(r)}} \Bigl( \I\{\eps_{\sigma_{(r)}} \le \epsilon_r\} (F_{\eps_1}(\epsilon_r)-F_{\eps_1}(\eps_{\sigma_{(r)}}))^{j-r} \Bigr) \\
  &\hspace{0.6\textwidth} (F_{\eps_1}(\epsilon_s)-F_{\eps_1}(\epsilon_r))^{k-j} (1-F_{\eps_1}(\epsilon_s))^{n-k} ,
\end{align*}
where $C^{n_1}_{j-r,k-j,n-k} = n_1!/[(n_1-(n-j))!(j-r)!(k-j)!(n-k)!]$ is the multinomial coefficient equal to the size of $\mathbb{S}_{j,k}$ (the number of ways to classify members of group $1$ to four sets that partition $\{1,\ldots,n\}$). The last equality holds because the distribution of $\eps_{\sigma_{(r)}}$ is invariant with respect to $\sigma$. This proves the original statement for the joint distribution in the lemma since $\zeta =_d \eps_{\sigma_{(r)}}$ because $\{\eps_{n-r+1},\ldots,\eps_n\}$ consists of all measurement errors except $r$ members from group $1$ as is the definition of $\eps_{\sigma_{(r)}}$. \par

An analogous derivation shows that
\begin{align*}
  \mathbb{P}(\eps_{(r)} \le \epsilon_r , E_1)
  = \sum_{j=r}^n C^{n_1}_{j-r,n-j} \mathbb{E}_{\sigma_{(r)}} \Bigl( \I\{\eps_{\sigma_{(r)}} \le \epsilon_r\} (F_{\eps_1}(\epsilon_r)-F_{\eps_1}(\eps_{\sigma_{(r)}}))^{j-r} \Bigr) (1-F_{\eps_1}(\epsilon_r))^{n-j} .
\end{align*}
\end{proof}


\begin{lemma}
\label{lemma:appendix_conditional_cdf_q}
Let $(\eps_{(r)},\eps_{(s)})$ be order statistics from an independent but nonidentically distributed sample $(\eps_1,\ldots,\eps_n) \sim \times_{j=1}^n F_{\eps_j}$. If $q \neq 1$, for $m \in g_q$ and $\zeta=(\zeta_r,\zeta_s)$ where $\zeta_r$ is the maximum of $\{\eps_{n-r},\ldots,\eps_n\} \backslash \{\eps_m\}$ and $\zeta_s=\eps_m$,
\begin{align*}
  &\mathbb{P}(\eps_{(r)} \le \epsilon_r, \eps_{(s)} \le \epsilon_s \,\vert\, E_q) \\
  &\qquad\propto \sum_{k=s}^n \sum_{j=r}^{s-1} n_q C^{n_1}_{j-r,s-j-1,k-s,n-k} \mathbb{E}_{\zeta} \Bigl( \I\{\zeta_r \le \epsilon_r < \zeta_s \le \epsilon_s\} (F_{\eps_1}(\epsilon_r)-F_{\eps_1}(\zeta_r))^{j-r} \\
  &\hspace{0.45\textwidth} (F_{\eps_1}(\zeta_s)-F_{\eps_1}(\epsilon_r))^{s-j-1} (F_{\eps_1}(\epsilon_s)-F_{\eps_1}(\zeta_s))^{k-s} \Bigr) (1-F_{\eps_1}(\epsilon_s))^{n-k} \\
  &\qquad\qquad+ \sum_{k=s}^n \sum_{j=s}^k n_q C^{n_1}_{s-r-1,j-s,k-j,n-k} \mathbb{E}_{\zeta} \Bigl( \I\{\zeta_r \le \zeta_s \le \epsilon_r\} (F_{\eps_1}(\zeta_s)-F_{\eps_1}(\zeta_r))^{s-r-1} \\
  &\hspace{0.5\textwidth} (F_{\eps_1}(\epsilon_r)-F_{\eps_1}(\zeta_s))^{j-s} \Bigr) (F_{\eps_1}(\epsilon_s)-F_{\eps_1}(\epsilon_r))^{k-j} (1-F_{\eps_1}(\epsilon_s))^{n-k} ,
\end{align*}
and
\begin{align*}
  &\mathbb{P}(\eps_{(r)} \le \epsilon_r \,\vert\, E_q) \\
  &\qquad\propto \sum_{j=r}^{s-1} n_q C^{n_1}_{j-r,s-j-1,n-s} \mathbb{E}_{\zeta} \Bigl( \I\{\zeta_r \le \epsilon_r < \zeta_s\} (F_{\eps_1}(\epsilon_r)-F_{\eps_1}(\zeta_r))^{j-r} \\
  &\hspace{0.5\textwidth} (F_{\eps_1}(\zeta_s)-F_{\eps_1}(\epsilon_r))^{s-j-1} (1-F_{\eps_1}(\zeta_s))^{n-s} \Bigr) \\
  &\qquad\qquad+ \sum_{j=s}^n n_q C^{n_1}_{s-r-1,j-s,n-j} \mathbb{E}_{\zeta} \Bigl( \I\{\zeta_r \le \zeta_s \le \epsilon_r\} (F_{\eps_1}(\zeta_s)-F_{\eps_1}(\zeta_r))^{s-r-1} \\
  &\hspace{0.55\textwidth} (F_{\eps_1}(\epsilon_r)-F_{\eps_1}(\zeta_s))^{j-s} \Bigr) (1-F_{\eps_1}(\epsilon_r))^{n-j} .
\end{align*}
\end{lemma}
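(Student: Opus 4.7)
The plan is to mirror the combinatorial argument in the proof of Lemma~\ref{lemma:appendix_conditional_cdf_1}, with the additional feature that the $s$-th order statistic now comes from a group $q \ne 1$. First, I would decompose the joint event $\{\eps_{(r)} \le \epsilon_r, \eps_{(s)} \le \epsilon_s, E_q\}$ by stratifying over pairs $(j,k)$ with $r \le j \le k$, where $j$ counts the number of measurements at or below $\epsilon_r$ and $k$ counts those at or below $\epsilon_s$:
\[
  \{ \eps_{(r)} \le \epsilon_r, \eps_{(s)} \le \epsilon_s , E_q \}
    = \bigcup_{k=s}^n \bigcup_{j=r}^k \{ \eps_{(j)} \le \epsilon_r, \eps_{(j+1)} > \epsilon_r, \eps_{(k)} \le \epsilon_s, \eps_{(k+1)} > \epsilon_s , E_q \} .
\]
Because precisely one slot (position $s$) is occupied by a group-$q$ member $\eps_m$, I would additionally index by $m \in g_q$, which introduces the multiplicative factor $n_q$ by symmetry within group $q$.

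Next, I would split on the location of $\zeta_s := \eps_m$ relative to $\epsilon_r$: \emph{Case A} is $j \ge s$ (equivalently $\zeta_s \le \epsilon_r$), and \emph{Case B} is $j < s$ with $k \ge s$ (equivalently $\zeta_s \in (\epsilon_r, \epsilon_s]$). For fixed $m$ and $(j,k)$, the relevant permutations $\sigma$ of $\{1,\ldots,n\}$ assign all non-group-$1$ members except $\eps_m$ together with the residual $n_1 - (n-r-1)$ group-$1$ members to the bottom $r$ slots $\sigma_1,\ldots,\sigma_r$, place $m$ at $\sigma_s$, and distribute the remaining $n-r-1$ group-$1$ members across the sub-intervals determined by $\zeta_r := \max\{\eps_{\sigma_1},\ldots,\eps_{\sigma_r}\}$, $\zeta_s$, $\epsilon_r$, and $\epsilon_s$. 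A direct count gives $n_q$ times the multinomial coefficient $C^{n_1}_{s-r-1,\,j-s,\,k-j,\,n-k}$ in Case A and $n_q\, C^{n_1}_{j-r,\,s-j-1,\,k-s,\,n-k}$ in Case B, with the four block sizes summing to $n-r-1$ as required.

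Conditioning on $\zeta=(\zeta_r,\zeta_s)$ (which are independent because $\eps_m$ is independent of all other $\eps_j$'s) and using the mutual independence of the group-$1$ members assigned to the four sub-intervals, the residual conditional probability factors into a product of four expressions of the form $(F_{\eps_1}(b) - F_{\eps_1}(a))^{\cdot}$, with a Case-A or Case-B indicator on $\zeta$ enforcing the correct ordering of $\zeta_r,\zeta_s,\epsilon_r,\epsilon_s$. Taking expectation over $\zeta$ and summing over $(j,k)$ produces exactly the two displayed expressions in the statement (up to the normalizing factor $\mathbb{P}(E_q)$, which is independent of $\epsilon_r$ and $\epsilon_s$). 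The marginal $\mathbb{P}(\eps_{(r)} \le \epsilon_r \,\vert\, E_q)$ follows from the same argument with the $\epsilon_s$-cutoff removed, which merges the two top sub-intervals into a single interval (above $\zeta_s$ in Case A, above $\epsilon_r$ in Case B) and collapses the four-block multinomial into the three-block coefficient $C^{n_1}_{s-r-1,\,j-s,\,n-j}$ (Case A) or $C^{n_1}_{j-r,\,s-j-1,\,n-s}$ (Case B).

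The main obstacle is the combinatorial bookkeeping: correctly enumerating the permutations compatible with $E_q$ and verifying the block sizes and exponents in both cases, including the boundary situations ($j=s$, $j=s-1$, or $k=s$) in which one of the sub-intervals becomes empty and the corresponding factor degenerates to $1$. Once the partition is set up correctly, the probabilistic content is inherited from the same independence and multinomial arguments used in Lemma~\ref{lemma:appendix_conditional_cdf_1}, with the only genuinely new ingredient being the factor $n_q$ and the $\zeta_s$-dependent splitting of the relevant sub-intervals.
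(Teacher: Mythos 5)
Your proposal matches the paper's proof essentially step for step: the same decomposition over $(j,k)$, the same split into the two cases according to whether $\eps_{(s)}$ (equivalently $\zeta_s=\eps_m$) falls below or above $\epsilon_r$, the same count $n_q$ times a multinomial coefficient for the admissible assignments of the $n-r-1$ remaining group-$1$ members, and the same conditioning on $(\zeta_r,\zeta_s)$ followed by independence to factor the sub-interval probabilities. The identification of $\zeta_r$ with the maximum over the bottom-$r$ slots and the collapse to a three-block coefficient for the marginal are also exactly as in the paper, so the argument is correct and not a genuinely different route.
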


\begin{proof}
Note that
\begin{align*}
  \{ \eps_{(r)} \le \epsilon_r, \eps_{(s)} \le \epsilon_s , E_q \}
    &= \bigcup_{k=s}^n \bigcup_{j=r}^k \{ \eps_{(j)} \le \epsilon_r, \eps_{(j+1)} > \epsilon_r, \eps_{(k)} \le \epsilon_s, \eps_{(k+1)} > \epsilon_s , E_q \} \\
    &= \left(\bigcup_{k=s}^n \bigcup_{j=r}^{s-1} \{ \eps_{(j)} \le \epsilon_r, \eps_{(j+1)} > \epsilon_r, \eps_{(k)} \le \epsilon_s, \eps_{(k+1)} > \epsilon_s , E_q \}\right) \\
    &\qquad \bigcup \left(\bigcup_{k=s}^n \bigcup_{j=s}^{k} \{ \eps_{(j)} \le \epsilon_r, \eps_{(j+1)} > \epsilon_r, \eps_{(k)} \le \epsilon_s, \eps_{(k+1)} > \epsilon_s , E_q \}\right) .
\end{align*}
The set is partitioned into two parts where $\eps_{(s)} > \epsilon_r$ (in the first part of the partition) and where $\eps_{(s)} \le \epsilon_r$ (in the second part of the partition). We partition the event into two different events because, as will be evident in the derivations below, the functional form for the probability of these events differs depending on the location of $\eps_{(s)}$ relative to $\epsilon_r$. \par

For $j < s$, let $\mathbb{S}^1_{j,k}$ be a collection of reordered vectors of $(1,\ldots,n)$ where each vector $\sigma \in \mathbb{S}_{j,k}$ uniquely partition $\{1,\ldots,n\}$ in the sense that
\[
  \{1,\ldots,n\} = \{\sigma_\ell\}_{\ell=1}^r \cup \{\sigma_\ell\}_{\ell=r+1}^j \cup \{\sigma_\ell\}_{\ell=j+1}^{s-1} \cup \{\sigma_s\} \cup \{\sigma_\ell\}_{\ell=s+1}^{k} \cup \{\sigma_\ell\}_{\ell=k+1}^{n} ,
\]
where $\sigma_s \in g_{q}$ and $\sigma_{\ell} \in g_1$ for all $\ell \ge r+1$ such that $\ell \neq s$. In words, $\sigma$ divides group $1$ so that all top $n-r$ order statistics except for the $s$\textsuperscript{th} belong to group $1$ and are in the ranges $(-\infty,\epsilon_r]$, $(\epsilon_r,\eps_{\sigma_s}]$ $(\eps_{\sigma_s},\epsilon_s]$, and $(\epsilon_s,\infty)$, respectively. The $s$\textsuperscript{th} order statistic $\eps_{\sigma_s}$ is in $(\epsilon_r,\epsilon_s]$ (because $j < s \le k$), and the remaining members are associated with the lowest $r$ order statistics. \par

Similarly, for $j \ge s$, let $\mathbb{S}^2_{j,k}$ be a collection of reordered vectors of $(1,\ldots,n)$ where each vector $\sigma \in \mathbb{S}_{j,k}$ uniquely partition $\{1,\ldots,n\}$ in the sense that
\[
  \{1,\ldots,n\} = \{\sigma_\ell\}_{\ell=1}^r \cup \{\sigma_\ell\}_{\ell=r+1}^{s-1} \cup \{\sigma_s\} \cup \{\sigma_\ell\}_{\ell=s+1}^{j} \cup \{\sigma_\ell\}_{\ell=j+1}^{k} \cup \{\sigma_\ell\}_{\ell=k+1}^{n} ,
\]
where $\sigma_s \in g_{q}$ and $\sigma_{\ell} \in g_1$ for all $\ell \ge r+1$ such that $\ell \neq s$. In words, $\sigma$ divides group $1$ so that all top $n-r$ order statistics except for the $s$\textsuperscript{th} belong to group $1$ and are in the ranges $(-\infty,\eps_{\sigma_s}]$, $(\eps_{\sigma_s},\epsilon_r]$ $(\epsilon_r,\epsilon_s]$, and $(\epsilon_s,\infty)$, respectively. The $s$\textsuperscript{th} order statistic $\eps_{\sigma_s}$ is in $(-\infty,\epsilon_r]$ (because $s \le j$), and the remaining members are associated with the lowest $r$ order statistics. \par

It follows that
\begin{align*}
  &\{ \eps_{(r)} \le \epsilon_r, \eps_{(s)} \le \epsilon_s , E_q \} \\
  &\qquad= \Biggl(\bigcup_{k=s}^n \bigcup_{j=r}^{s-1} \bigcup_{\sigma \in \mathbb{S}^1_{j,k}} \{\eps_{\sigma_{(r)}} \le \epsilon_r, \eps_{\sigma_{(r)}} < \eps_{r+1}^j \le \epsilon_r, \epsilon_r < \eps_{j+1}^{s-1} \le \eps_{\sigma_s}, \\
  &\hspace{0.45\textwidth} \epsilon_r < \eps_{\sigma_s} \le \epsilon_s, \eps_{\sigma_s} \le \eps_{s+1}^k \le \epsilon_s, \epsilon_s < \eps_{k+1}^n\}\Biggr) \\
  &\qquad\qquad \bigcup \Biggl(\bigcup_{k=s}^n \bigcup_{j=s}^k \bigcup_{\sigma \in \mathbb{S}^2_{j,k}} \{\eps_{\sigma_{(r)}} \le \eps_{\sigma_s}, \eps_{\sigma_{(r)}} < \eps_{r+1}^{s-1} \le \eps_{\sigma_s}, \eps_{\sigma_s} \le \epsilon_r, \\
  &\hspace{0.5\textwidth} \eps_{\sigma_s} < \eps_{s+1}^{j} \le \epsilon_r, \epsilon_r < \eps_{j+1}^k \le \epsilon_s, \epsilon_s < \eps_{k+1}^n\}\Biggr) ,
\end{align*}
where $\eps_{\sigma_{(r)}} = \max_{1 \le j \le r} \eps_{\sigma_r}$ and $\eps_{\ell}^{m}$ is a shorthand for the $\eps_{\sigma_\ell},\ldots,\eps_{\sigma_m}$. Denote by $\mathbb{E}_{\sigma_{(r),s}}$ the expectation with respect to $(\eps_{\sigma_{(r)}},\eps_{\sigma_s})$. Then, we have
\begin{align*}
  &\mathbb{P}(\eps_{(r)} \le \epsilon_r, \eps_{(s)} \le \epsilon_s , E_q) \\
  &\qquad= \sum_{k=s}^n \sum_{j=r}^{s-1} \sum_{\sigma \in \mathbb{S}^1_{j,k}} \mathbb{E}_{\sigma_{(r),s}} \Bigl( \I\{\eps_{\sigma_{(r)}} \le \epsilon_r < \eps_{\sigma_s} \le \epsilon_s\} (F_{\eps_1}(\epsilon_r)-F_{\eps_1}(\eps_{\sigma_{(r)}}))^{j-r} \\
  &\hspace{0.35\textwidth} (F_{\eps_1}(\eps_{\sigma_s})-F_{\eps_1}(\epsilon_r))^{s-j-1} (F_{\eps_1}(\epsilon_s)-F_{\eps_1}(\eps_{\sigma_s}))^{k-s} (1-F_{\eps_1}(\epsilon_s))^{n-k} \Bigr) \\
  &\qquad\qquad+ \sum_{k=s}^n \sum_{j=s}^k \sum_{\sigma \in \mathbb{S}^2_{j,k}} \mathbb{E}_{\sigma_{(r),s}} \Bigl( \I\{\eps_{\sigma_{(r)}} \le \eps_{\sigma_s} \le \epsilon_r\} (F_{\eps_1}(\eps_{\sigma_s})-F_{\eps_1}(\eps_{\sigma_{(r)}}))^{s-r-1} \\
  &\hspace{0.4\textwidth} (F_{\eps_1}(\epsilon_r)-F_{\eps_1}(\eps_{\sigma_s}))^{j-s} (F_{\eps_1}(\epsilon_s)-F_{\eps_1}(\epsilon_r))^{k-j} (1-F_{\eps_1}(\epsilon_s))^{n-k} \Bigr) \\
  &\qquad= \sum_{k=s}^n \sum_{j=r}^{s-1} n_q C^{n_1}_{j-r,s-j-1,k-s,n-k} \mathbb{E}_{\sigma_{(r),s}} \Bigl( \I\{\eps_{\sigma_{(r)}} \le \epsilon_r < \eps_{\sigma_s} \le \epsilon_s\} (F_{\eps_1}(\epsilon_r)-F_{\eps_1}(\eps_{\sigma_{(r)}}))^{j-r} \\
  &\hspace{0.35\textwidth} (F_{\eps_1}(\eps_{\sigma_s})-F_{\eps_1}(\epsilon_r))^{s-j-1} (F_{\eps_1}(\epsilon_s)-F_{\eps_1}(\eps_{\sigma_s}))^{k-s} \Bigr) (1-F_{\eps_1}(\epsilon_s))^{n-k} \\
  &\qquad\qquad+ \sum_{k=s}^n \sum_{j=s}^k n_q C^{n_1}_{s-r-1,j-s,k-j,n-k} \mathbb{E}_{\sigma_{(r),s}} \Bigl( \I\{\eps_{\sigma_{(r)}} \le \eps_{\sigma_s} \le \epsilon_r\} (F_{\eps_1}(\eps_{\sigma_s})-F_{\eps_1}(\eps_{\sigma_{(r)}}))^{s-r-1} \\
  &\hspace{0.4\textwidth} (F_{\eps_1}(\epsilon_r)-F_{\eps_1}(\eps_{\sigma_s}))^{j-s} \Bigr) (F_{\eps_1}(\epsilon_s)-F_{\eps_1}(\epsilon_r))^{k-j} (1-F_{\eps_1}(\epsilon_s))^{n-k} .
\end{align*}
The last equality holds because the distribution of $(\eps_{\sigma_{(r)}},\eps_{\sigma_s})$ is invariant with respect to $\sigma$. This proves the original statement for the joint distribution in the lemma since $\zeta_r =_d \eps_{\sigma_{(r)}}$ because $\{\eps_{n-r},\ldots,\eps_n\} \backslash \{\eps_m\}$ consists of all measurement errors except $r$ members from group $1$ and $1$ member from group $q$---as is the definition of $\eps_{\sigma_{(r)}}$---and $\zeta_s =_d \eps_m \sim F_q$. \par

An analogous derivation shows that
\begin{align*}
  &\mathbb{P}(\eps_{(r)} \le \epsilon_r , E_q) \\
  &\qquad= \sum_{j=r}^{s-1} n_q C^{n_1}_{j-r,s-j-1,n-s} \mathbb{E}_{\sigma_{(r),s}} \Bigl( \I\{\eps_{\sigma_{(r)}} \le \epsilon_r < \eps_{\sigma_s}\} (F_{\eps_1}(\epsilon_r)-F_{\eps_1}(\eps_{\sigma_{(r)}}))^{j-r} \\
  &\hspace{0.55\textwidth} (F_{\eps_1}(\eps_{\sigma_s})-F_{\eps_1}(\epsilon_r))^{s-j-1} (1-F_{\eps_1}(\eps_{\sigma_s}))^{n-s} \Bigr) \\
  &\qquad\qquad+ \sum_{j=s}^n n_q C^{n_1}_{s-r-1,j-s,n-j} \mathbb{E}_{\sigma_{(r),s}} \Bigl( \I\{\eps_{\sigma_{(r)}} \le \eps_{\sigma_s} \le \epsilon_r\} (F_{\eps_1}(\eps_{\sigma_s})-F_{\eps_1}(\eps_{\sigma_{(r)}}))^{s-r-1} \\
  &\hspace{0.6\textwidth} (F_{\eps_1}(\epsilon_r)-F_{\eps_1}(\eps_{\sigma_s}))^{j-s} \Bigr) (1-F_{\eps_1}(\epsilon_r))^{n-j} .
\end{align*}
\end{proof}


\subsection{Proofs of Main Results}


\begin{proof}[Proof of Lemma~\ref{lemma:rationalizability}]
Let $F$ be a data-consistent measurement error distribution satisfying Assumption~\ref{assumption:iid_case} and $\eta \sim F$. We first prove the second part of the lemma. Note that the distribution of $j$\textsuperscript{th} order statistic $\eta_{(j)}$ is uniquely determined by $F$ and, by independence,
\begin{align*}
  \psi_\xi(t;F) = \psi_{X_{(j)}}(t)/\psi_{\eta_{(j)}}(t) , \quad \text{for all } t \in \mathbb{R} ,
\end{align*}
for any $j \in \{r,s\}$, where $\psi_\xi(t;F)$ is the induced latent variable distribution implied by $F$. Since $\psi_{\eta_{(j)}}$ is analytic (Lemma~\ref{lemma:appendix_joint_analyticity}), it has isolated real zeros. Thus, by the continuity of $\psi_\xi(\cdot;F)$, the equality is defined by the continuous extension at $t_0$ whenever $\psi_{\eta_{(j)}}(t_0) = 0$. \par
Now consider the first part of the lemma and note that because $\xi$ is independent of the measurement errors, we have
\[
  \psi_{X_{(r,s)}}(t_r,t_s) = \psi_\xi(t_r+t_s;F)\psi_{\eta_{(r,s)}}(t_r,t_s) ,
\]
and likewise for the marginal ch.f. Since $\psi_\xi(\cdot;F)$ is a ch.f., there exists $t_\xi > 0$ such that $\psi_\xi(t;F) \neq 0$ for all $t \in (-t_\xi,t_\xi)$. Similarly, there exists $t_\eta > 0$ such that $\psi_{\eta_{(j)}}(t) \neq 0$ for all $t \in (-t_\eta,t_\eta)$. Pick any positive $t_0 \le \min(t_\xi,t_\eta)$ and let $B_0$ be an open ball around zero contained in $\{(t_r,t_s) \in \mathbb{R}^2 : |t_r+t_s| < t_0\}$. Thus, $\psi_\xi(t_r+t_s;F) \neq 0$ and $\psi_{\eta_{(j)}}(t_r+t_s) \neq 0$ for all $(t_r,t_s) \in B_0$. Then, on $B_0$, we have
\begin{align*}
  \frac{\psi_{X_{(r,s)}}(t_r,t_s)}{\psi_{X_{(j)}}(t_r+t_s)}
  = \frac{\psi_\xi(t_r+t_s;F)\psi_{\eta_{(r,s)}}(t_r,t_s)}{\psi_\xi(t_r+t_s;F)\psi_{\eta_{(j)}}(t_r+t_s)}
  = \frac{\psi_{\eta_{(r,s)}}(t_r,t_s)}{\psi_{\eta_{(j)}}(t_r+t_s)} .
\end{align*}
This concludes the proof.
\end{proof}


\begin{proof}[Proof of Lemma~\ref{lemma:observational_equivalence_eps}]
By Lemma~\ref{lemma:rationalizability}, the distribution functions $F$ and $G$ are data-consistent only if
\begin{align}
  \psi_{\eta_{(r,s)}}(t_r,t_s) \psi_{\eta'_{(j)}}(t_r+t_s) = \psi_{\eta'_{(r,s)}}(t_r,t_s) \psi_{\eta_{(j)}}(t_r+t_s) , \label{eq:sum_of_os}
\end{align}
for $j \in \{r,s\}$ and for all $(t_r,t_s) \in B_0$. Observe that the two products in \eqref{eq:sum_of_os} are ch.f.s of $Z_{1j}$ and $Z_{2j}$, respectively. By Assumption Lemma~\ref{lemma:appendix_joint_analyticity}, all ch.f.s in \eqref{eq:sum_of_os} are analytic. Since the product of two analytic functions is also analytic, the equality of ch.f.s on $B_0$ implies their equality on all of $\mathbb{R}^2$. Hence, $F$ and $G$ are both data-consistent only if the condition in \eqref{eq:observational_equivalence} holds for $j \in \{r,s\}$.
\end{proof}


\begin{proof}[Proof of Lemma~\ref{lemma:limit_of_conditional_distribution}]
We only prove the equality in \eqref{eqn:limitF} as the proof for \eqref{eqn:limitG} is analogous. Let $F_{(j)}$ and $f_{(j)}$ (resp., $G_{(j)}$ and $g_{(j)}$) denote the marginal distribution and density function of the $j$\textsuperscript{th} order statistic of a random sample with size $n$ from $F$ (resp. $G$), respectively. Also, we denote the joint distribution and density functions of order statistics from $F$ by $F_{(r,s)}$ and $f_{(r,s)}$. Further, for any $y_r \ge 0$, define $F_{(s\vert r)}(\cdot;y_r)$ to be the distribution of the $s$\textsuperscript{th} order statistic conditional on the cumulative event that the $r$\textsuperscript{th} order statistic $\eta_{(r)} \le y_r$ as in Lemma~\ref{lemma:appendix_conditional_order_statistic}. \par

For $c \le 0$, the equality in \eqref{eqn:limitF} is an obvious consequence of the normalization in Assumption~\ref{assumption:bound}. Fix any $c > 0$ and consider a small enough $\delta$ such that $c > \delta \downarrow 0$. The conditional probability in \eqref{eqn:limitF} is a weighted average of the conditional distribution $F_{(s\vert r)}$:
\begin{align*}
  \mathbb{P}(\eta_{(r)}^\prime + \eta_{(s)} \le c \ | \ \eta_{(r)} + \eta_{(r)}^\prime \le \delta)
    &= \frac{\mathbb{P}(\eta_{(r)} + \eta_{(r)}^\prime \le \delta \ , \ \eta_{(s)} + \eta_{(r)}^\prime \le c)}{\mathbb{P}(\eta_{(r)} + \eta_{(r)}^\prime \le \delta)} \\
    &= \frac{\int_0^\delta F_{(r,s)}(\delta-x,c-x)g_{(r)}(x) \, dx}{\int_0^\delta F_{(r)}(\delta-x)g_{(r)}(x) \, dx} \\
    &= \int_0^\delta \frac{F_{(r)}(\delta-x)g_{(r)}(x)}{\int_0^\delta F_{(r)}(\delta-x)g_{(r)}(x) \, dx} F_{(s\vert r)}(c-x;\delta-x) \, dx .
\end{align*}
Note that since $F$ is absolutely continuous by Assumption~\ref{assumption:iid_case}, $F_{(s\vert r)}(y_s;\cdot)$ is continuous on $(0,\infty)$ for every $y_s \in \mathbb{R}$. Further, by the definition of $F_{(s\vert r)}(y_s;0)$ as in Lemma~\ref{lemma:appendix_conditional_order_statistic}, we conclude that $F_{(s\vert r)}(y_s;\cdot)$ is continuous on $[0,\infty)$ for every $y_s \in \mathbb{R}$. Likewise, for every $y_r \in [0,\infty)$, $F_{(s\vert r)}(\cdot;y_r)$ is continuous. Finally, it follows from the monotonicity of $F_{(s\vert r)}(\cdot;y_r)$ for every $y_r \in [0,\infty)$ that the function $F_{(s\vert r)}(\cdot;\cdot)$ is (jointly) continuous on $\mathbb{R} \times [0,\infty)$ (see, e.g., \cite{kruse1969joint}). \par

Therefore, it follows that as $\delta \downarrow 0$,
\begin{align*}
  &\int_0^\delta \frac{F_{(r)}(\delta-x)g_{(r)}(x)}{\int_0^\delta F_{(r)}(\delta-x)g_{(r)}(x) \, dx} F_{(s\vert r)}(c-x;\delta-x) \, dx \\
  &\qquad \le F_{(s\vert r)}(c;\delta) + \int_0^\delta \frac{F_{(r)}(\delta-x)g_{(r)}(x)}{\int_0^\delta F_{(r)}(\delta-x)g_{(r)}(x) \, dx} \left\lvert F_{(s\vert r)}(c-x;\delta-x) - F_{(s\vert r)}(c;\delta) \right\rvert \, dx \\
  &\qquad \le F_{(s\vert r)}(c;\delta) + \int_0^\delta \frac{F_{(r)}(\delta-x)g_{(r)}(x)}{\int_0^\delta F_{(r)}(\delta-x)g_{(r)}(x) \, dx} \sup_{0 \le x \le \delta} \left\lvert F_{(s\vert r)}(c-x;\delta-x) - F_{(s\vert r)}(c;\delta) \right\rvert \, dx \\
  &\qquad = F_{(s\vert r)}(c;\delta) + \sup_{0 \le x \le \delta} \left\lvert F_{(s\vert r)}(c-x;\delta-x) - F_{(s\vert r)}(c;\delta) \right\rvert \\
  &\qquad \longrightarrow F_{(s\vert r)}(c;0) ,
\end{align*}
The convergence of the upper bound is due to the (joint) continuity of $F_{(s\vert r)}(\cdot;\cdot)$. \par

A similar derivation for the lower bound shows that the lower bound approaches the same limit. By the squeeze theorem and Lemma~\ref{lemma:appendix_conditional_order_statistic}, we conclude that for any $c > 0$,
\[
  \mathbb{P}(\eta_{(r)}^\prime + \eta_{(s)} \le c \ | \ \eta_{(r)} + \eta_{(r)}^\prime \le \delta)
    \longrightarrow F_{(s\vert r)}(c;0)
    = F_{s-r:n-r}(c) .
\]
Thus, we conclude that the statement of the lemma holds for all $c \in \mathbb{R}$.
\end{proof}


\begin{proof}[Proof of Theorem~\ref{theorem:identification_iid_case}]
Lemmas~\ref{lemma:observational_equivalence_eps} and \ref{lemma:limit_of_conditional_distribution} imply that any two data-consistent measurement errors satisfying Assumptions~\ref{assumption:iid_case} and \ref{assumption:bound} must have the same distribution of order statistics:
\[
  F_{s-r:n-r} = G_{s-r:n-r} .
\]
It then follows from the one-to-one mapping between the distribution of an order statistic and the parent distribution (see, e.g., \cite{david2003order}, p.10, (2.1.5)) that $F = G$, i.e., the measurement error distribution $F_\eps = F = G$ is identified. Therefore, by Lemma~\ref{lemma:rationalizability}, the latent variable distribution $F_\xi$ is also identified.
\end{proof}


\begin{proof} [Proof of Corollary~\ref{corollary:observational_equivalence_eps}]
The result follows directly from \eqref{eq:observational_equivalence} in Lemma~\ref{lemma:observational_equivalence_eps}. Applying linear transformations $T_r : (z_1,z_2) \mapsto (z_2-z_1,z_2)$ and $T_s : (z_1,z_2) \mapsto (z_2-z_1,z_1)$ to \eqref{eq:observational_equivalence} for $j \in \{r,s\}$, respectively, shows that 
\begin{align*}
  \begin{pmatrix} \eta_{(s)} - \eta_{(r)} \\ \eta_{(r)}^\prime + \eta_{(s)} \end{pmatrix}
  \stackrel{d}{=} \begin{pmatrix} \eta_{(s)}^\prime - \eta_{(r)}^\prime \\ \eta_{(r)} + \eta_{(s)}^\prime \end{pmatrix}
  \quad \text{and} \quad
  \begin{pmatrix} \eta_{(s)} - \eta_{(r)} \\ \eta_{(r)} + \eta_{(s)}^\prime \end{pmatrix}
  \stackrel{d}{=} \begin{pmatrix} \eta_{(s)}^\prime - \eta_{(r)}^\prime \\ \eta_{(r)}^\prime + \eta_{(s)} \end{pmatrix} .
\end{align*}
\end{proof}


\begin{proof}[Proof of Theorem~\ref{theorem:identification_noniid_group}]
Without loss of generality, let $g_1$ be a group with at least $n-r$ members. For any $q \in \{1,\ldots,p\}$ (see Assumption~\ref{assumption:group_structure}), let $E_{q}$ denote the event where the top $n-r$ order statistics are all from group $1$ except for the $s$\textsuperscript{th} order statistic, which belong to group $q$ (where it may be that $q=1$), i.e.,
\begin{align} \label{eq:group_identifier}
  E_{q} = \{ R_{(r+1)}=1, \ldots, R_{(s-1)}=1, R_{(s)}=q, R_{(s+1)}=1, \ldots, R_{(n)}=1 \} .
\end{align}
The identification proof proceeds in three steps. First, we claim that the distribution $F_{\epsilon_j}$ for $j \in g_1$---the distribution of a group with many members---is identified. Then we show that $F_{\epsilon_j}$ for $j \in g_q$ is identified for each $q \neq 1$. Finally, $F_\xi$ is identified by a standard deconvolution argument. \par

\textbf{Step 1.} For notational simplicity, suppose $1 \in g_1$. A close inspection of the proof of Lemmas~\ref{lemma:rationalizability} and \ref{lemma:observational_equivalence_eps} reveals that the necessary condition \eqref{eq:observational_equivalence} does not rely on the hypothesis that the measurement errors are identically distributed. Therefore, we can make use of a similar argument as in the proof of Lemmas~\ref{lemma:rationalizability} and \ref{lemma:observational_equivalence_eps}, provided the ch.f.~of $(\eps_{(r)},\eps_{(s)})$ is analytic in the i.n.i.d.~setting. Indeed, Lemma~\ref{lemma:appendix_joint_analyticity} confirms $f_{(r,s)}(\cdot,\cdot;E_1)$ is analytic. Therefore, analogous to the proof of Lemmas~\ref{lemma:rationalizability} and \ref{lemma:observational_equivalence_eps} but conditional on the event $E_1$, two data-consistent measurement errors $\eta = (\eta_1,\ldots,\eta_n) \sim \times_{j=1}^n F_j$ and $\eta' = (\eta'_1,\ldots,\eta'_n) \sim \times_{j=1}^n G_j$ satisfying Assumption~\ref{assumption:noniid_case} must have the same joint distribution of sums:
\[
  \begin{pmatrix} \eta_{(j)}^\prime + \eta_{(r)} \\ \eta_{(j)}^\prime + \eta_{(s)} \end{pmatrix}
  \stackrel{d}{=}
  \begin{pmatrix} \eta_{(j)} + \eta_{(r)}^\prime \\ \eta_{(j)} + \eta_{(s)}^\prime \end{pmatrix}
  \, \Big\vert \, E_1^\eta \cap E_1^{\eta'} , \quad j \in \{r,s\} ,
\]
where $E_1^\eta$, as in \eqref{eq:group_identifier}, denotes the event that identifies the group association of order statistics that originate from the random vector $\eta$; and likewise for $E_1^{\eta'}$. Consider the left-hand side with $j=r$. Following a similar derivation as in the proof of Lemma~\ref{lemma:limit_of_conditional_distribution}, for any $c \in \mathbb{R}$, we have
\begin{align*}
  &\mathbb{P}(\eta_{(r)}^\prime + \eta_{(s)} \le c \ | \ \eta_{(r)} + \eta_{(r)}^\prime \le \delta, E_1^\eta \cap E_1^{\eta'}) \\
    &\qquad \le F_{s-r:n-r}(c) + \sup_{0 \le x \le \delta} \left\lvert F_{(s\vert r)}(c-x;\delta-x, E_1^\eta) - F_{s-r:n-r}(c) \right\rvert ,
\end{align*}
where $F_{s-r:n-r}$ is the distribution of the $(s-r)$\textsuperscript{th} order statistic of a random sample with size $n-r$ from the parent distribution $F_1$. To show that the limit of the left-hand side as $\delta \downarrow 0$ is indeed $F_{s-r:n-r}(c)$, it suffices to show, in combination with a similar lower bound, that $F_{(s \vert r)}(\cdot;\cdot,E_1^\eta)$ is continuous and $\lim_{\delta \downarrow 0} F_{(s \vert r)}(c;\delta,E_1^\eta) = F_{s-r:n-r}(c)$. Following the same proof as in Lemma~\ref{lemma:appendix_conditional_order_statistic} using the expressions in Lemma~\ref{lemma:appendix_conditional_cdf_1}, we show that
\begin{align*}
  \lim_{\delta \downarrow 0} F_{(s \vert r)}(c;\delta,E_1^\eta) = \sum_{k=s}^n C^{n-r}_{k-r} F_1(c)^{k-r}(1-F_1(c))^{n-k} ,
\end{align*}
for every $c \in \mathbb{R}$. The right-hand side equals $F_{s-r:n-r}(c)$. As in the proof of Lemma~\ref{lemma:limit_of_conditional_distribution}, the continuity of $F_{(s \vert r)}(\cdot;\cdot,E_1^\eta)$ on $\mathbb{R} \times [0,\infty)$ follows by elementwise continuity of $F_{(s \vert r)}(\cdot;\cdot,E_1^\eta)$ and monotonicity of $F_{(s \vert r)}(\cdot;\epsilon_r,E_1^\eta)$ for every $\epsilon_r \in [0,\infty)$ (cf.~\cite{kruse1969joint}). Therefore, we conclude that for any $c \in \mathbb{R}$,
\[
  \lim_{\delta \downarrow 0} \mathbb{P}(\eta_{(r)}^\prime + \eta_{(s)} \le c \ | \ \eta_{(r)} + \eta_{(r)}^\prime \le \delta, E_1^\eta \cap E_1^{\eta'}) = F_{s-r:n-r}(c) .
\]
A similar derivation for the right-hand side shows that
\[
  \lim_{\delta \downarrow 0} \mathbb{P}(\eta_{(r)} + \eta_{(s)}^\prime \le c \ | \ \eta_{(r)} + \eta_{(r)}^\prime \le \delta, E_1^\eta \cap E_1^{\eta'}) = \sum_{k=s}^n C^{n-r}_{k-r} G_1(c)^{k-r}(1-G_1(c))^{n-k} =: G_{s-r:n-r}(c) ,
\]
for every $c$. Therefore, we conclude that $F_{s-r:n-r} = G_{s-r:n-r}$ and thus $F_1 = G_1$, i.e., the distribution for group $1$ (a group with large size) is identified. \par

\textbf{Step 2.} For notational simplicity, suppose $q \in g_q$. A sequence of arguments analogous to Step~1 remains to hold conditional on the event $E_q$ for any $q \neq 1$. Therefore, it suffices to show that the equality
\begin{align} \label{eq:limit_equivalence_noniid}
  \lim_{\delta \downarrow 0} F_{(s \vert r)}(c;\delta,E_q^\eta) = \lim_{\delta \downarrow 0} G_{(s \vert r)}(c;\delta,E_q^{\eta'}) , \qquad \text{for all } c \in \mathbb{R} ,
\end{align}
implies that $F_q = G_q$, where
\[
  F_{(s \vert r)}(c;\delta,E_q^\eta) = \mathbb{P}(\eta_{(s)} \le c \,\vert\, \eta_{(r)} \le \delta, E_q^\eta) ,
\]
and $G_{(s \vert r)}(\cdot;\cdot,E_q^{\eta'})$ is defined similarly. Following the same proof as in Lemma~\ref{lemma:appendix_conditional_order_statistic} using the expression in Lemma~\ref{lemma:appendix_conditional_cdf_q}, one can show that
\begin{align} \label{eq:limit_conditional_noniid}
  F_{(s \vert r)}(c;0,E_q^\eta)
    &:= \lim_{\delta \downarrow 0} F_{(s \vert r)}(c;\delta,E_q^\eta) \nonumber \\
    &= \sum_{k=s}^n C^{n-s}_{k-s} \frac{\mathbb{E}_{\zeta_s}\I\{\zeta_s \le c\} F_1(\zeta_s)^{s-r-1} (F_1(c)-F_1(\zeta_s))^{k-s} (1-F_1(c))^{n-k}}{\mathbb{E}_{\zeta_s} F_1(\zeta_s)^{s-r-1} (1-F_1(\zeta_s))^{n-s}} ,
\end{align}
where $\zeta_s \sim F_q$, the group-$q$ distribution. Differentiating \eqref{eq:limit_conditional_noniid} with respect to $c$, the density function is given by
\begin{align*}
  f_{(s \vert r)}(c;0,E_q^\eta)
    &= \frac{n-s}{\mathbb{E}_{\zeta_s} F_1(\zeta_s)^{s-r-1} (1-F_1(\zeta_s))^{n-s}} F_1(c)^{s-r-1}(1-F_1(c))^{n-s} f_q(c) \\
    &\propto F_1(c)^{s-r-1}(1-F_1(c))^{n-s} f_q(c) ,
\end{align*}
for almost all $c \in \mathbb{R}$. Similarly, the density $g_{(s \vert r)}(\cdot;0,E_q^{\eta'})$ is given by
\begin{align*}
  g_{(s \vert r)}(c;0,E_q^\eta)
    &= \frac{n-s}{\mathbb{E}_{\zeta'_s} F_1(\zeta'_s)^{s-r-1} (1-F_1(\zeta'_s))^{n-s}} F_1(c)^{s-r-1}(1-F_1(c))^{n-s} g_q(c) \\
    &\propto F_1(c)^{s-r-1}(1-F_1(c))^{n-s} g_q(c) ,
\end{align*}
for almost all $c \in \mathbb{R}$, where $\zeta'_s \sim G_q$. Since \eqref{eq:limit_equivalence_noniid} implies $f_{(s \vert r)}(\cdot;0,E_q^\eta) = g_{(s \vert r)}(\cdot;0,E_q^\eta)$ a.e.~and $F_1(c)^{s-r-1}(1-F_1(c))^{n-s} > 0$ for all $c$ in the interior of the support of $F_q$ (common support assumption in Assumption~\ref{assumption:group_structure}), we conclude from the above expression of the densities that $f_q = g_q$ a.e., and thus $F_q = G_q$ for all $q \neq 1$. Therefore, the distributions for all measurement error groups are identified. \par

\textbf{Step 3.} Since $F_{\eps_1},\ldots,F_{\eps_n}$ are identified, so is the ch.f.~$\psi_{\eps_{(j)}}$. It follows that the ch.f.~$\psi_\xi = \psi_{X_{(j)}}/\psi_{\eps_{(j)}}$ is identified, where $\psi_\xi(t)$ is defined by the continuous extension whenever $\psi_{\eps_{(j)}}(t) = 0$. It is well-defined since $\psi_{\eps_{(j)}}$ is analytic and hence has isolated zeros. Thus, $F_\xi$ is identified.
\end{proof}


\begin{proof}[Proof of Corollary~\ref{corollary:identification_noniid}]
When $s=n$ and $r=n-1$, the proof of Theorem~\ref{theorem:identification_noniid_group} reveals that
\begin{align*}
  f_{(s \vert r)}(c;0,E_q^\eta) \propto f_q(c) , \qquad g_{(s \vert r)}(c;0,E_q^\eta) \propto g_q(c) .
\end{align*}
Above implies $f_q = g_q$ a.e.~in the absence of a common support assumption.
\end{proof}


\begin{proof}[Proof of Theorem \ref{theorem:estimator_consistency}]
It follows from Theorem~3 in \cite{bierens2008semi} that the sieve $\{\mathcal{H}_k\}_k$ is dense in $\mathcal{H} = \{h \in \pi^2 : \pi \in \mathcal{P}\}$. This implies that $\{\mathcal{F}_k\}_k := \{\mathcal{F}^\xi_k \times \mathcal{F}^\eps_k\}_k$ is dense in $\mathcal{F} := \mathcal{F}^\xi \times \mathcal{F}^\eps$ where
\begin{align*}
  &\mathcal{F}^\xi = \{F(\cdot) = \int_0^{G_\xi(\cdot)} h(u) \, du \, : \, h \in \mathcal{H} \} ,
  &&\mathcal{F}^\eps = \{F(\cdot) = \int_0^{G_\eps(\cdot)} h(u) \, du \, : \, h \in \mathcal{H} \} , \\
  &\mathcal{F}^\xi_k = \{F(\cdot) = \int_0^{G_\xi(\cdot)} h(u) \, du \, : \, h \in \mathcal{H}_k \} ,
  &&\mathcal{F}^\eps_k = \{F(\cdot) = \int_0^{G_\eps(\cdot)} h(u) \, du \, : \, h \in \mathcal{H}_k \} .
\end{align*}
We endow $\mathcal{F}$ with the supremum norm $\lVert\,\cdot\,\rVert_\infty$. Further, because $\mathcal{F}$ is compact, it suffices to show that (1) $Q(\cdot)$ is continuous on $\mathcal{F}$, (2) $Q(\cdot)$ has a unique minimizer on $\mathcal{F}$, and (3) the following uniform a.s.~convergence holds:
\[
  \sup_{F \in \mathcal{F}} \left\lvert \widehat{Q}_N(F) - Q(F) \right\rvert \stackrel{\text{a.s.}}{\longrightarrow} 0 .
\]
See \cite{gallant1987identification} or \cite{gallant1987semi}. A proof of each of (1)--(3) follows. \par

\textbf{(1)} Given a complex number $z \in \mathbb{C}$, let $\Re(z)$ and $\Im(z)$ denote the real and imaginary part, respectively. Let $F_k = (F_{\xi,k},F_{\eps,k}) \in \mathcal{F}$ such that $F_k \rightarrow F \in \mathcal{F}$. Consider:
\begin{align*}
  \lvert Q(F) - Q(F_k) \rvert
  &= \frac{1}{4\kappa^2} \Biggl\lvert \int_{(-\kappa,\kappa)^2} \Re\{\psi_{X_{(r,s)}}(t)-\varphi(t;F)\}^2 + \Im\{\psi_{X_{(r,s)}}(t)-\varphi(t;F)\}^2 \, dt \\
  &\hspace{0.15\textwidth} - \int_{(-\kappa,\kappa)^2} \Re\{\psi_{X_{(r,s)}}(t)-\varphi(t;F_k)\}^2 + \Im\{\psi_{X_{(r,s)}}(t)-\varphi(t;F_k)\}^2 \, dt \Biggr\rvert \\
  &= \frac{1}{4\kappa^2} \Biggl\lvert \int_{(-\kappa,\kappa)^2} \Re\{2\psi_{X_{(r,s)}}(t)-\varphi(t;F)-\varphi(t;F_k)\}\Re\{\varphi(t;F)-\varphi(t;F_k)\} \\
  &\hspace{0.15\textwidth} + \Im\{2\psi_{X_{(r,s)}}(t)-\varphi(t;F)-\varphi(t;F_k)\}\Im\{\varphi(t;F)-\varphi(t;F_k)\} \, dt \Biggr\rvert \\
  &\le \frac{1}{4\kappa^2} \int_{(-\kappa,\kappa)^2} \Bigl\lvert \Re\{2\psi_{X_{(r,s)}}(t)-\varphi(t;F)-\varphi(t;F_k)\} \Bigr\rvert \Bigl\lvert \Re\{\varphi(t;F)-\varphi(t;F_k)\} \Bigr\rvert \\
  &\hspace{0.15\textwidth}+ \Bigl\lvert \Im\{2\psi_{X_{(r,s)}}(t)-\varphi(t;F)-\varphi(t;F_k)\} \Bigl\lvert \Bigr\rvert \Im\{\varphi(t;F)-\varphi(t;F_k)\} \Bigr\rvert \, dt \\
  &\le \frac{1}{\kappa^2} \int_{(-\kappa,\kappa)^2} \Bigl\lvert \Re\{\varphi(t;F)-\varphi(t;F_k)\} \Bigr\rvert + \Bigl\lvert \Im\{\varphi(t;F)-\varphi(t;F_k)\} \Bigr\rvert \, dt \\
  &\le \frac{\sqrt{2}}{\kappa^2} \int_{(-\kappa,\kappa)^2} \Bigl\lvert \varphi(t;F)-\varphi(t;F_k) \Bigr\rvert \, dt \\
  &\le 4\sqrt{2} \sup_{t \in (-\kappa,\kappa)^2} \Bigl\lvert \varphi(t;F)-\varphi(t;F_k) \Bigr\rvert .
\end{align*}
Therefore, to show $\lvert Q(F) - Q(F_k) \rvert \rightarrow 0$, it suffices to show that
\[
  (\xi(F_{\xi,k}),\eps_{1}(F_{\eps,k}),\ldots,\eps_{n}(F_{\eps,k})) \stackrel{\text{a.s.}}{\longrightarrow} (\xi(F_{\xi}),\eps_{1}(F_{\eps}),\ldots,\eps_{n}(F_{\eps})) ,
\]
as this implies a.s.~convergence of $(\xi(F_{\xi,k}),\eps_{(r)}(F_{\eps,k}),\eps_{(s)}(F_{\eps,k}))$ to $(\xi(F_{\xi}),\eps_{(r)}(F_{\eps}),\eps_{(s)}(F_{\eps}))$ and thus uniform convergence of $\varphi(t;F_k)$ to $\varphi(t;F)$ on any compact set. For each $k$, let $H_k$ denote the c.d.f.~such that $F_{\xi,k}(\cdot) = H_k(G_\xi(\cdot))$. $F_{\xi,k} \rightarrow F_\xi$ implies $H_k \rightarrow H$ where $F_\xi(\cdot) = H(G_\xi(\cdot))$. Therefore, since $G_\xi^{-1}$ and $H^{-1}$ are monotone and hence have at most countable discontinuity points, we conclude that $V$--a.s.,
\[
  \lim_{k \rightarrow \infty} \xi(F_{\xi,k})
    = \lim_{k \rightarrow \infty} F_{\xi,k}^{-1}(V)
    = \lim_{k \rightarrow \infty} G_\xi^{-1}(H_k^{-1}(V))
    = G_\xi^{-1}(H^{-1}(V))
    = F_\xi^{-1}(V)
    = \xi(F_\xi) .
\]
Similarly, we have $\lim_{k \rightarrow \infty} \eps_j(F_{\eps,k}) = \eps_j(F_\eps)$, $U_j$--a.s. We thus conclude that $Q(\cdot)$ is continuous on $\mathcal{F}$. \par

\textbf{(2)} Existence of $F=(F_\xi,F_\eps) \in \mathcal{F}$ satisfying $Q(F) = 0$ is obvious from Assumption~\ref{assumption:noniid_case}(b). Uniqueness of the minimizer follows from Theorem~\ref{theorem:identification_iid_case}. \par

\textbf{(3)} Note that we have
\begin{align*}
  &\sup_{F \in \mathcal{F}} \left\rvert \widehat{Q}_N(F) - Q(F) \right\lvert \\
  &\qquad= \sup_{F \in \mathcal{F}} \frac{1}{4\kappa^2} \Biggl\rvert \int_{(-\kappa,\kappa)^2} \Re\{\widehat\psi_N(t)-\widehat\varphi_N(t;F)\}^2 + \Im\{\widehat\psi_N(t)-\widehat\varphi_N(t;F)\}^2 \, dt \\
  &\hspace{0.2\textwidth} - \int_{(-\kappa,\kappa)^2} \Re\{\psi_{X_{(r,s)}}(t)-\varphi(t;F)\}^2 + \Im\{\psi_{X_{(r,s)}}(t)-\varphi(t;F)\}^2 \, dt \Biggr\lvert \\
  &\qquad= \sup_{F \in \mathcal{F}} \frac{1}{4\kappa^2} \Biggl\rvert \int_{(-\kappa,\kappa)^2} \Re\{\widehat\psi_N(t)-\widehat\varphi_N(t;F)+\psi_{X_{(r,s)}}(t)-\varphi(t;F)\} \\
  &\hspace{0.4\textwidth} \Re\{(\widehat\psi_N(t)-\psi_{X_{(r,s)}}(t))-(\widehat\varphi_N(t;F)-\varphi(t;F))\} \\
  &\hspace{0.25\textwidth} + \Im\{\widehat\psi_N(t)-\widehat\varphi_N(t;F)+\psi_{X_{(r,s)}}(t)-\varphi(t;F)\} \\
  &\hspace{0.4\textwidth} \Im\{(\widehat\psi_N(t)-\psi_{X_{(r,s)}}(t))-(\widehat\varphi_N(t;F)-\varphi(t;F))\} \, dt \Biggr\lvert \\
  &\qquad\le \sup_{F \in \mathcal{F}} \frac{1}{\kappa^2} \int_{(-\kappa,\kappa)^2} \left\lvert \Re\{(\widehat\psi_N(t)-\psi_{X_{(r,s)}}(t))-\Re(\widehat\varphi_N(t;F)-\varphi(t;F))\} \right\rvert \\
  &\hspace{0.3\textwidth} +  \left\lvert \Im\{(\widehat\psi_N(t)-\psi_{X_{(r,s)}}(t))-\Im(\widehat\varphi_N(t;F)-\varphi(t;F))\} \right\rvert \, dt \\
  &\qquad\le \sup_{F \in \mathcal{F}} \frac{2}{\kappa^2} \int_{(-\kappa,\kappa)^2} \Bigl\lvert \widehat\psi_N(t)-\psi_{X_{(r,s)}}(t) \Bigr\rvert + \Bigl\lvert \widehat\varphi_N(t;F)-\varphi(t;F) \Bigr\rvert \, dt \\
  &\qquad\le \frac{2}{\kappa^2} \int_{(-\kappa,\kappa)^2} \Bigl\lvert \widehat\psi_N(t)-\psi_{X_{(r,s)}}(t) \Bigr\rvert \, dt + \frac{2}{\kappa^2} \int_{(-\kappa,\kappa)^2} \sup_{F \in \mathcal{F}} \Bigl\lvert \widehat\varphi_N(t;F)-\varphi(t;F) \Bigr\rvert \, dt .
\end{align*}
Recall ch.f.s are bounded uniformly by $1$. Since $\widehat\psi_N \rightarrow_{\text{a.s.}} \psi_{X_{(r,s)}}$ pointwise on $\mathbb{R}^2$, the first integral on the right-hand side a.s.~vanishes asymptotically. It remains to be shown that $\widehat\varphi_N(t;\cdot) \rightarrow_{\text{a.s.}} \varphi(t;\cdot)$ uniformly in $F$. the supremum in the last integral vanishes pointwise on $(-\kappa,\kappa)^2$. Since, by definition
\begin{align*}
  \Bigl\lvert \widehat\varphi_N(t;F)-\varphi(t;F) \Bigr\rvert
    &= \Bigl\lvert \mathbb{E}_N e^{it^\top X(F)} - \mathbb{E} e^{it^\top X(F)} \Bigr\rvert \\
    &\le \Bigl\lvert \mathbb{E}_N \cos(t^\top X(F)) - \mathbb{E} \cos(t^\top X(F)) \Bigr\rvert + \Bigl\lvert \mathbb{E}_N \sin(t^\top X(F)) - \mathbb{E} \sin(t^\top X(F)) \Bigr\rvert ,
\end{align*}
where both terms are bounded, we conclude from the Borel--Cantelli lemma and Hoeffding's inequality that
\begin{align} \label{eq:varphi_convergence}
  \Bigl\lvert \widehat\varphi_N(t;F)-\varphi(t;F) \Bigr\rvert
    \stackrel{\text{a.s.}}{\longrightarrow} 0 , \quad \text{for all } F \in \mathcal{F} .
\end{align}
We complete the proof by establishing its strong stochastic equicontinuity. Note that a.s.~continuity of $(\xi(F),\eps_1(F),\ldots,\eps_n(F))$ in part (1) of the proof implies a.s.~continuity of $X(F)$. Thus, for any $\eta > 0$, there exists $\delta > 0$ such that
\begin{align*}
  d(F,F') = \max \bigl\{ \lVert F_\xi-F_\xi'\rVert_\infty,\lVert F_\eps-F_\eps'\rVert_\infty \bigr\} < \delta
\end{align*}
implies, a.s.,
\begin{align*}
  \bigl\lvert \cos(t^\top X(F)) - \cos(t^\top X(F')) \bigr\rvert < \eta/4 , \quad
  \bigl\lvert \sin(t^\top X(F)) - \sin(t^\top X(F')) \bigr\rvert < \eta/4 .
\end{align*}
Therefore,
\begin{align*}
  &\sup_{d(F,F') < \delta} \Bigl\lvert \Bigl(\widehat\varphi_N(t;F) - \varphi(t;F)\Bigr) - \Bigl(\widehat\varphi_N(t;F') - \varphi(t;F')\Bigr) \Bigr\rvert \\
  &\qquad= \sup_{d(F,F') < \delta} \Bigl\lvert \Bigl(\mathbb{E}_N e^{it^\top X(F)} - \mathbb{E} e^{it^\top X(F)}\Bigr) - \Bigl(\mathbb{E}_N e^{it^\top X(F')} - \mathbb{E} e^{it^\top X(F')}\Bigr) \Bigr\rvert \\
  &\qquad\le \sup_{d(F,F') < \delta} \Bigl\lvert \Bigl(\mathbb{E}_N e^{it^\top X(F)} - \mathbb{E}_N e^{it^\top X(F')}\Bigr) - \Bigl(\mathbb{E} e^{it^\top X(F)} - \mathbb{E} e^{it^\top X(F')}\Bigr) \Bigr\rvert \\
  &\qquad\le \mathbb{E}_N \sup_{d(F,F') < \delta} \Bigl( \Bigl\lvert \cos(t^\top X(F)) - \cos(t^\top X(F')) \Bigr\rvert + \Bigl\lvert \sin(t^\top X(F)) - \sin(t^\top X(F')) \Bigr\rvert \Bigr) \\
  &\hspace{0.1\textwidth} + \mathbb{E} \sup_{d(F,F') < \delta} \Bigl( \Bigl\lvert \cos(t^\top X(F)) - \cos(t^\top X(F')) \Bigr\rvert + \Bigl\lvert \sin(t^\top X(F)) - \sin(t^\top X(F')) \Bigr\rvert \Bigr) \\
  &\qquad< \eta .
\end{align*}
Since the inequality holds for all $N$, we conclude by strong stochastic equicontinuity that the a.s.~convergence in \eqref{eq:varphi_convergence} holds uniformly on $\mathcal{F}$.
\end{proof}


\section{Rossberg's Counterexample} \label{appendix:rossberg}

One natural approach towards investigating the identification problem of the measurement error distribution is to exploit the spacing between two order statistics: $X_{(s)} - X_{(r)} = \eps_{(s)} - \eps_{(r)}$. However, without additional restrictions, knowledge of the spacing distribution is not sufficient to identify $F_\eps$. A constructive example is provided by \cite{rossberg1972characterization}. Specifically, suppose $\eps_1$ and $\eps_2$ are i.i.d.~standard exponential random variables. Then the spacing $\eps_{(2)} - \eps_{(1)}$ is a folded standard Laplace random variable. \cite{rossberg1972characterization} shows there are infinitely many parent distributions that deliver a folded standard Laplace spacing distribution, one of them being
\begin{align*}
  G(x) = 1 - e^{-x} \big[ 1 + \pi^{-2} (1 - \cos 2 \pi x) \big] ,
\end{align*}
with support $S(G) = [0,\infty)$. \par

However, our Corollary \ref{corollary:observational_equivalence_eps} requires that, for Rossberg's distribution $G$ to be observationally equivalent to the true exponential distribution, the distributions of cross-sums of order statistics must be the same in addition to their spacing distributions. This additional condition arises from the model setup, in particular, from the independence assumption between the latent variable of interest, $\xi$, and the measurement errors. Whereas the empirical content comes from the random vector $(X_{(1)},X_{(2)})$ in our context, the spacing is the only---and complete---empirical content available in the setting of \cite{rossberg1972characterization}. As discussed in Section \ref{subsection:identification}, the independence assumption allows us to exploit the ratio of ch.f.s in a tractable manner despite the dependence between measurement errors of the observed. \par

\begin{figure}[!t]
  \centering
  \begin{subfigure}[!t]{0.49\textwidth}
    \centering
    \includegraphics[width=\textwidth]{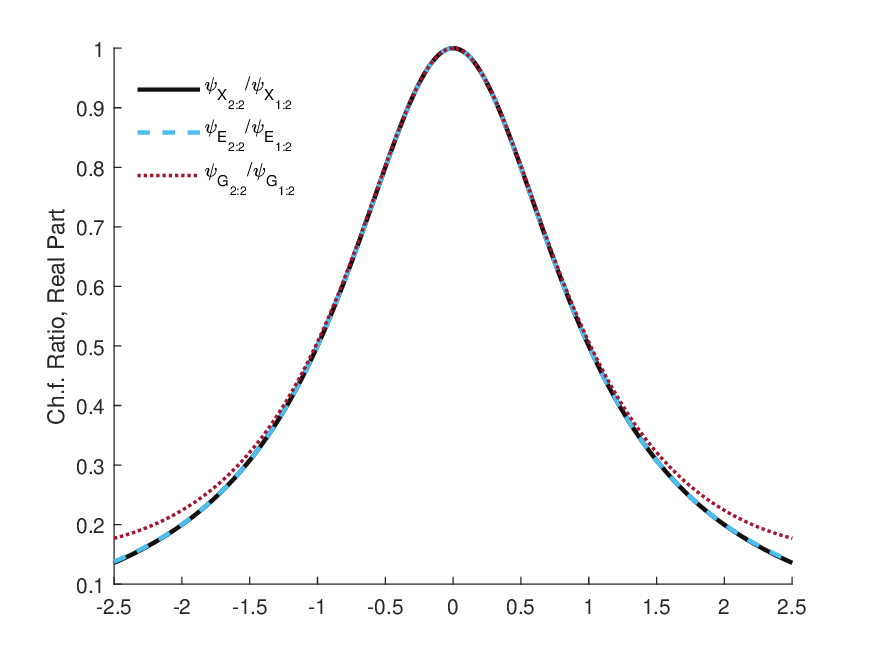}
    \caption{Real part of the ratios of ch.f.s.} \label{fig:chf_real_fig}
  \end{subfigure}%
  \hfill%
  \begin{subfigure}[!t]{0.49\textwidth}
    \centering
    \includegraphics[width=\textwidth]{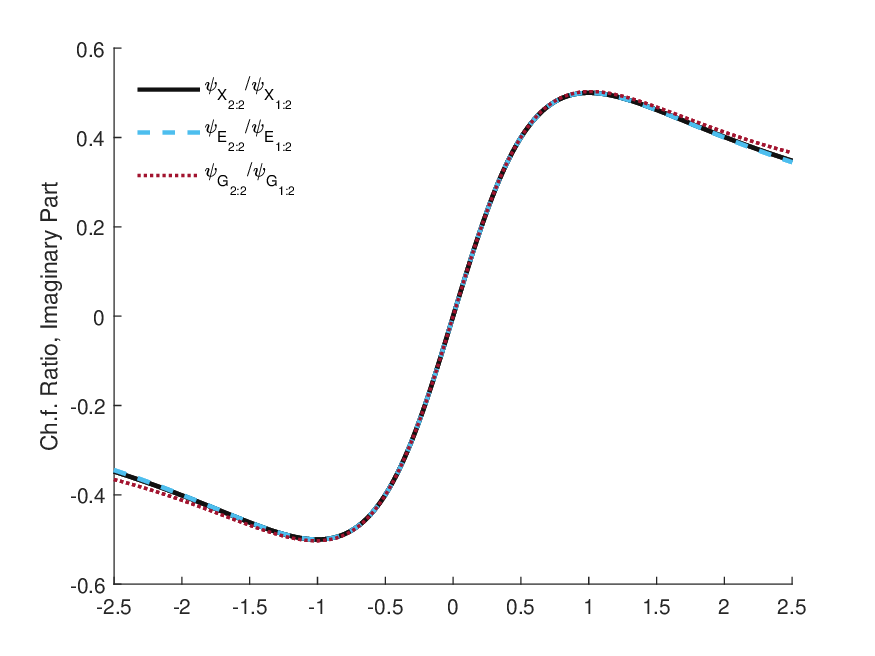}
    \caption{Imaginary part of the ratios of ch.f.s.} \label{fig:chf_imag_fig}
  \end{subfigure}
  \caption{An illustration of the departure of the ratio of ch.f.s of second and first Rossberg order statistics (\emph{red dotted line}) from that of the measurement (observed) order statistics (\emph{black line}). The ratio of ch.f.s of true measurement error (exponential) order statistics (\emph{blue dashed line}) aligns with that of the measurement order statistics.} \label{fig:chf_fig}
\end{figure}

\begin{figure}[!t]
  \centering
  \begin{subfigure}[!t]{0.5\textwidth}
    \centering
    \includegraphics[width=\textwidth]{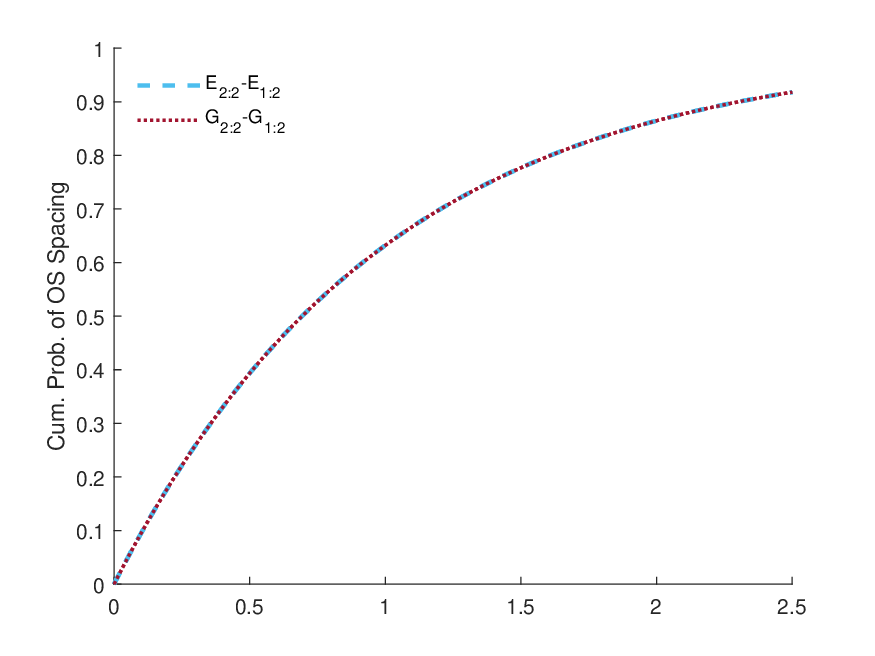}
    \caption{Cumulative distributions of the spacings. (\emph{blue dashed line}: exponential spacing; \emph{red dotted line}: Rossberg spacing)} \label{fig:df_spacing_fig}
  \end{subfigure}%
  \hfill%
  \begin{subfigure}[!t]{0.5\textwidth}
    \centering
    \includegraphics[width=\textwidth]{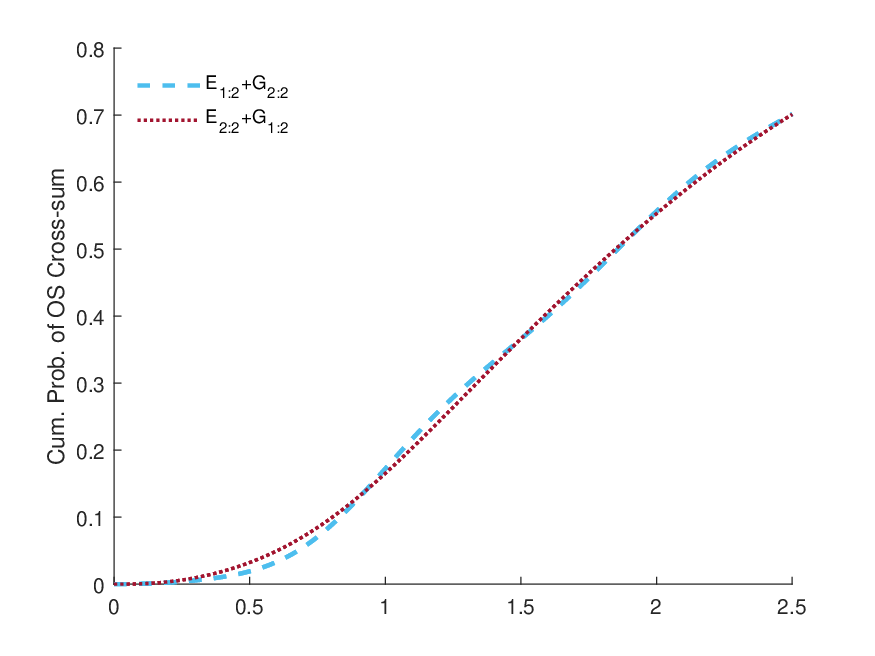}
    \caption{Cumulative distributions of the cross-sums. (\emph{blue dashed line}: exponential first order statistic; \emph{red dotted line}: Rossberg first order statistic)} \label{fig:df_crosssum_fig}
  \end{subfigure}
  \caption{An illustration of the dissimilarity between the probability distributions of cross-sums of two exponential and Rossberg order statistics (Figure \ref{fig:df_crosssum_fig}). Consistent with \cite{rossberg1972characterization}, the probability distributions of exponential and Rossberg spacings are aligned (Figure \ref{fig:df_spacing_fig}).} \label{fig:df_fig}
\end{figure}

Figure \ref{fig:chf_fig} compares the ratios of ch.f.s of order statistics.\footnote{Due to the lack of analytical tractability of the ratio for Rossberg's distribution, all functions are approximated by Monte Carlo integration with $N=10^8$ pairs of random draws. To pin down the specification of $X_j = \xi + \eps_j$, random quantity $\xi$ is drawn from the standard normal distribution. The same draws are used to plot the distribution functions in Figure \ref{fig:df_fig}.} In order to make the minor departures clear, we forfeit the three-dimensional display of the ratios of ch.f.s in \eqref{eq:identify_eps}. The ratios are evaluated at $\psi_{X_{(1,2)}}(0,t) / \psi_{X_{(1)}}(t) = \psi_{X_{(2)}}(t) / \psi_{X_{(1)}}(t)$, and the real and imaginary parts are displayed separately. The ratios for the observations $(X_{(1)},X_{(2)})$ and the true measurement errors $(\eps_{(1)},\eps_{(2)})$ from the standard exponential distribution coincide. However, the ratio for $G$ begins to depart substantially when $|t| > 1$, indicating that $G$ cannot rationalize the observed data. Consequently, in Figure \ref{fig:df_fig}, while the spacing distributions for the exponential and Rossberg's random variables overlap, the cross-sum distributions differ over nontrivial regions. This connection between ratios of ch.f.s and probability distributions of spacings and cross-sums is established in Corollary \ref{corollary:observational_equivalence_eps}. \par


\section{Implementation of the Estimator} \label{appendix:estimation}

This section provides a closed form for the empirical criterion function and describes how the estimation procedure is implemented in practice to calculate the estimator proposed in Section~\ref{section:estimator_main}. Additional simulation results are also reported. While it is left implicit in the main text whether the vectorized form of an element is a column or row vector, it is made clear here. For instance, $x = (x_1,\ldots,x_n)$ is a row vector of length $n$ and $x^\top$ a column vector. \par

\subsection{Closed form for the Criterion Function}

The simulated sieve estimator minimizes the empirical analogue of \eqref{eq:population_criterion_function}. By simulating the model-implied ch.f.~$\varphi$, we avoid having to numerically integrate the criterion function over $(-\kappa,\kappa)^2$. Precisely, the empirical criterion function $\widehat{Q}_N(\cdot)$ has the following closed-form:
\begin{align}
  \widehat{Q}_N(F_{k_N};\kappa)
  = \frac{2}{N} + \frac{2}{N^2} \sum_{i > j}^N q(X_i-X_j) &+ \frac{2}{N^2} \sum_{i > j}^N q(X_i(F_{k_N})-X_j(F_{k_N})) \nonumber \\
  &\qquad- \frac{2}{N^2} \sum_{i, j}^N q(X_i-X_j(F_{k_N})) , \label{eq:empirical_criterion_function}
\end{align}
where $X_i = (X_{(r),i},X_{(s),i})^\top$ is the $i$\textsuperscript{th} observation, $X_i(F_{k_N})$ is the $i$\textsuperscript{th} simulated column vector given c.d.f.s $F_{\xi,k_N}$ and $F_{\eps,k_N}$, to be made precise below (see also Assumption~\ref{assumption:noniid_case}(d)), and $q(x,y) = \sin(\kappa x)\sin(\kappa y) / (\kappa^2xy)$, defined everywhere by the continuous extension. The above closed form is straightforward to derive using the identity $e^{ix} = \cos(x) + i \sin(x)$ and two trigonometric identities, $\cos(x-y) = \cos(x)\cos(y) + \sin(x)\sin(y)$ and $2\sin(x)\sin(y) = \cos(x-y)-\cos(x+y)$. \par

\subsection{Estimation Procedure}

Note that the sieve space $\{\mathcal{F}_k\}_k := \{\mathcal{F}^\xi_k \times \mathcal{F}^\eps_k\}_k$ introduced in Section~\ref{section:estimator_main} is constructed by $\{\mathcal{P}_k\}_k$, a sequence of spaces of truncated series based on the series representation $\mathcal{P}$ in \eqref{eq:legendre_series_pi}. That is, the infinite-dimensional spaces $\{\mathcal{P}_k\}_k$ are determined by a sequence of finite-dimensional sieves $\{\mathcal{D}_k\}_k$, where
\[
  \mathcal{D}_k := \left\{ \delta \in \mathbb{R}^k : \lvert\delta_\ell\rvert \le \frac{c}{1+\sqrt{\ell}\ln\ell} , \, \ell=1,\ldots,k \right\} .
\]
Thus, so as to minimize $\widehat{Q}_N$ in \eqref{eq:empirical_criterion_function} with respect to $\mathcal{F}_{k_N}$, one can minimize
\[
  \widehat{Q}_N(F_{k_N}) = \widehat{Q}_N\bigl(H_k(G_\xi(\cdot);\delta_{\xi}), H_k(G_\eps(\cdot);\delta_{\eps})\bigr)
\]
with respect to $(\delta_{\xi},\delta_{\eps}) \in \mathcal{D}_{k_N}^2$, where
\[
  H_k(v;\delta) = \frac{\int_0^v \bigl( 1+\sum_{\ell=1}^k \delta_\ell \rho_\ell(v) \bigr)^2 \, dv}{1+\sum_{\ell=1}^k \delta_\ell^2} .
\]

To alleviate computational concerns associated with solving the above finite-dimensional optimization problem with the parameterization in \eqref{eq:legendre_series_pi}, \cite{bierens2008semi} suggests a reparametrization under which the c.d.f.~$H_k(\cdot;\delta)$ on $[0,1]$ can be expressed as (see Sections~3 and 7 in \cite{bierens2008semi}):
\begin{align*}
  H_k(v;\theta) = \frac{(1-\pi_k^\top\theta,\theta^\top) \Pi_{k+1}(v) (1-\pi_k^\top\theta,\theta^\top)^\top}{(1-\pi_k^\top\theta,\theta^\top) \Pi_{k+1}(1) (1-\pi_k^\top\theta,\theta^\top)^\top} , \quad v \in [0,1] ,
\end{align*}
where $\Pi_{k+1}(v)$ is a $(k+1)$-square matrix and $\pi_k$ a $k$-dimensional vector defined as
\begin{align*}
  \Pi_{k+1}(v) = \left( \frac{v^{i+j+1}}{i+j+1} \right)_{i,j = 0,1,\ldots,k} \qquad \text{and} \qquad
  \pi_k = \left( \frac{1}{i+1} \right)_{i = 1,\ldots,k} .
\end{align*}
The compactifying restrictions on $\delta \in \mathcal{D}_k$ in the definition of $\mathcal{D}_k$ translates to the following constraints on the space $\Theta_k$ for $\theta$:
\[
  \Theta_k = \left\{\theta \in \mathbb{R}^k : \left\lvert \sum_{m=0}^{k-\ell} \theta_{\ell+m} \mu_\ell(m) \right\rvert \le \frac{c}{1+\sqrt{\ell}\ln\ell} , \, \ell=1,\ldots,k \right\} ,
\]
where $\mu_\ell(m) := \int_0^1 u^{\ell+m}\rho_\ell(u) \, du$.

We use the finite-dimensional sieves $\{\Theta_k^2\}_k$ for $(\theta_\xi,\theta_\eps)$ to estimate the distribution functions in the Monte Carlo study in Section~\ref{subsection:montecarlo}. For further details regarding the sieve space, we refer the readers to the original article by \cite{bierens2008semi}. \par

\subsection{Additional Simulation Results}

In this section, we report additional results from the simulation exercise described in Section~\ref{subsection:montecarlo}. Figures~\ref{fig:montecarlo1} and \ref{fig:montecarlo2} display pointwise box plots that summarizes the simulated coverage of the estimator for $F_\xi$ and $F_\eps$. Sample sizes $N = 1000$, $2000$, and $4000$ with $k=4$, $5$, and $6$, respectively, are considered with $\kappa = 1$, $3.14$, and $5$ for each sample size.\footnote{Panel 1 of Figures~\ref{fig:montecarlo1} and \ref{fig:montecarlo2} are reproduced in Figure~\ref{fig:montecarlo_maintext}.} For each $\kappa$, the result shows that the estimator improves with larger $N$. The pointwise variance, as indicated by more values outside the interquartile range, appears to be higher when $\kappa$ is set to a larger value.

\begin{figure}[h!]
  \centering
  \begin{subfigure}[!t]{0.33\textwidth}
    \centering
    \includegraphics[width=\textwidth]{figures/cdfxi_N=1000_k=4_kappa=100_R=500.pdf}
    \caption*{\emph{Panel 1(a)}}
  \end{subfigure}%
  \hfill%
  \begin{subfigure}[!t]{0.33\textwidth}
    \centering
    \includegraphics[width=\textwidth]{figures/cdfxi_N=2000_k=5_kappa=100_R=500.pdf}
    \caption*{\emph{Panel 1(b)}}
  \end{subfigure}%
  \hfill%
  \begin{subfigure}[!t]{0.33\textwidth}
    \centering
    \includegraphics[width=\textwidth]{figures/cdfxi_N=4000_k=6_kappa=100_R=500.pdf}
    \caption*{\emph{Panel 1(c)}}
  \end{subfigure}
  \begin{subfigure}[!t]{0.33\textwidth}
    \centering
    \includegraphics[width=\textwidth]{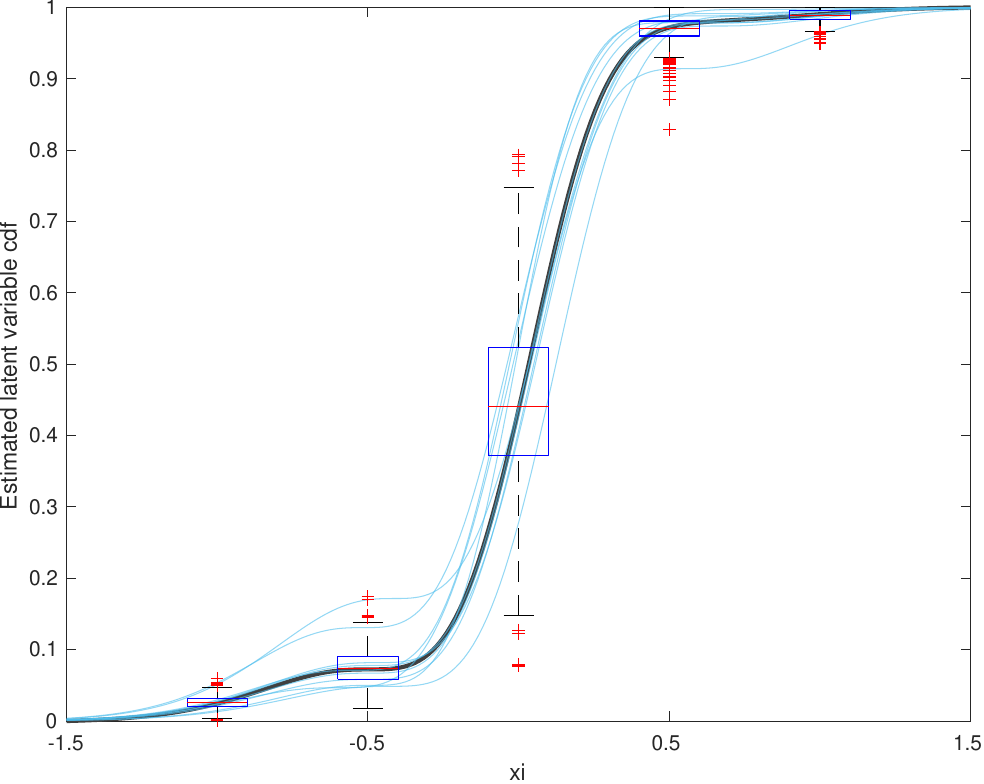}
    \caption*{\emph{Panel 2(a)}}
  \end{subfigure}%
  \hfill%
  \begin{subfigure}[!t]{0.33\textwidth}
    \centering
    \includegraphics[width=\textwidth]{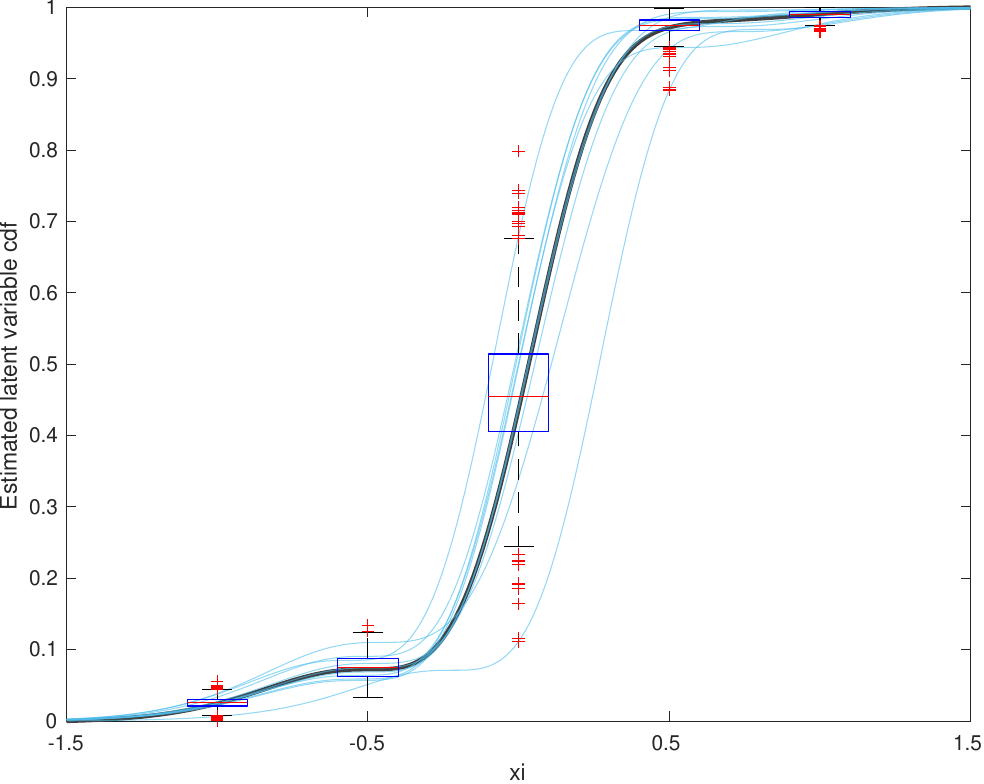}
    \caption*{\emph{Panel 2(b)}}
  \end{subfigure}%
  \hfill%
  \begin{subfigure}[!t]{0.33\textwidth}
    \centering
    \includegraphics[width=\textwidth]{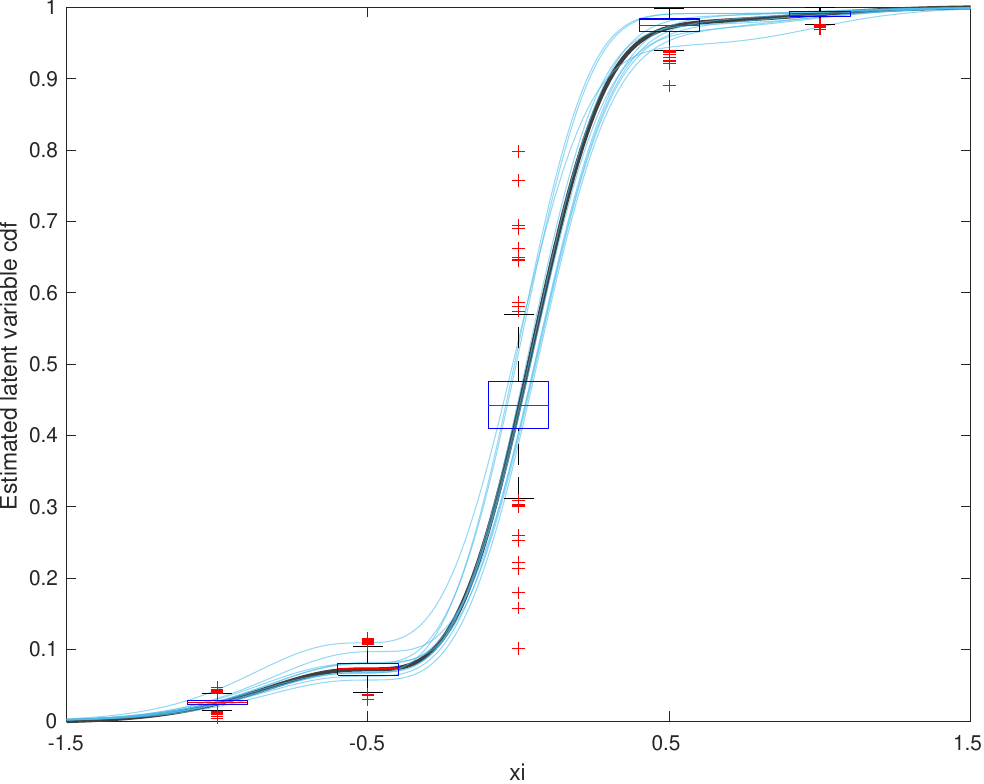}
    \caption*{\emph{Panel 2(c)}}
  \end{subfigure}
    \begin{subfigure}[!t]{0.33\textwidth}
    \centering
    \includegraphics[width=\textwidth]{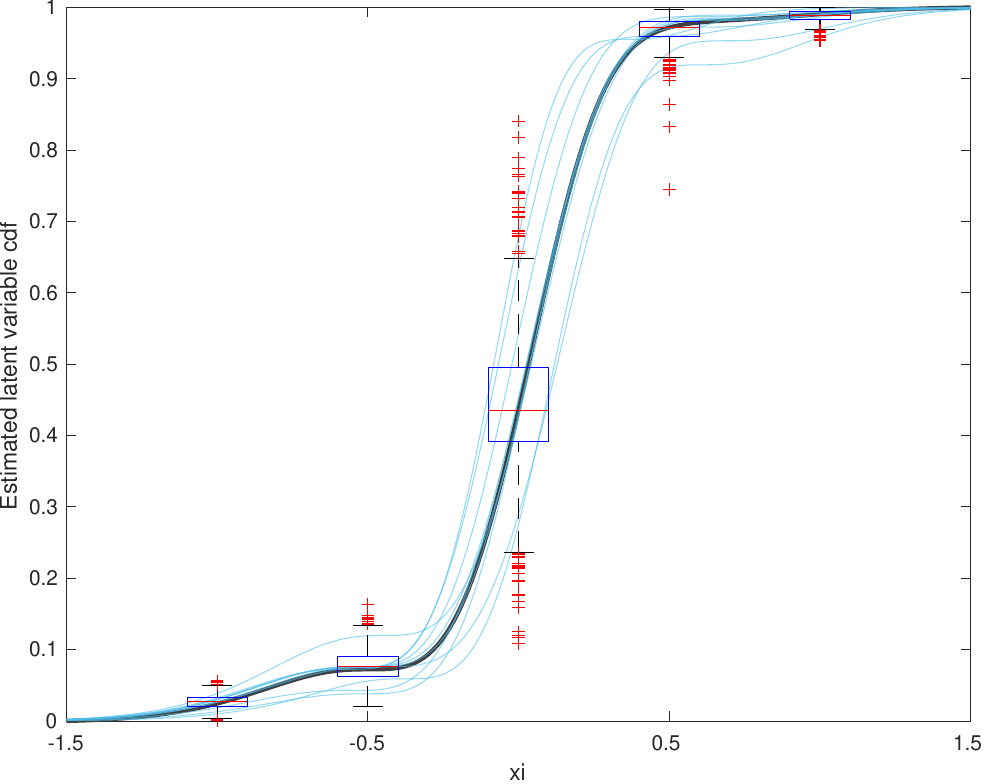}
    \caption*{\emph{Panel 3(a)}}
  \end{subfigure}%
  \hfill%
  \begin{subfigure}[!t]{0.33\textwidth}
    \centering
    \includegraphics[width=\textwidth]{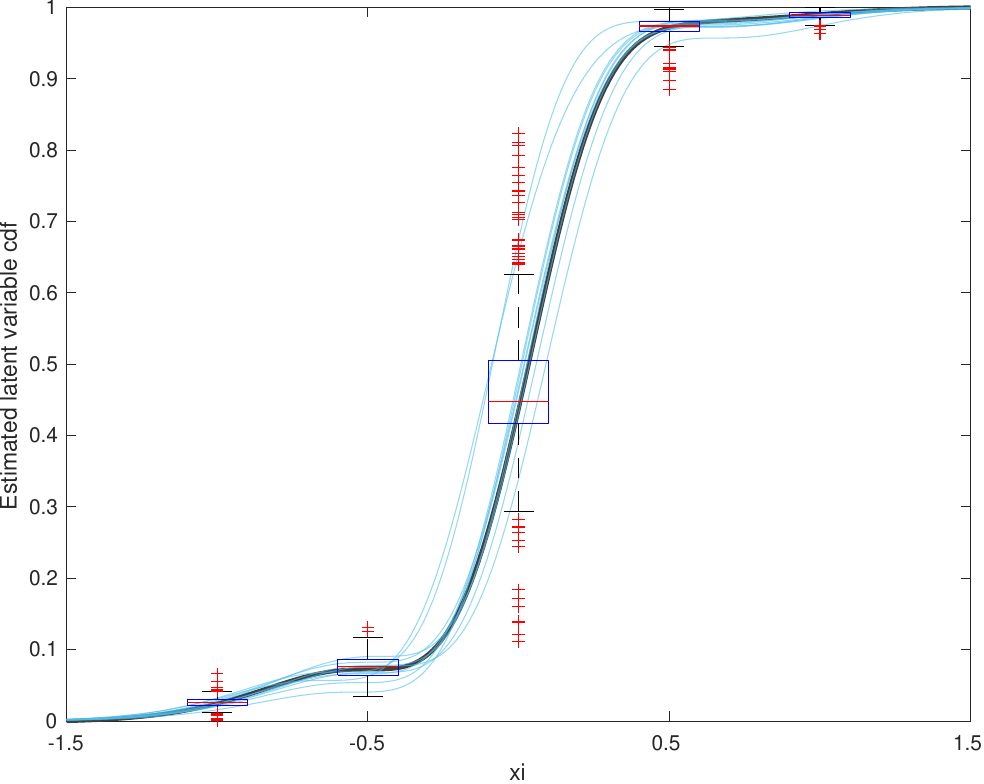}
    \caption*{\emph{Panel 3(b)}}
  \end{subfigure}%
  \hfill%
  \begin{subfigure}[!t]{0.33\textwidth}
    \centering
    \includegraphics[width=\textwidth]{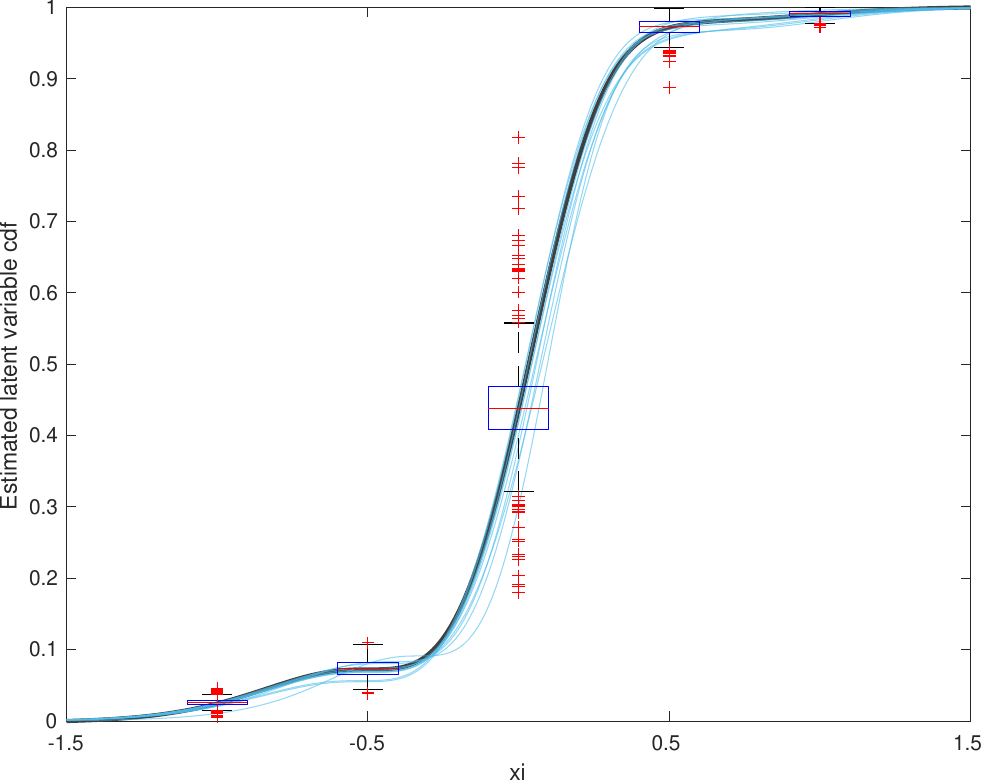}
    \caption*{\emph{Panel 3(c)}}
  \end{subfigure}
  \caption{Monte Carlo simulation results for $F_\xi$. Panels 1, 2, and 3 correspond to the tuning parameter $\kappa=1$, $3.14$, and $5$, respectively; Subpanels \emph{(a)}, \emph{(b)}, and \emph{(c)} correspond to sample sizes $N=1000$, $2000$, and $4000$, respectively.} \label{fig:montecarlo1}
\end{figure}

\begin{figure}[h!]
  \centering
  \begin{subfigure}[!t]{0.33\textwidth}
    \centering
    \includegraphics[width=\textwidth]{figures/cdfepsilon_N=1000_k=4_kappa=100_R=500.pdf}
    \caption*{\emph{Panel 1(a)}}
  \end{subfigure}%
  \hfill%
  \begin{subfigure}[!t]{0.33\textwidth}
    \centering
    \includegraphics[width=\textwidth]{figures/cdfepsilon_N=2000_k=5_kappa=100_R=500.pdf}
    \caption*{\emph{Panel 1(b)}}
  \end{subfigure}%
  \hfill%
  \begin{subfigure}[!t]{0.33\textwidth}
    \centering
    \includegraphics[width=\textwidth]{figures/cdfepsilon_N=4000_k=6_kappa=100_R=500.pdf}
    \caption*{\emph{Panel 1(c)}}
  \end{subfigure}
  \begin{subfigure}[!t]{0.33\textwidth}
    \centering
    \includegraphics[width=\textwidth]{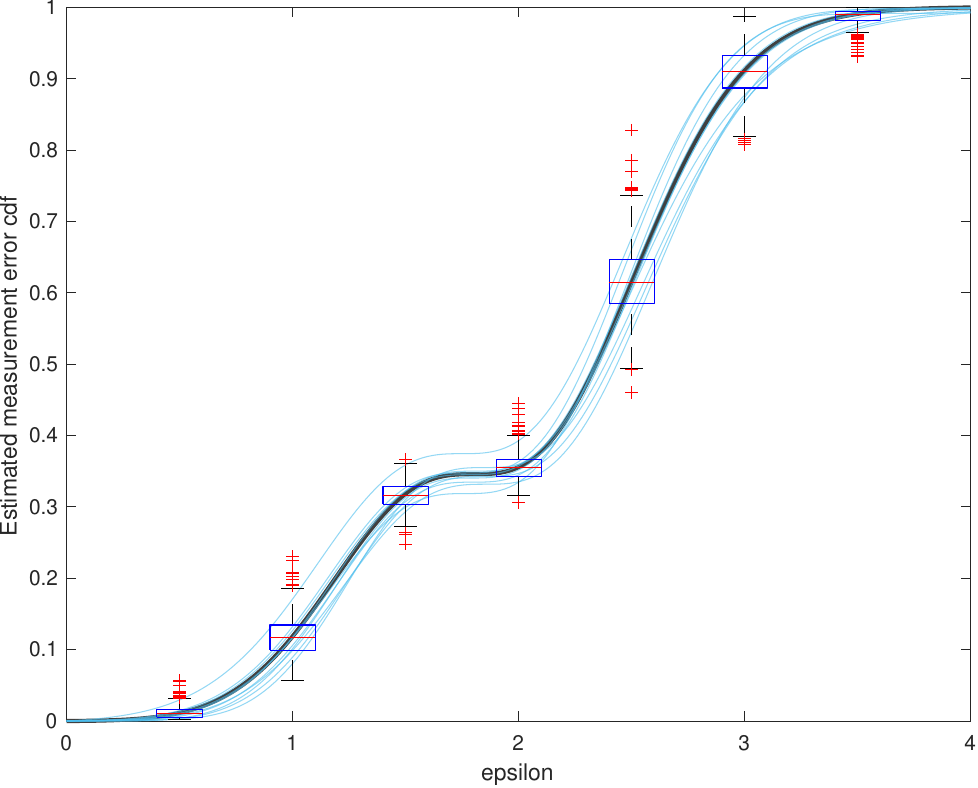}
    \caption*{\emph{Panel 2(a)}}
  \end{subfigure}%
  \hfill%
  \begin{subfigure}[!t]{0.33\textwidth}
    \centering
    \includegraphics[width=\textwidth]{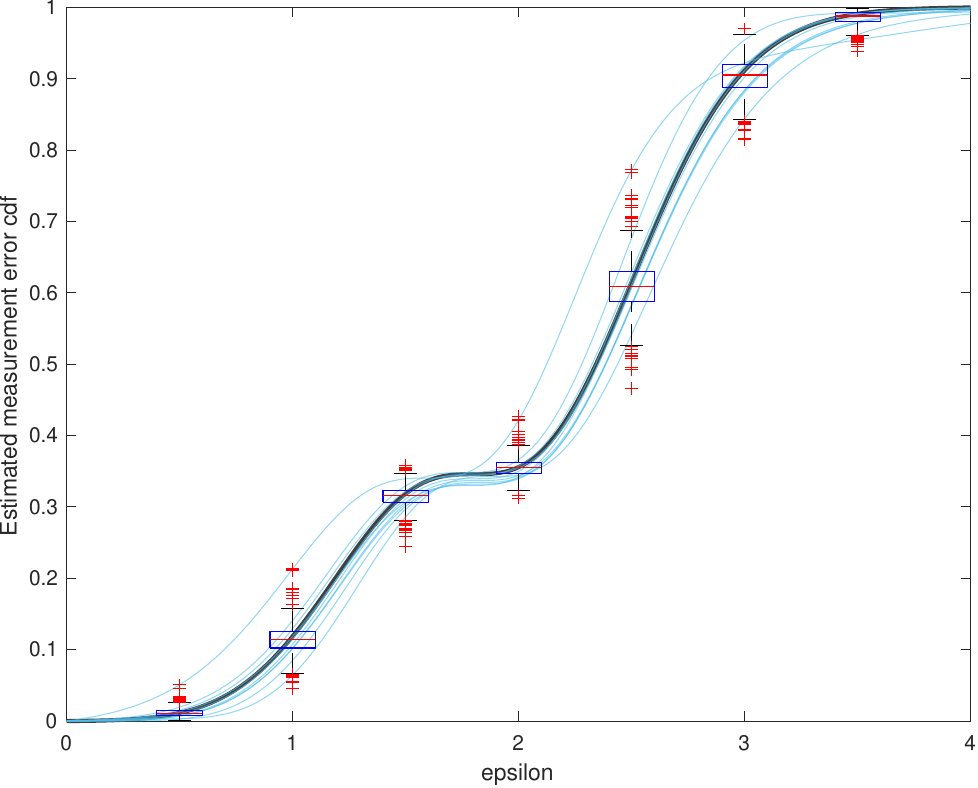}
    \caption*{\emph{Panel 2(b)}}
  \end{subfigure}%
  \hfill%
  \begin{subfigure}[!t]{0.33\textwidth}
    \centering
    \includegraphics[width=\textwidth]{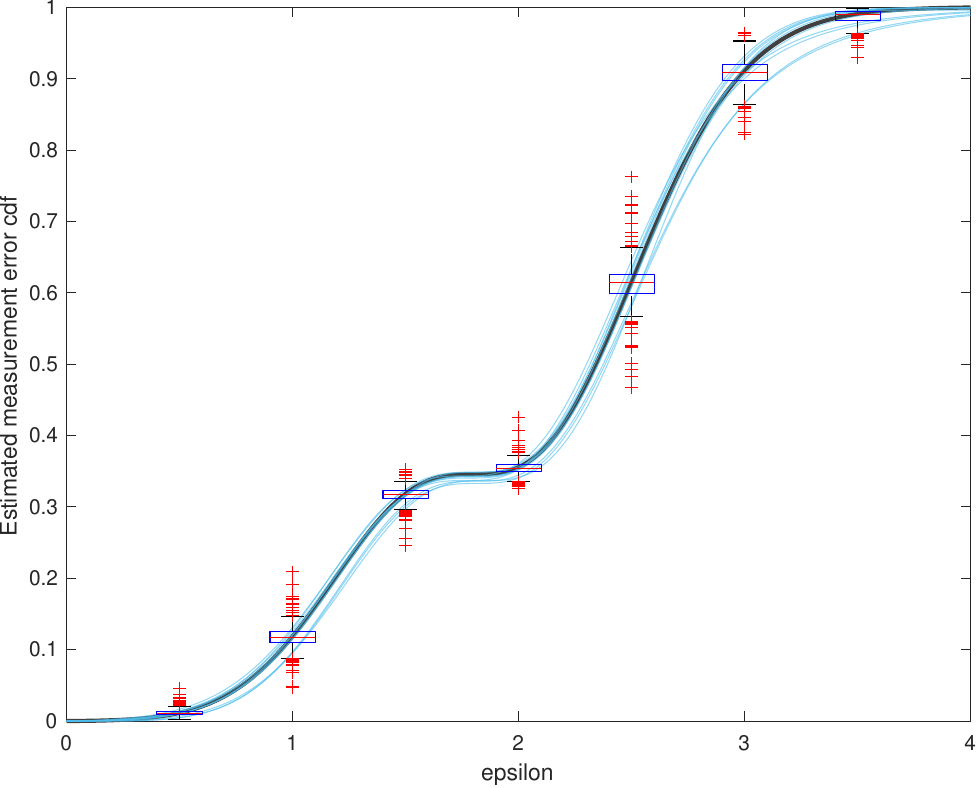}
    \caption*{\emph{Panel 2(c)}}
  \end{subfigure}
    \begin{subfigure}[!t]{0.33\textwidth}
    \centering
    \includegraphics[width=\textwidth]{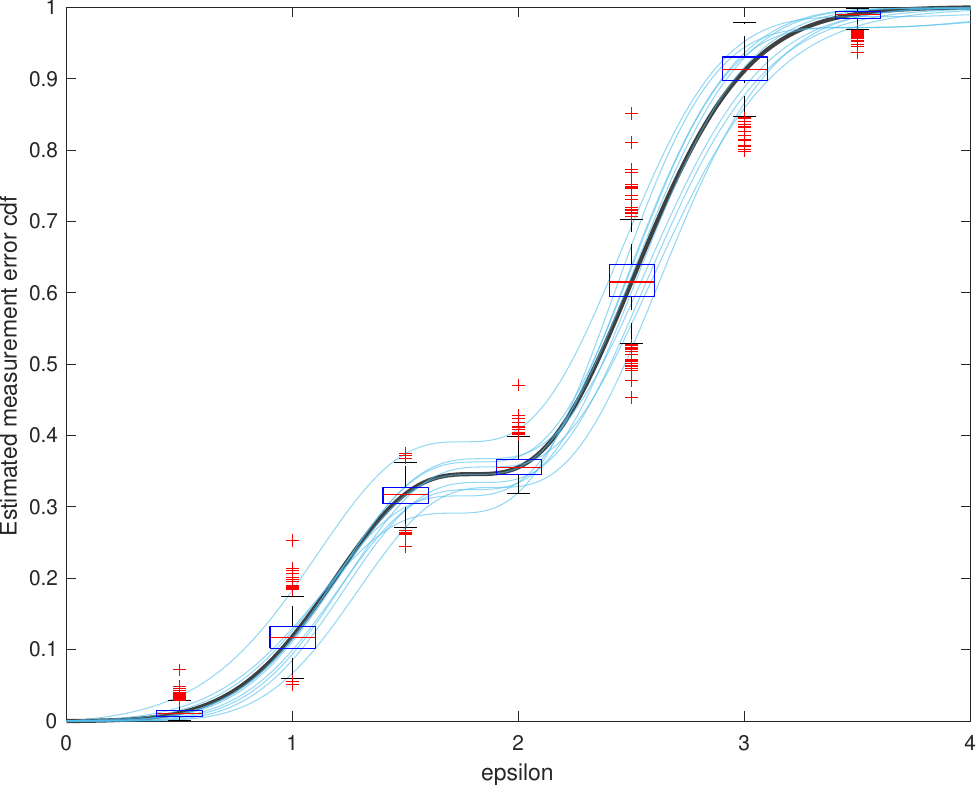}
    \caption*{\emph{Panel 3(a)}}
  \end{subfigure}%
  \hfill%
  \begin{subfigure}[!t]{0.33\textwidth}
    \centering
    \includegraphics[width=\textwidth]{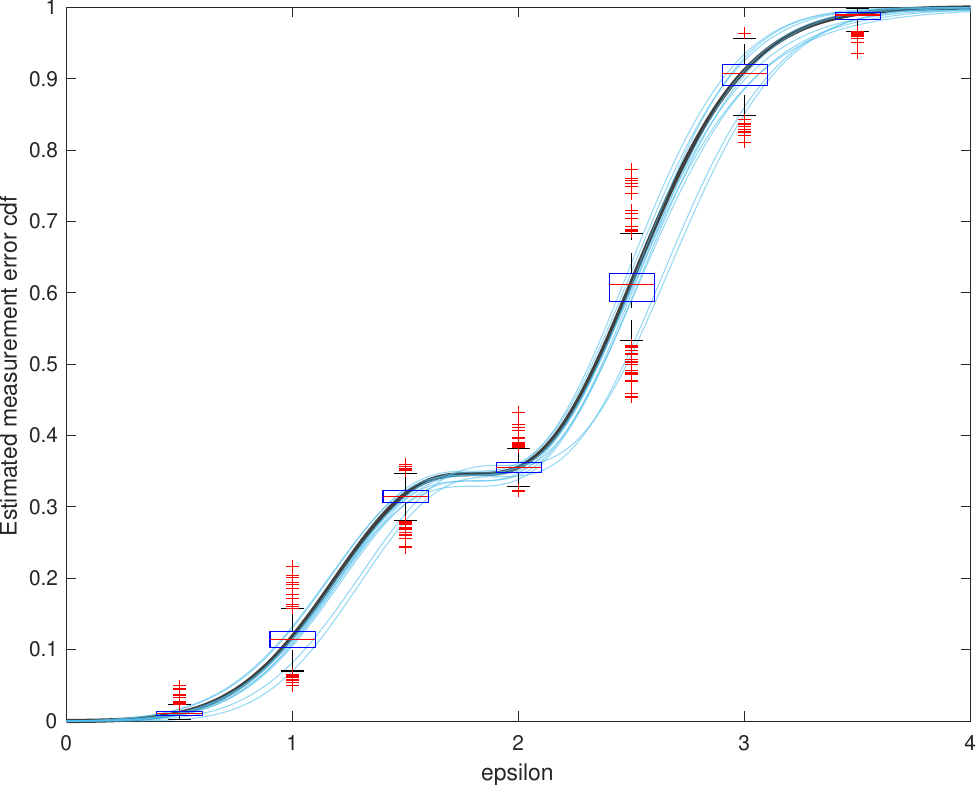}
    \caption*{\emph{Panel 3(b)}}
  \end{subfigure}%
  \hfill%
  \begin{subfigure}[!t]{0.33\textwidth}
    \centering
    \includegraphics[width=\textwidth]{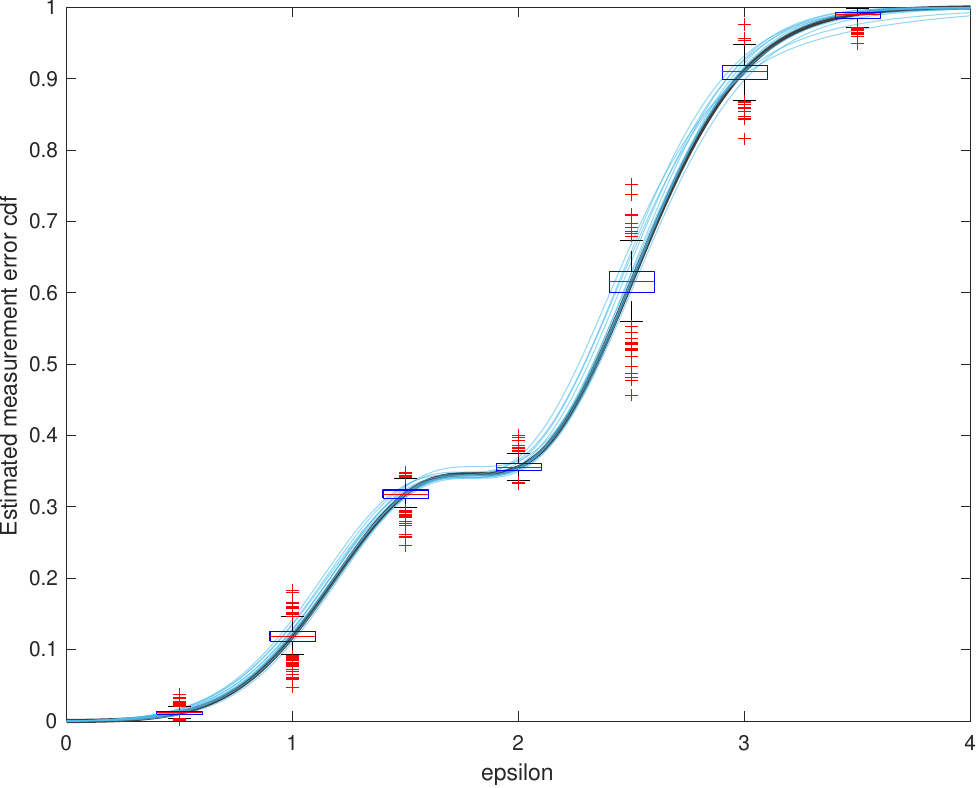}
    \caption*{\emph{Panel 3(c)}}
  \end{subfigure}
  \caption{Monte Carlo simulation results for $F_\varepsilon$. Panels 1, 2, and 3 correspond to the tuning parameter $\kappa=1$, $3.14$, and $5$, respectively; Subpanels \emph{(a)}, \emph{(b)}, and \emph{(c)} correspond to sample sizes $N=1000$, $2000$, and $4000$, respectively.} \label{fig:montecarlo2}
\end{figure}

\end{appendices}

\end{document}